\def\ps@headings{%
\def\@oddhead{\mbox{}\scriptsize\rightmark \hfil \thepage}%
\def\@evenhead{\scriptsize\thepage \hfil \leftmark\mbox{}}%
\def\@oddfoot{}%
\def\@evenfoot{}}
\theoremstyle{plain} \newtheorem{theorem}{Theorem}
\theoremstyle{plain} \newtheorem{definition}{Definition}
\theoremstyle{plain} \newtheorem{lemma}{Lemma}
\theoremstyle{plain} \newtheorem{remark}{Remark}
\theoremstyle{plain}
\newcommand{\smallfont}{\fontsize{10pt}{10pt}\selectfont}
\newcommand{\mc}[1]{\mathcal{#1}}
\newcommand{\mb}[1]{\mathbb{#1}}
\newcommand{\V}[1]{{\bm{\mathbf{\MakeLowercase{#1}}}}} % vector
\newcommand{\M}[1]{{\bm{\mathbf{\MakeUppercase{#1}}}}} % matrix
\begin{document}

\title{Adaptive Sampling of RF Fingerprints for Fine-grained Indoor Localization}

\author{Xiao-Yang Liu, Shuchin Aeron, {\em Senior Member, IEEE}, Vaneet Aggarwal, {\em Senior Member, IEEE}, \\
Xiaodong Wang, {\em Fellow, IEEE}, and Min-You Wu, {\em Senior Member, IEEE}
\thanks{ X.~Liu and X.~Wang are with the Department of Electrical Engineering, Columbia University.
  S.~Aeron is with the Department of Electrical and Computer Engineering, Tufts University.
 V.~Aggarwal is with the School of Industrial Engineering, Purdue University.
 X.~Liu and M.~Wu are with the Department of Computer Science and Engineering, Shanghai Jiao Tong University.}
}

\maketitle

\begin{abstract}

  Indoor localization is a supporting technology for a broadening range of pervasive wireless applications. One promising approach is to locate users with radio frequency fingerprints. However, its wide adoption in real-world systems is challenged by the time- and manpower-consuming site survey process, which builds a fingerprint database {\em a priori} for localization. To address this problem, we visualize the 3-D RF fingerprint data as a function of locations (x-y) and indices of access points (fingerprint), as a \emph{tensor} and use tensor algebraic methods for an \emph{adaptive} tubal-sampling of this fingerprint space. In particular using a recently proposed tensor algebraic framework in \cite{Kilmer2013SIAM} we capture the complexity of the fingerprint space as a low-dimensional tensor-column space. In this formulation the proposed scheme exploits adaptivity to identify reference points which are highly informative for learning this low-dimensional space. Further, under certain incoherency conditions we prove that the proposed scheme achieves bounded recovery error and near-optimal sampling complexity. In contrast to several existing work that rely on random sampling, this paper shows that adaptivity in sampling can lead to significant improvements in localization accuracy. The approach is validated on both data generated by the ray-tracing
indoor model which accounts for the floor plan and the impact of walls and the real world data. Simulation results show that, while maintaining the same localization accuracy of existing approaches, the amount of samples can be cut down by $71\%$ for the high SNR case and $55\% $ for the low SNR case.
\end{abstract}

\begin{IEEEkeywords}Fine-grained indoor localization, RF fingerprint, adaptive sampling, tubal-sampling, tensor completion.
\end{IEEEkeywords}

\section{Introduction}

  The availability of real-time high-accuracy location-awareness in indoor environments is a key enabler for a wide range of pervasive wireless applications, {such as pervasive healthcare \cite{Kang2010TMC} and smart-space interactive environments \cite{Roy2012TMC,Dina2014NSDI}}. More recently, location-based services have been used in airports, shopping malls, supermarkets, stadiums, office buildings, and homes \cite{LBS,Chen2014MobiCom}. The economic impact of the indoor localization market is forecasted by ABI Research to reach $\$4$ billion in 2018 \cite{Market}.

  %Existing schemes for indoor localization can be classified into three approaches:

  Indoor localization systems generally follow three approaches, viz., cell-based approach, model-based approach, and fingerprint-based approach. For cell-based approach \cite{Voronois2003,Barabasi2005}, a user's location is given by the access point to which it is connected and the localization error depends on the communication range and the distances between the access points. {This approach, however, can only provide coarse grained localization.} On the other hand model-based schemes \cite{Poor2005SPG,Win2011} exploit angle of arrival (AoA), time or time difference of arrival (ToA, TDoA), or received signal strength (RSS) from access points. However, such model-based approach is essentially limited by the following factors: 1) low transmission power, 2) high attenuation caused by walls and furnitures, 3) complicated surface reflections, and 4) unpredictable dynamics such as disturbance by human movements.

%   \begin{figure}[t]
%   \subfigure%[Execution flow]
%   {\includegraphics[width=0.35\textwidth]{pic/framework_operating.eps}%\label{fig:excution_flow}}
%    }
%    \subfigure%[Framework]
%    {\includegraphics[width=0.55\textwidth]{pic/overview.eps}%\label{fig:overview}
%    }
%   \caption{The execution flow and general framework of fingerprint-based localization.}
%   \label{fig:overview}
%   \vspace{-4pt}
%   \end{figure}
%   Exemplar RF fingerprint-based systems are RADAR \cite{RADAR2000}, Google Indoor Map \cite{GoogleMap}, WiFiSLAM \cite{WiFiSLAM} and etc.

We will focus on the fingerprint-based approach, which can be classified into three categories, namely, RF (radio frequency) fingerprint-based, non-RF fingerprint-based, and cross-technology-based. For example, RADAR \cite{RADAR2000} is considered as the first RF fingerprint-based system; Google Indoor Map \cite{GoogleMap} and WiFiSLAM \cite{WiFiSLAM} are two widely used industrial apps, while the former uses RF fingerprints, the latter incorporates user trajectories, inertial information and accelerometer; IndoorAtlas \cite{IndoorAtlas} utilizes the magnetic field map; SurroundSense \cite{Azizyan2009MobiCom} exploits ambient attributes (sound, light, etc) including RF fingerprints. The advantages of RF fingerprint-based approach include: 1) It is a passive approach that exploits WiFi access points already in most buildings thus needs no extra infrastructure deployment; 2) The RSS values are provided by off-the-shelf WiFi- or Zigbee-compatible devices; 3) The flourishing smartphone market indicates its upcoming wide use; 4) It assumes no radio propagation model thus is more practical than the model-based approach.

{In this paper, we consider the RF fingerprint-based approach. It is a two-phase approach, a training phase (site survey) and an operating phase (location query). The fingerprints obtained are averaged over time to counteract the effects of fading, and thus usually change in a building with change of walls and/or
furniture. Assuming that the fingerprint data remains statistically stable, i.e., the mean RSS signal strengths from the access points don't change rapidly and that the variance in the change is small and can be captured as a small additive noise}. {In the training phase}, an engineer uses a smartphone to record RF fingerprints within a region of interest. Thereafter, a fingerprint database is built up at the server, in which each fingerprint is associated with the corresponding reference point. In the operating phase, a user submits a location query with her current  fingerprint, then the server responds by matching the query fingerprint with candidate reference points in the database.

{However, wide adoption of RF fingerprint-based approach is} challenged by the time- and manpower-consuming site survey process as the engineer needs to sample a large number of reference points when
the designer expects a significant changes of fingerprints. For example, a region of $100$m $\times$ $100$m covers $10^4$ reference points with grid size $1$m $\times$ $1$m. If a reference point takes about $10$ seconds (including moving to this point and measuring a stable fingerprint), then the site survey process takes about $27.8$ hours. Setting the grid size to a finer granularity will exacerbate this problem. In the end of 2014, Google Indoor Map 6.0 provides indoor localization and navigation only at some of the largest retailers, airports and transit stations in the U.S. and Japan \cite{GoogleMap}, while its expansion is constrained by limited amount of fingerprint data of building interiors.

  Recently, there are many works aiming to relieve the site survey burden, which we broadly classify into correlation-aware approach, crowdsourcing-based approach, and sparsity-aware approach. The correlation-aware approach leverages the fact that the fingerprints at nearby reference points are spatially correlated. For example, \cite{Li2005IEE} utilized the krigging interpolation method while \cite{Kuo2011TMC} adopted the kernel functions (continuous and differentiable discriminant functions) and proposed an approach based on discriminant minimization search. Such schemes try to use linear or non-linear functions to model the correlations, which are essentially model-based approaches and face similar limiting factors as discussed before.

  Crowdsourcing-based approach removes the offline site survey and instead {incorporates users' online cooperation. For example, LiFS \cite{Yang2012MobiCom} leverages user reports to establish a one-to-one mapping between RF fingerprints and the digital floor plan map. However, the digital map is not always available. To overcome this Jigsaw \cite{Jigsaw2014MobiCom} exploited image processing techniques to reconstruct floor plans based on the reported pictures and Zee \cite{Rai2012MobiCom} leveraged the inertial sensors (e.g., accelerometer, compass, gyroscope) to track users as they traverse an indoor environment while simultaneously performing WiFi scans.} Since the crowdsourcing-based approach is fundamentally the application of machine learning techniques to large-scale data sets and will be effective only when users' reports are sufficient to cover the whole space of human activity, it requires a large number of users to cooperate with the server. Furthermore, it imposes high energy consumption on smartphones.

%A promising approach is to apply the emerging compressive sensing tools that exploit the sparse spatial correlation. \cite{Zhang2013WCNC} assumed that the RSS matrix for each access point was low-rank and \cite{Hu2013} further assumed that the low-rank matrix was smooth; but both failed to exploit the correlations across the access points. \cite{Milioris2014} more generally asserted that the fingerprint data is low-dimensional, however, uniform sampling is adopted. Different from the above mentioned schemes, we propose to employ adaptive sampling and tensor completion for the sake of cutting down the site survey burden. {In \cite{Nikitaki1, Nikitaki2,Milioris}  matrix completion under a non-adaptive sampling is used to complete the RSS map sampled across locations Vs access points. In contrast to their approach in this paper we look at adaptive sampling of an \emph{entire fingerprint} at a location index $(x,y)$ and simultaneously capture correlations across space and access points. In \cite{Tsakatakis} the authors assume that the signals are fast varying and model the signal variation as a first order dynamical system. As samples are collected this variation is taken into account in the matrix completion method and this is what is referred to as dynamic matrix completion.}

Recent methods for efficient RF fingerprinting \cite{Zhang2013WCNC,Hu2013,Milioris2014,Nikitaki1, Nikitaki2,Milioris} assume that the matrix of RSS values across channels/access points and locations is low rank and nuclear norm minimization \cite{Candes2010} under \emph{random or deterministic non-adaptive element-wise sampling} constraints is used for data completion. In \cite{Tsakatakis} the authors assume that the signals are fast varying, model the signal variation as a first order dynamical system, and give approaches for  dynamic matrix completion. Below we summarize the main contributions of our work and contrast them with the approaches taken so far.
%In contrast, we focus on adaptive sampling of the data cube under the assumption that the signals are quasi-stationary over the time window of collection. This is merely a difference in the modeling assumption. We can indeed incorporate a dynamic signal setting into our framework but the resulting algorithm and performance evaluation is outside the scope of the present work and will be undertaken in future.

\subsection{Summary of Contributions}
{In this paper our main technical contributions are as follows.}
{
\begin{itemize}
\setlength \itemsep{-3pt}
\item We model the RF fingerprint data as a low rank \emph{tensor} using the notion of tensor rank proposed in \cite{Kilmer2013SIAM,Shuchin}. This is in contrast to existing works \cite{Zhang2013WCNC,Hu2013,Milioris2014,Nikitaki1, Nikitaki2,Milioris} that assume a low rank matrix model. Furthermore, our algebraic framework is different from the traditional \emph{multilinear algebraic framework} for tensor decompositions \cite{Tensor2009SIAM} that has been considered so far in the literature for problems of completing multidimensional arrays \cite{Sujay,Oh,Yamada} with different notions for tensor rank.
\item We propose an \emph{adaptive} sampling strategy that leads to a dramatic improvement in localization accuracy for the same sample complexity. Our approach  adaptively samples a small subset of all reference points based on which, the fingerprint data is then reconstructed using tensor completion. In this context, a major difference from existing low-rank tensor completion \cite{Tensor2009SIAM,Tensor2011,Shuchin,Shuchin2015PAMI,Singh2013NIPS} lies in that {the sampling strategy} in our paper is a vector-wise sampling, while existing low-rank tensor completion deals with entry-wise sampling. This is because when measurements is done at a location, there
is data for all of the access points.
\item We also derive theoretical performance bounds and show that it is possible to recover data under weaker conditions than required for non-adaptive random sampling considered in \cite{Shuchin}. This is further evidenced by our numerical simulation on (i) a software model using concepts of ray-tracing and supervised learning with real data \cite{SRT,Ray_tracing_TWC2009}, which accounts for shadowing of walls, wave guiding effects in corridors due to multiple reflections, diffractions around vertical wedges, and (ii) real data.  We observe orders of magnitude improvements in the completion performance, see Fig. \ref{fig:recovery} and \ref{fig:real_data_recovery}, over existing methods.
\item Applying the completed data for localization also results in orders of magnitude performance improvement in localization accuracy (see Fig. \ref{fig:KNN}, \ref{fig:kernel} and \ref{fig:SVM}) over other competing methods. Therefore our approach of exploiting adaptivity is efficient in collecting measurements while maximizing the localization accuracy.
\end{itemize}}

The remainder of the paper is organized as follows. In Section II, we present the system model and the problem statement. Section III describes a random sampling approach as a baseline. Section IV provides details of our approach while the performance guarantees are given in Section V. Simulation results are presented in Section VI. Detailed proofs are given in the Appendix, and concluding remarks are made in Section VII.

\section{System Model and Problem Statement}
We begin by first outlining the notations, the algebraic model and preliminary results for third-order tensors taken from \cite{Kilmer2013SIAM,Shuchin,Shuchin2015PAMI}.

  A third-order tensor is represented by calligraphic letters, denoted as $\mathcal{T} \in \mathbb{R}^{N_1 \times N_2 \times N_3}$, and its $(i,j,k)$-th entry is $\mathcal{T}(i,j,k)$. A {\em tube} (or fiber) of a tensor is a 1-D section defined by fixing all indices but one, thus a tube is a vector. In this paper, we use tube $\mathcal{T}(i,j,:)$ to denote a fingerprint at reference point $(i,j)$. Similarly, a {\em slice} of a tensor is a 2-D section defined by fixing all but two indices. {\em frontal, lateral, horizontal slices} are denoted as $\mathcal{T}(:,:,k),\mathcal{T}(:,j,:),\mathcal{T}(i,:,:)$, respectively. We use lateral slice $\mathcal{T}(:,j,:)$ to denote an $N_1 \times 1 \times N_3$ fingerprint matrix for the $j$-th column of the grid map.

$\breve{\mathcal{T}}$ is a tensor obtained by taking the Fourier transform along the third mode of $\mathcal{T}$, i.e., $\breve{\mathcal{T}}(i,j,:) = \text{fft}(\mathcal{T}(i,j,:))$. In MATLAB notation, $\breve{\mathcal{T}} = \text{fft}(\mathcal{T},[~],3)$, and one can also compute $\mathcal{T}$ from $\breve{\mathcal{T}}$ via $\mathcal{T} = \text{ifft}(\breve{\mathcal{T}},[~],3)$.

  The transpose of tensor $\mathcal{T}$ is the $N_2 \times N_1 \times N_3$ tensor $\mathcal{T}^\top$ obtained by transposing each of the frontal slices and then reversing the order of transposed frontal slices $2$ through $N_3$, i.e., for $k=2,3,...,N_3$, $\mathcal{T}^\top(:,:,k)=(\mathcal{T}(:,:,N_3+2-k))^\top$ (the transpose of matrix $\mathcal{T}(:,:,N_3+2-k)$). For two tubes $a,b \in \mathbb{R}^{1 \times 1 \times N_3}$, $a \ast b$ denotes the \emph{circular convolution} between these two vectors. %One can define the t-product between two tensors, as follows.

  The block diagonal matrix $blkdiag(\breve{\mathcal{T}})$ is defined by placing the frontal slices in the diagonal, i.e., $\text{diag}(\breve{\mathcal{T}}(:,:,1), \breve{\mathcal{T}}(:,:,2),...,\breve{\mathcal{T}}(:,:,N_3))$.

{The algebraic development in \cite{Kilmer2013SIAM} rests on defining a tensor-tensor product between two 3-D tensors, referred to as the t-product as defined below.}

  \begin{definition}
  \textbf{t-product}. The t-product $\mathcal{C} = \mathcal{A} \ast \mathcal{B}$ of $\mathcal{A} \in \mathbb{R}^{N_1 \times N_2 \times N_3}$ and $\mathcal{B} \in \mathbb{R}^{N_2 \times N_4 \times N_3}$ is a tensor of size $N_1 \times N_4 \times N_3$ whose $(i,j)$-th tube $\mathcal{C}(i,j,:)$ is given by
  %\begin{equation}
  $\mathcal{C}(i,j,:) = \sum\limits_{k=1}^{N_2} \mathcal{A}(i,k,:) \ast \mathcal{B}(k,j,:)$, %= \sum\limits_{k=1}^{N_2} \mathcal{A}(i,k,:) \circledcirc \mathcal{B}(k,j,:)
  %\end{equation}
  for $i=1,2,...,N_1$ and $j=1,2,...,N_4$.
  \end{definition}
Owing to the relation between the circular convolution and Discrete Fourier Transform, we note the following remark that is used throughout the paper.
   \begin{remark}\label{computation}
   For $\mathcal{A}\in \mathbb{R}^{N_1 \times N_2 \times N_3}$ and $\mathcal{B}\in \mathbb{R}^{N_2 \times N_4 \times N_3}$, we have
%   \begin{equation}
%   \label{eq:tprod_fft}
   $\mathcal{A} \ast \mathcal{B} = \mathcal{C} \Longleftrightarrow \text{ blkdiag}(\breve{\mathcal{A}})  ~ \text{ blkdiag}(\breve{\mathcal{B}}) =   \text{ blkdiag}(\breve{\mathcal{C}})$.
  % \end{equation}
   \end{remark}

  A third-order tensor of size $N_1 \times N_2 \times N_3$ can be viewed as an $N_1 \times N_2$ matrix of tubes that are in the third-dimension. So the t-product of two tensors can be regarded as multiplication of two matrices, except that the multiplication of two numbers is replaced by the circular convolution of two tubes. {Further, this allows one to treat 3-D tensors as \emph{linear operators} over 2-D matrices as analyzed in \cite{Kilmer2013SIAM}. Using this perspective one can define a SVD type decomposition, referred to as the tensor-SVD or t-SVD \cite{}. To define the t-SVD we introduce a few definitions.}

  \begin{definition}
  \textbf{Identity tensor}.
  The identity tensor $\mathcal{I} \in \mathbb{R}^{N_1 \times N_1 \times N_3}$ is a tensor whose first frontal slice $\mathcal{I}(:,:,1)$ is the $N_1 \times N_1$ identity matrix and all other frontal slices are zero.
  \end{definition}

  %Based on this definition of identity tensor, one can define the notion of orthogonal tensors as follows.

  \begin{definition}
  \textbf{Orthogonal tensor}. A tensor $\mathcal{Q} \in \mathbb{R}^{N_1 \times N_1 \times N_3}$ is orthogonal if it satisfies
  $\mathcal{Q}^\top \ast \mathcal{Q} =  \mathcal{Q} \ast \mathcal{Q}^\top = \mathcal{I}$.
  \end{definition}

  \begin{definition}\label{inverse}
  The inverse of a tensor $\mathcal{U} \in \mathbb{R}^{N_1 \times N_1 \times N_3}$ is written as $\mathcal{U}^{-1} \in \mathbb{R}^{N_1 \times N_1 \times N_3}$ and satisfies
  $\mathcal{U}^{-1} \ast \mathcal{U} = \mathcal{U} \ast \mathcal{U}^{-1} = \mathcal{I}$.
  \end{definition}

    \begin{definition}
  \textbf{f-diagonal tensor}.
  A tensor is called f-diagonal if each frontal slice of the tensor is a diagonal matrix, i.e., $\mathcal{T}(i,j,k)=0$ for $i \neq j, \forall k$.
  \end{definition}

{Using this definition one can obtain the t-SVD defined in the following result from \cite{Kilmer2013SIAM}. Please see Figure \ref{fig:t_SVD} for a graphical representation}.

  \begin{theorem}\label{tsvd}
   \textbf{t-SVD}. A tensor $\mathcal{T} \in \mathbb{R}^{N_1 \times N_2 \times N_3}$, can be decomposed as
   $\mathcal{T} = \mathcal{U} \ast \Theta \ast \mathcal{V}^\top$,
   where $\mathcal{U}$ and $\mathcal{V}$ are orthogonal tensors of sizes $N_1 \times N_1 \times N_3$ and $N_2 \times N_2 \times N_3$ respectively, i.e. $\mathcal{U}^\top \ast \mathcal{U} = \mathcal{I}$ and $\mathcal{V}^\top \ast \mathcal{V} = \mathcal{I}$ and $\Theta$ is a rectangular f-diagonal tensor of size $N_1 \times N_2 \times N_3$.%, and $*$ denotes the $t$-product.
   \end{theorem}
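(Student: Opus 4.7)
The plan is to exploit Remark~\ref{computation}, which reduces the t-product to ordinary (block-diagonal) matrix multiplication after taking an FFT along the third mode. This suggests the following strategy: pass to the Fourier domain, compute a standard matrix SVD of each frontal slice of $\breve{\mathcal{T}}$, and then pull the resulting factors back through an inverse FFT.

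Concretely, I would first form $\breve{\mathcal{T}} = \text{fft}(\mathcal{T},[~],3)$ and, for each $k = 1, \ldots, N_3$, compute the matrix SVD $\breve{\mathcal{T}}(:,:,k) = \breve{U}_k \, \breve{\Sigma}_k \, \breve{V}_k^*$. Stacking the $\breve{U}_k$, $\breve{\Sigma}_k$, $\breve{V}_k$ as frontal slices yields tensors $\breve{\mathcal{U}}, \breve{\Theta}, \breve{\mathcal{V}}$ whose block-diagonal forms multiply, by construction, to $\text{blkdiag}(\breve{\mathcal{T}})$. A short calculation shows that the tensor transpose commutes with the FFT in the sense that the Fourier frontal slices of $\mathcal{V}^\top$ are exactly $\breve{V}_k^*$ (the slice-reversal in the definition of the transpose produces a complex conjugation under the DFT, and combined with the per-slice matrix transpose this yields the conjugate transpose). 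Defining $\mathcal{U}, \Theta, \mathcal{V}$ as the inverse FFTs of $\breve{\mathcal{U}}, \breve{\Theta}, \breve{\mathcal{V}}$, Remark~\ref{computation} then gives $\mathcal{U} \ast \Theta \ast \mathcal{V}^\top = \mathcal{T}$; orthogonality of each $\breve{U}_k, \breve{V}_k$ translates into $\mathcal{U}^\top \ast \mathcal{U} = \mathcal{I}$ and $\mathcal{V}^\top \ast \mathcal{V} = \mathcal{I}$; and the diagonality of each $\breve{\Sigma}_k$ is precisely the f-diagonal condition on $\Theta$.

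The one genuine subtlety, and the step I expect to be the main obstacle, is guaranteeing that the inverse FFT of $\breve{\mathcal{U}}, \breve{\Theta}, \breve{\mathcal{V}}$ produces \emph{real-valued} tensors, as required by the statement, even though the intermediate per-slice SVDs are generically complex. Because $\mathcal{T}$ is real, its Fourier frontal slices satisfy the conjugate-symmetry relation $\breve{\mathcal{T}}(:,:,k) = \overline{\breve{\mathcal{T}}(:,:,N_3+2-k)}$ for $k \geq 2$, with $\breve{\mathcal{T}}(:,:,1)$ (and also $\breve{\mathcal{T}}(:,:,N_3/2+1)$ when $N_3$ is even) being real. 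The remedy is to compute SVDs only for the indices $k = 1, \ldots, \lceil N_3/2 \rceil + 1$, choosing real SVDs in the real-valued slices, and then to \emph{define} $\breve{U}_k := \overline{\breve{U}_{N_3+2-k}}$, $\breve{\Sigma}_k := \overline{\breve{\Sigma}_{N_3+2-k}}$, and $\breve{V}_k := \overline{\breve{V}_{N_3+2-k}}$ for the remaining indices. Since conjugating a matrix SVD yields an SVD of the conjugate matrix, the factorization still holds slice-by-slice; meanwhile the forced conjugate symmetry makes the inverse FFTs real, closing the proof.
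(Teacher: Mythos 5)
Your construction is correct and is essentially the standard proof: the paper itself gives no proof of this theorem, importing it from \cite{Kilmer2013SIAM}, where the argument is exactly the one you describe --- block-diagonalize via the DFT along the third mode (Remark \ref{computation}), take a matrix SVD of each frontal slice of $\breve{\mathcal{T}}$, and invert the transform, with the conjugate-symmetry bookkeeping ensuring real-valued factors. Your handling of the transpose (slice reversal becoming conjugation in the Fourier domain) and of the redundant half of the spectrum is the right way to close the reality gap, so nothing is missing.
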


   \begin{definition}\label{tubal_rank}
   \textbf{Tensor tubal-rank}. The tensor tubal-rank of a third-order tensor is the number of non-zero fibers of $\Theta$ in the t-SVD.
   \end{definition}

   \begin{figure}
   \centering
   \includegraphics[width=0.48\textwidth]{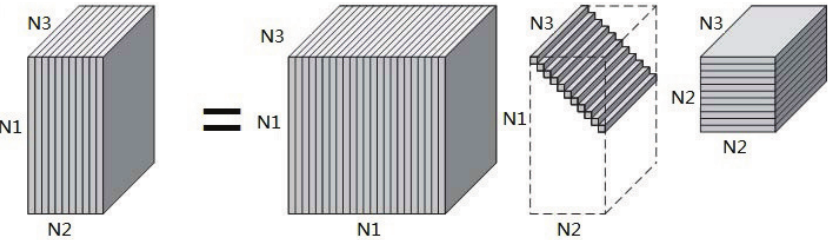}
   \vspace{-.1in}
   \caption{The t-SVD of an $N_1 \times N_2 \times N_3$ tensor.} \label{fig:t_SVD}
   \end{figure}
{In this framework, the principle of dimensionality reduction follows from the following result from \cite{Kilmer2013SIAM} \footnote{Note that $\Theta$ in t-SVD is organized in a decreasing order, i.e., $||\Theta(1,1,:)||_F \geq ||\Theta(2,2,:)||_F \geq ... $, which is implicitly defined in \cite{Shuchin} as the algorithm for computing t-SVD is based on matrix SVD. Therefore, the best rank-$r$ approximation of tensors is similar to PCA (principal component analysis).}.}
   \begin{lemma}\label{best_r_rank}
   \textbf{Best rank-$r$ approximation}.
   Let the t-SVD of $\mathcal{T} \in \mathbb{R}^{N_1 \times N_2 \times N_3}$ be given by $\mathcal{T} = \mathcal{U} \ast \Theta \ast \mathcal{V}^\top$ and for $r \leq \min(N_1,N_2)$ define $\mathcal{T}_r = \sum_{i=1}^{r} \mathcal{U}(:,i,:) \ast \Theta(i,i,:) \ast \mathcal{V}^\top(:,i,:)$, then
   $\mathcal{T}_r = \arg~\min\limits_{\overline{\mathcal{T}} \in \mathbb{T}}||\mathcal{T} - \overline{\mathcal{T}}||_F$,
   where $\mathbb{T} = \{ \Im = \mathcal{X} \ast \mathcal{Y} | \mathcal{X} \in \mathbb{R}^{N_1 \times r \times N_3}, \mathcal{Y} \in \mathbb{R}^{r \times N_2 \times N_3} \}$.
   \end{lemma}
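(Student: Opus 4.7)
The plan is to transport the problem into the Fourier domain along the third mode, where by Remark \ref{computation} the $t$-product decouples into independent matrix products on each frontal slice, and then invoke the classical Eckart--Young theorem on each slice.

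First I would apply Parseval's identity along the third mode to write
\[
\|\mathcal{T} - \Im\|_F^2 \;=\; \frac{1}{N_3}\|\breve{\mathcal{T}} - \breve{\Im}\|_F^2 \;=\; \frac{1}{N_3}\sum_{k=1}^{N_3}\|\breve{\mathcal{T}}(:,:,k) - \breve{\Im}(:,:,k)\|_F^2,
\]
so that the objective decomposes into $N_3$ independent matrix approximation problems. Next I would characterize the feasible set $\mathbb{T}$ in the Fourier domain: if $\Im = \mathcal{X} \ast \mathcal{Y}$ with $\mathcal{X} \in \mathbb{R}^{N_1 \times r \times N_3}$ and $\mathcal{Y} \in \mathbb{R}^{r \times N_2 \times N_3}$, then Remark \ref{computation} gives $\breve{\Im}(:,:,k) = \breve{\mathcal{X}}(:,:,k)\,\breve{\mathcal{Y}}(:,:,k)$, so every frontal slice of $\breve{\Im}$ has rank at most $r$.

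By the Eckart--Young theorem each of the $N_3$ decoupled subproblems is optimally solved by the truncated matrix SVD of $\breve{\mathcal{T}}(:,:,k)$ to rank $r$, which yields a lower bound on $\|\mathcal{T} - \Im\|_F^2$ uniform over $\Im \in \mathbb{T}$. It remains to show this lower bound is attained by $\mathcal{T}_r$. For that I would use the way the $t$-SVD is computed via Remark \ref{computation}: in the Fourier domain, $\mathcal{T} = \mathcal{U} \ast \Theta \ast \mathcal{V}^\top$ reduces to $\breve{\mathcal{T}}(:,:,k) = \breve{\mathcal{U}}(:,:,k)\,\breve{\Theta}(:,:,k)\,\breve{\mathcal{V}}(:,:,k)^\top$ with each $\breve{\Theta}(:,:,k)$ diagonal, i.e.\ a matrix SVD of each Fourier slice. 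Truncating every $\breve{\Theta}(:,:,k)$ to its first $r$ diagonal entries therefore produces exactly the slice-wise Eckart--Young optima, and pulling back via the inverse DFT yields precisely $\mathcal{T}_r$ as given in the statement. Combined with the decoupled lower bound and Parseval, this gives $\|\mathcal{T} - \mathcal{T}_r\|_F \le \|\mathcal{T} - \Im\|_F$ for every $\Im \in \mathbb{T}$.

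The step I expect to require the most care is the characterization of $\mathbb{T}$ in the Fourier domain, since $\mathcal{X}, \mathcal{Y}$ must be real-valued while the Fourier-domain slices are complex and come in conjugate-symmetric pairs. The clean way to sidestep this is to use only the easy direction of the correspondence, namely that any $\Im \in \mathbb{T}$ has slice-wise rank at most $r$ in the Fourier domain, to obtain the lower bound; and then to verify by direct construction that $\mathcal{T}_r$ itself lies in $\mathbb{T}$, since $\mathcal{T}_r = \mathcal{U}(:,1{:}r,:) \ast \Theta(1{:}r,1{:}r,:) \ast \mathcal{V}(:,1{:}r,:)^\top$ is already an explicit real rank-$r$ $t$-product factorization. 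This avoids having to produce real factorizations of arbitrary conjugate-symmetric Fourier tensors, and closes the argument.
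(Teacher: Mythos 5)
Your proof is correct. The paper does not actually prove this lemma---it is quoted as a known result from \cite{Kilmer2013SIAM}---but your argument (Parseval along the third mode, the observation that every Fourier slice of any $\mathcal{X}\ast\mathcal{Y}$ with $\mathcal{X}\in\mathbb{R}^{N_1\times r\times N_3}$ has rank at most $r$, Eckart--Young applied slice-wise, and attainment by the truncated t-SVD) is precisely the standard proof of that cited result, and your decision to use only the easy direction of the Fourier-domain correspondence together with the explicit real factorization $\mathcal{T}_r=\mathcal{U}(:,1{:}r,:)\ast\Theta(1{:}r,1{:}r,:)\ast\mathcal{V}(:,1{:}r,:)^\top\in\mathbb{T}$ is exactly the right way to dispose of the realness/conjugate-symmetry subtlety.
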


{We now define the notion of tensor column space}. Under t-SVD in Definition \ref{tsvd}, a tensor column subspace of $\mathcal{T}$ is the space spanned by the lateral slices of $\mc{U}$ under the t-product, i.e., the set generated by $t$-linear combinations like so,
   $\text{t-span}(\mc{U}) = \{ \M{X} = \sum_{j = 1}^{r} \mc{U}(:,j,:) \ast \V{c}_j \in \mathbb{R}^{N_1 \times 1 \times N_3}\,\, , \V{c}_j \in \mb{R}^{N_3} \}$,
   where $r$ denotes the tensor tubal-rank. {We are now ready to present our main \emph{system model and assumptions}.}

\subsection{System Model}

  Suppose that the region of interest is a rectangle $\mathcal{R} \in \mathbb{R}^2$. Dividing $\mathcal{R}$ into an $N_1 \times N_2$ {\em grid map}, with each grid of the same size. The grid points are called {\em reference points}. Let $\mathcal{G}$ denote the grid map and $\mathcal{G}$ has $N_1 N_2$ reference points in total. Within $\mathcal{R}$, there are $N_3$ randomly deployed access points. We neither have access to or control over those access points, nor know their exact locations. The engineer uses a smartphone to measure the RSS values from these $N_3$ access points.
  We use a third-order tensor $\mathcal{T} \in \mathbb{R}^{N_1 \times N_2 \times N_3}$ to represent the RSS map of $\mathcal{G}$.
  Each reference point $(i,j) \in \mathcal{G}$ is associated with a received signal strength (RSS) vector $\mathcal{T}(i,j,:)$, called a {\em fingerprint}, where $\mathcal{T}(i,j,k)$ is the RSS value of the $k$-th access point. Note that the noise level is assumed to be equal to $-110$dBm, $\mathcal{T}(i,j,k)=-110$dBm \cite{Kuo2011TMC,Yang2012MobiCom} if the signal of the $k$-th access point cannot be detected. The fingerprint database stores the coordinates of all reference points and their corresponding RF fingerprints.

  We use a third-order tensor $\mathcal{T} \in \mathbb{R}^{N_1 \times N_2 \times N_3}$ to represent the RSS map of $\mathcal{G}$, and the t-SVD is $\mathcal{T} = \mathcal{U} \ast \Theta \ast \mathcal{V}^{T}$. %, where $\ast$ denotes the t-product, $\mathcal{U}$, $\Theta$, and $\mathcal{V}$ are of sizes $N_1 \times N_1 \times N_3$, $N_1 \times N_2 \times N_3$, and $N_2 \times N_2 \times N_3$ respectively. Note that $\Theta$ is f-diagonal. To have a clear understanding, we know that the RF fingerprint at reference point $(n_1, n_2)$ is $\mathcal{T}(n_1, n_2, :)$, which can be represented as:
%  \begin{equation}
%  \mathcal{T}(n_1,n_2,:) = \sum\limits_{j=1}^{N_2} \left(\sum\limits_{i=1}^{N_1} \mathcal{U}(n_1,i,:) \ast \Theta(i,j,:) \right) \ast \mathcal{V}^{T}(n_2,j,:)
%  \end{equation}
  As mentioned in the Introduction, the RSS value is highly correlated across space for each access point and also across these $N_3$ access points. {We model this correlation by assuming that $\mathcal{T}$ has low tensor tubal-rank $r$, in the sense defined before using the t-SVD}.

  {We now outline the problem statement in detail. Note that as pointed out before, our sampling strategy will be to adaptively sample RF fingerprints, which correspond to the tensor fibers along the third dimension. See Figure \ref{fig:tubal_samp}.}

   \begin{figure}[t]
   \centering
   \includegraphics[width=0.3\textwidth]{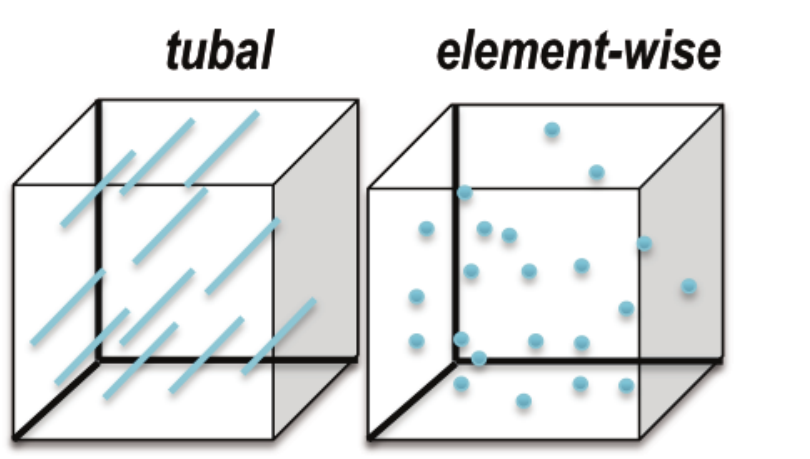}
   \caption{Figure contrasting tubal sampling with element-wise sampling for 3D tensors.} \label{fig:tubal_samp}
   \end{figure}

\subsection{Problem Statement}\label{problem_statement}

  Let $M < N_1 N_2$ denote the sampling budget, i.e., we are allowed to sample $M$ reference points. We model the site survey process as the following {\em partial observation model}:
  \begin{equation}\label{obs_model}
  \mathcal{Y} = \mathcal{P}_{\Omega}(\mathcal{T}) + \mathcal{P}_{\Omega}(\mathcal{N}),~~\Omega \subset \mathcal{G},
  \end{equation}
  where the $(i,j,k)$-th entry of $\mathcal{P}_{\Omega}(\mathcal{X})$ is equal to $\mathcal{X}(i,j,k)$ if $(i,j)\in \Omega$ and zero otherwise, $\Omega$ being a subset of the grid map $\mathcal{G}$ and of size $M$, and $\mathcal{N}$ is an $N_1 \times N_2 \times N_3$ tensor with i.i.d.  $N(0,\sigma^2)$ elements, representing the additive Gaussian noise.
   Since the engineer usually averages the recorded RSS values to get a stable fingerprint in the site survey process, while users want to get quick response from the server, the noise in the query process is much higher than that in the site survey samples $\mathcal{Y}$.

  To cut down the grid map survey burden, we measure the RSS values of a small subset of reference points and then estimate $\mathcal{T}$ from the samples $\mathcal{Y}$. There are two facts that can be exploited: the prior information that tensor $\mathcal{T}$ is low-tubal-rank, and the estimated $\hat{\mathcal{T}}$ should equal to $\mathcal{Y}$ on the set $\Omega$. Therefore, we estimate $\mathcal{T}$ by solving the following optimization problem:

  \begin{equation}\label{problem_formulation}
  \langle \hat{\mathcal{T}}, \hat{\Omega} \rangle = \arg \min_{\mathcal{X},\Omega} ~~|| \mathcal{P}_{\Omega}(\mathcal{Y}- \mathcal{X})||_F^2 + \lambda \cdot \text{rank}(\mathcal{X}),~~~~
  s.t.
  ~~|\Omega| \leq M,
  \end{equation}
  where $\mathcal{X}$ is the decision variable, $\text{rank}(\cdot)$ refers to the tensor tubal-rank, $M$ is the sampling budget, and $\lambda$ is a regularization parameter. This approach aims to seek the simplest explanation fitting the samples.

{Clearly the performance depends on the type of sampling strategy used and the algorithm for solving the optimization problem. In this paper we will consider two kinds of sampling, namely, uniform random tubal-sampling and adaptive tubal-sampling.}

\section{Random tubal-sampling Approach}
\label{sect:baseline}

Before introducing the adaptive sampling approach, we first present a non-adaptive sampling approach, i.e., tensor completion via uniform tubal-sampling. We would like to point out again that tensor completion via uniform entry-wise sampling is well-studied \cite{Tensor2009SIAM,Tensor2011,Shuchin}, however, \emph{tubal-sampling} is required in RF fingerprint-based indoor localization. By uniform tubal-sampling, we mean that in (\ref{problem_formulation}) the subset $\Omega$ is simply chosen uniformly randomly (with replacement) from the grid map $\mathcal{G}$ with $|\Omega| \leq M$. {Similar to the matrix case, given a fixed $\Omega$  solving the optimization problem (\ref{problem_formulation}) is NP-hard. In this case one can relax the tubal-rank measure to a tensor nuclear norm (TNN) measure as proposed in \cite{Shuchin} and solve for the resulting convex optimization problem. The tensor nuclear norm as derived from the t-SVD is defined as follows.}

   \begin{definition}\label{lemma:tnn} \cite{Tensor2009SIAM,Shuchin}
   The tensor-nuclear-norm (TNN) denoted by $\| \mathcal{T} \|_{TNN}$ and defined as the sum of the singular values of all the frontal slices of $\breve{\mathcal{T}}$, i.e., $\| \mathcal{T} \|_{TNN} = \sum_{k=1}^{N_3} \|\breve{\mathcal{T}}(:,:,k)\|_{*}$, where $\|\cdot \|_{*}$ denotes the nuclear norm.
   \end{definition}

Extending the noiseless case considered in \cite{Shuchin} using the tensor-nuclear-norm  for tensor completion, in this paper we solve  for the noisy case (\ref{problem_formulation}) via,
   \begin{equation}\label{tensor_completion}
   \begin{split}
    \hat{\mathcal{T}} = \arg \min_{\mathcal{X}} &~~|| \mathcal{P}_{\Omega}(\mathcal{Y} - \mathcal{X}) ||_F^2 + \lambda || \mathcal{X} ||_{TNN}.\\
   \end{split}
   \end{equation}

  This optimization problem can be solved using ADMM, with modifications to the algorithm in \cite{Shuchin}.  Note that in \cite{Shuchin} the authors consider a random element-wise sampling whereas here we are considering a random tubal-sampling. It turns out that under tubal sampling the tensor completion by solving the convex optimization of Equation (\ref{tensor_completion}) splits into $N_3$ matrix completion problems in the Fourier domain. This observation leads us to derive the performance bounds as an extension of the results in matrix completion using the following notion of incoherency conditions.

  \begin{definition} (Tensor Incoherency Condition)
   Given the t-SVD of a tensor $\mc{T} = \mc{U} *\mc{S} *\mc{V}^\top$ with tubal-rank $r$, $\mc{T}$ is said to satisfy the tensor-incoherency conditions, if there exists $\mu_0 > 0$ such that for $k = 1,2,...,N_3$.
\begin{equation}\label{tensor_column_incoherency}
\begin{aligned}
&\text{(Tensor-column incoherence: )}\\~~&\mu(\mathcal{U}) \triangleq \frac{N_1}{r}\max_{i=1,...,N_1} \left\| \breve{\mc{U}}^{T}(:,:,k) \mathbf{e}_i \right\|_{F}^2 \leq \mu_0,\\
&\text{(Tensor-row incoherence: )}\\~~&\mu(\mathcal{V}) \triangleq \frac{N_2}{r} \max_{j=1,...,N_2} \left\| \breve{\mc{V}}^{T}(:,:,k) \mathbf{e}_j \right\|_{F}^{2} \leq \mu_0,
\end{aligned}
\end{equation}
where $\| \cdot\|_{F}$ denotes the usual Frobenius norm and $\mathbf{e}_i, \mathbf{e}_j$ are standard co-ordinate basis of respective lengths.
\end{definition}

We have the following result stated without proof.

{\begin{theorem}
Under random tubal sampling, if $\mu(\mathcal{U}), \mu(\mathcal{V}) \leq \mu_0$ then for  $M \geq C N \mu_0 r \log^2 N$ for some constant $C>0$ and $N = \max \{N_1,N_2\}$, solving the convex optimization problem of Equation (\ref{tensor_completion}) recovers the tensor with high probability (for sufficiently large $N$).
\end{theorem}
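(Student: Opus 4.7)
The strategy is to exploit the fact that tubal sampling, together with the DFT along the third mode, decouples the tensor program (\ref{tensor_completion}) into $N_3$ uncoupled matrix completion problems in the Fourier domain, to which the classical Cand\`es--Recht matrix completion theorem can be applied slice by slice.

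First I would pass to the Fourier domain. By Parseval's theorem, $\|\mathcal{X}\|_F^2 = \tfrac{1}{N_3}\sum_{k=1}^{N_3}\|\breve{\mathcal{X}}(:,:,k)\|_F^2$, and by Definition \ref{lemma:tnn}, $\|\mathcal{X}\|_{TNN} = \sum_{k=1}^{N_3}\|\breve{\mathcal{X}}(:,:,k)\|_*$. Crucially, tubal sampling means that $\Omega \subset [N_1]\times[N_2]$ is applied to tubes along the third mode, so the operator $\mathcal{P}_{\Omega}$ commutes with the third-mode DFT and acts as the \emph{same} $(i,j)$-mask on every frontal slice $\breve{\mathcal{X}}(:,:,k)$. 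Hence (\ref{tensor_completion}) splits into $N_3$ independent subproblems of the form
\begin{equation*}
\widehat{\M{X}}_k = \arg\min_{\M{X}_k \in \mathbb{C}^{N_1 \times N_2}} \| \mathcal{P}_{\Omega}(\breve{\M{Y}}_k - \M{X}_k) \|_F^2 + \lambda' \| \M{X}_k \|_*, \quad k=1,\ldots,N_3,
\end{equation*}
all sharing the same sampling pattern $\Omega$.

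Second, I would transfer the tensor hypotheses to each matrix subproblem. Since $\Theta$ is f-diagonal of tubal-rank $r$, each frontal slice $\breve{\mathcal{T}}(:,:,k) = \breve{\mathcal{U}}(:,:,k)\,\breve{\Theta}(:,:,k)\,\breve{\mathcal{V}}^{\ast}(:,:,k)$ has matrix rank at most $r$. The tensor-column and tensor-row incoherence bounds in (\ref{tensor_column_incoherency}) are \emph{precisely} the standard matrix incoherence bounds for the column space of $\breve{\mathcal{U}}(:,:,k)$ and the row space of $\breve{\mathcal{V}}(:,:,k)$, with the same parameter $\mu_0$ and uniformly in $k$. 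Third, I would invoke the Cand\`es--Recht (or, in the noisy case, Cand\`es--Plan) theorem: for a fixed $k$, $M \geq C N \mu_0 r \log^2 N$ uniformly random samples from a common $\Omega$ suffice for the nuclear-norm program to recover the rank-$r$, $\mu_0$-incoherent matrix $\breve{\mathcal{T}}(:,:,k)$ with probability at least $1 - N^{-\beta}$ for some $\beta>0$. A union bound over $k=1,\ldots,N_3$ yields joint success with probability at least $1 - N_3\,N^{-\beta}$; inverting the third-mode DFT and using Parseval then reconstructs $\mathcal{T}$.

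The main obstacle I anticipate is accounting honestly for the $N_3$ factor in the final union bound: because the same $\Omega$ is reused across all $N_3$ slices, the success events are highly correlated, but the cleanest argument still treats them via $N_3$ independent applications of the matrix result to the $N_3$ fixed target matrices. To preserve the advertised $\log^2 N$ scaling (rather than $\log^2(N N_3)$), one must either absorb $N_3$ into the constant $C$ under the standing assumption that $N_3$ is polynomial in $N$, or sharpen the slicewise probability bound so that $N_3 N^{-\beta}$ remains $o(1)$. A secondary technical point is making sure the Fourier-domain noise inherits the sub-Gaussian tails of $\mathcal{N}$ with only a bounded change in variance, so that the noisy matrix completion bound transfers back to a Frobenius-norm error bound on $\hat{\mathcal{T}} - \mathcal{T}$ via Parseval.
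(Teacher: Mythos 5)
Your plan is essentially the paper's own argument: the paper states this theorem without a written proof, but the sentence preceding it identifies exactly your mechanism --- tubal sampling commutes with the third-mode DFT, so (\ref{tensor_completion}) decouples via Parseval and Definition \ref{lemma:tnn} into $N_3$ rank-$r$, $\mu_0$-incoherent matrix completion problems sharing the sampling pattern $\Omega$, each handled by the standard nuclear-norm guarantee of \cite{Candes2010} and combined by a union bound. Your two caveats (the $N_3$ factor in the failure probability and the noisy versus exact-recovery formulation) are legitimate loose ends, but they are loose ends of the theorem as stated in the paper rather than gaps introduced by your argument.
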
}

{In contrast to this result we will see that adaptive sampling as we propose below requires almost the same sampling budget but \emph{only requires tensor column incoherence $\mu(\mathcal{U})$ to be small} (see Theorem \ref{recovery_error}),  which is one of the major gains of adaptive tubal-sampling over random tubal-sampling as well as over random element-wise sampling considered in \cite{Shuchin2015PAMI}. This gain is similar to the matrix case (under element-wise adaptive sampling) as shown in  \cite{Singh2013NIPS}. \emph{Adaptive sampling allows one to obtain more gains when the energy seems to be concentrated on few locations (for example, see Figure \ref{fig:scenario}), which the random sampling can miss!} Further, adaptive sampling reduces the number of measurements by a constant factor for the same accuracy. This is borne out by our experimental results. Although our approach is motivated by adaptive matrix sampling strategy in \cite{Singh2013NIPS}, we would like to point out that the performance bounds for the proposed adaptive strategy do not directly follow from the results in \cite{Singh2013NIPS} and requires a careful treatment.}

%\textcolor{blue}{Do we need to mention here that for tubal sampling the method amounts to simply solving $N_3$ matrix completion problems in the Fourier domain?}

\section{Proposed Adaptive Sampling Approach} \label{section:our_approach}

We begin by revisiting the problem and providing insights into the development of the proposed adaptive strategy.

    The problem (\ref{problem_formulation}) contains two goals:
    (1) For a given low-tubal-rank tensor $\mathcal{X}$, to select a set $\Omega$ with the smallest cardinality and the corresponding samples $\mathcal{Y}$, preserving most information of tensor $\mathcal{X}$, i.e., one can recover $\mathcal{X}$ from $\Omega$ and $\mathcal{Y}$.
    (2) For a given set $\Omega$ and samples $\mathcal{Y}$, to estimate a tensor $\mathcal{X}$ that has the least tubal-rank. However, these two goals are intertwined together and one cannot expect a computationally feasible algorithm to get the optimal solution. Therefore, we set $|\Omega| = M$ and seek to select a set $\Omega$ and the corresponding samples $\mathcal{Y}$ that span the low-dimensional tensor-column subspace of $\mathcal{T}$. The focus of this section is to design an efficient sampling scheme and to provide a bound on the sampling budget $M$.

    To achieve this, we design a two-pass sampling scheme inspired by \cite{VolSampling2006}. The proposed approach exploits adaptivity to identify entries that are highly informative for learning the low-dimensional {tensor}-subspace of the fingerprint data. The $1$st-pass sampling gathers general information about the region of interest, then the $2$nd-pass sampling concentrates on those more informative reference points. In particular the total sampling budget $M (< N_1N_2)$ is divided into $\delta M$ and $(1-\delta)M$ for these two sampling passes and $\delta$ is called {\em the allocation ratio}. In the 1st-pass sampling, we randomly sample $\delta M/N_2$ out of $N_1$ reference points in each column of $\mathcal{G}$. In the 2nd-pass sampling, the remaining $(1-\delta)M$ samples are allocated to those highly informative columns identified by the 1st-pass sampling. Finally, tensor completion on those $M$ RF fingerprints is performed to rebuild a fingerprint database.

{The provable optimality of this scheme rests on these three observations.
   \begin{itemize}
   \setlength\itemsep{-3pt}
   \item $\mathcal{T}$ is embedded in an $r$-dimensional tensor-column subspace $\mathcal{U}$, $r \ll \min(N_1,N_2)$;
   \item Learning $\mathcal{U}$ requires to know only $r$ \emph{linearly independent} lateral slices;\footnote{{Note: A collection of $r$ lateral slices $\mathcal{U}(:,j,:), j = 1,...,r$ are said to be linearly independent (in the proposed setting) if $ \sum_{j=1}^{r}\mathcal{U}(:,j,:)\ast \mathbf{c}_j = 0 \implies \mathbf{c}_j = 0, \forall j$.} }
   \item Knowing $\mathcal{U}$, randomly sampling a few tubes of the $j$-th column is enough to reliably estimate the lateral slice $\mathcal{T}(:,j,:)$;
   \end{itemize}}
  However, we do not know the value of $r$ {\em a priori} nor the independency between any two lateral slices. {Given a current estimate $\mathcal{U}$ of the tensor-column subspace, following \cite{VolSampling2006}} one can \emph{adaptively} sample each column according to the probability distribution where the probability $p_j$ of sampling the $j$-th lateral slice is proportional to $||\mathcal{P}_{\mathcal{U}^{\bot}} \mathcal{T}(:,j,:)||_F^2$, i.e, $p_j = \frac{||\mathcal{P}_{\mathcal{U}^{\bot}} \mathcal{T}(:,j,:)||_F^2}{||\mathcal{P}_{\mathcal{U}^{\bot}} \mathcal{T}||_F^2}$. Updating the estimate of $\mathcal{U}$ iteratively, when $cr$ ($c>1$ is a small constant) columns are sampled, we can expect that with high probability, $||\mathcal{P}_{\mathcal{U}^{\bot}} \mathcal{T}(:,j,:)||_F^2 =0$, $\forall j$. Note that $\mathcal{P}_{\mathcal{U}^{\bot}}(\cdot)$ denotes projection onto the orthogonal space of $\mathcal{U}$; in t-product form, $\mathcal{P}_{\mathcal{U}} = \mathcal{U} \ast (\mathcal{U}^T \ast \mathcal{U})^{-1} \ast \mathcal{U}^T$, $\mathcal{P}_{\mathcal{U}^{\bot}} = \mathcal{I} - \mathcal{P}_{\mathcal{U}}$, and $\mathcal{U}^T \ast \mathcal{U} $ is invertible and can be computed according to Definition \ref{inverse} and Remark \ref{computation}.

The challenge is that we cannot have the exact sampling probability $p_j$ without sampling all reference points of the grid map $\mathcal{G}$. Nevertheless, exploiting the spatial correlation, one can estimate the sampling probability from missing data (sub-sampled data), as $\hat{p}_j = \frac{||\mathcal{P}_{\mathcal{U}^{\bot}}(\mathcal{T}(:,\Omega_j^1,:))||_F^2}{|| \mathcal{P}_{\mathcal{U}^{\bot}}(\mathcal{T}(:,\Omega^1,:))||_F^2}$ in Algorithm \ref{alg:puncturing}. Essentially { Under the incoherency conditions of Equation (\ref{tensor_column_incoherency}) and for sufficiently large $M$ we show in Lemma \ref{lemma:sampling_budget} that $\hat{p}_j$ is a good estimation of $p_j$. Then using these estimates one can show that adaptive sampling scheme of the second pass of Algorithm 1 succeeds with high probability in estimating the correct tensor column-subspace. This is the content of Lemma \ref{lemma_singleround} which analyzes a single step of second pass sampling using which the main Theorem \ref{recovery_error} follows.}

We now outline the details of the algorithm in the next section.

\subsection{Tensor Completion with Adaptive Tubal-Sampling}
%   \begin{figure}[t]
%   \centering
%   \includegraphics[width=0.45\textwidth]{pic/framework.eps}
%   \caption{The framework of the proposed approach in which the training phase differs from that in Fig. \ref{fig:overview}.} \label{fig:framework}
%   \end{figure}

   \begin{figure}[t]
   \centering
   \includegraphics[width=0.25\textwidth]{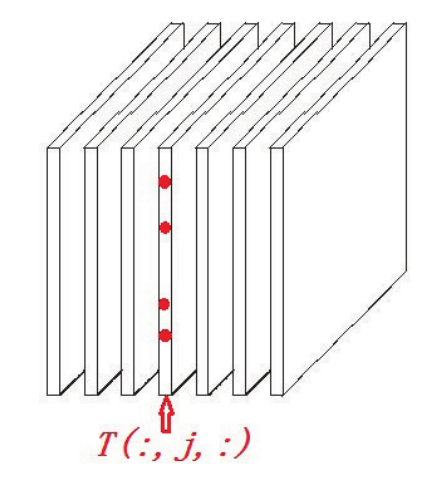}
   \caption{After adaptively building an estimate of $\mathcal{U}$ using two pass sampling approach of Algorithm \ref{alg:puncturing}, in the last step of the algorithm, the engineer performs site survey at a few random selected reference points (with red color) of the $j$-th column of $\mathcal{G}$. Knowing the $r$-dimensional tensor-subspace $\mathcal{U}$, one can recover the $j$-th lateral slice $\mathcal{T}(:,j,:)$.} \label{fig:tensor_slice}
   \end{figure}

    The pseudo-code of our adaptive tubal-sampling approach is shown in Algorithm \ref{alg:puncturing}. The inputs include the grid map $\mathcal{G}$, the sampling budget $M$, the size of the tensor, $N_1,N_2,N_3$, the allocation ratio $\delta$, and the number of iterations $L$. The algorithm consists of three steps. The $1$st-pass sampling is a uniform tubal-sampling, % and estimates the Frobenius norm of each lateral slice $\mathcal{T}(:,j,:)$,
    while the $2$nd-pass sampling outputs an estimate $\hat{\mathcal{U}}$ of the tensor-column subspace $\mathcal{U}$ in $L$ rounds, as explained below.

\begin{algorithm}[h]
  \caption{Tensor completion based on adaptive sampling}
  \label{alg:puncturing}
  \begin{algorithmic}
  \STATE \textbf{Input}: parameters $\mathcal{G},M, N_1,N_2,N_3,\delta,L$
  \STATE \textbf{1st-pass sampling}:
  \STATE Uniformly sample $\delta M / N_2$ reference points from each column of $\mathcal{G}$, denoted as $\Omega_j^1$. $\Omega^1 = \bigcup\limits_{j=1}^{N_2} \Omega_j^1$.
  \STATE \textbf{2nd-pass sampling}:
  \STATE $\hat{\mathcal{U}} \leftarrow \emptyset$, $\Pi \leftarrow \{1,2,...,N_2\}$.
  \FOR{$l = 1 : L$} % $s=\frac{5Lr}{2\rho\epsilon}$~($s \leq \frac{(1-\delta)M}{N_2L}$)
       \STATE Estimate $\hat{p}_j = \frac{||\mathcal{P}_{\hat{\mathcal{U}}^{\bot}}(\mathcal{T}(:,\Omega_j^1,:))||_F^2}{|| \mathcal{P}_{\hat{\mathcal{U}}^{\bot}}(\mathcal{T}(:,\Omega^1,:))||_F^2}$, $\forall j \in \Pi$.
       \STATE Sample $s= \frac{(1-\delta)M }{N_2L}$ columns of $\mathcal{G}$ according to $\hat{p}_j$ for $\forall j \in \Pi$, denoted as $\Pi_s^l$.
       \STATE Calculate $\mathbb{U} \leftarrow \frac{\mathcal{P}_{\hat{\mathcal{U}}^{\bot}}(\mathcal{T}(:,\Pi_s^l,:))}{|| \mathcal{P}_{\hat{\mathcal{U}}^{\bot}}(\mathcal{T}(:,\Pi_s^l,:))||_F}$, and update $\hat{\mathcal{U}} \leftarrow \hat{\mathcal{U}} \cup \mathbb{U}$ (concatenate $\mathbb{U}$ to $\hat{\mathcal{U}}$).
       \STATE $\Pi \leftarrow \Pi / \Pi_s^l$ (set subtraction).
  \ENDFOR
  %\STATE¡¡\textbf{Output}:
¡¡\STATE Approximate each lateral slice $\mathcal{T}(:,j,:)$ with
    $\hat{\mathcal{T}}(:,j,:) = \hat{\mathcal{U}} \ast (\hat{\mathcal{U}}_{\Omega_j}^T \ast \hat{\mathcal{U}}_{\Omega_j})^{-1} \ast \hat{\mathcal{U}}_{\Omega_j}^T \ast \mathcal{T}(\Omega_j,j,:)$.
  %\STATE Concatenates these estimated lateral slices to form $\hat{\mathcal{T}}$.
  \end{algorithmic}
  \vspace{-2pt}
\end{algorithm}

\subsubsection{$1$st-Pass Sampling}

   First, we gather general information of the whole region of interest, applying a uniform random sampling to avoid spatial bias. Denote these sampled $\delta M$ reference points as $\Omega^{1}$, the sampled reference points in the $j$-th column as $\Omega_j^{1}$ with $m=|\Omega_j^{1}|=\delta M/N_2$, and the corresponding fingerprints as $\mathcal{T}(\Omega_j^{1},j,:)$.

\subsubsection{$2$nd-Pass Sampling}
   Initialize with $\hat{\mathcal{U}} \leftarrow \emptyset$, $\Pi \leftarrow \emptyset$. In each round, we estimate the sampling probability $\hat{p}_j = \frac{||\mathcal{P}_{\tilde{\mathcal{U}}^{\bot}}(\mathcal{T}(:,\Omega_j^1,:))||_F^2}{|| \mathcal{P}_{\hat{\mathcal{U}}^{\bot}}(\mathcal{T}(:,\Omega^1,:))||_F^2}$, and choose $s$ columns of $\mathcal{G}$ according to the probability $\hat{p}_j$. Then, we calculate an intermediate subspace $\mathbb{U}$ that is the space spanned by these $s$ columns and lies outside of $\hat{\mathcal{U}}$, and then update the subspace $\hat{\mathcal{U}}$. %If $\mathbb{U}=\emptyset$, then $\hat{\mathcal{U}}$ remains unchanged for the current round.
   $\Pi_s^l$ denotes these $s$ lateral slices, $\mathcal{T}(:,\Pi_s^l,:)$ denotes these RF fingerprints in the $l$-th round, corresponding to these $s$ columns of $\mathcal{G}$.

%   However, performing site survey according to a probability distribution proportional to projected residuals highlights the {\em chicken-and-egg} problem. The sampling probability (``chicken") guides the engineer to sample more informative columns of $\mathcal{G}$ (``egg"). Vice versa, only if the engineer has sampled all columns of $\mathcal{G}$, she can calculate the sampling probability. At the beginning, however, the engineer does not know the fingerprint of any reference point, i.e., there are neither ``chickens" nor ``eggs".

\subsubsection{Estimation}
   Let $\Omega$ denote all sampled reference points (including $\Omega^1$), $\Omega_j$ denotes the sampled reference points in the $j$-th column of $\mathcal{G}$, and $\mathcal{U}_{\Omega_j}$ denotes the tensor organized by the horizontal slices of $\mathcal{U}$ indicated by $\Omega_j$. Define the projection operator $\mathcal{P}_{\mathcal{U}_{\Omega_j}} = \mathcal{U}_{\Omega_j} \ast (\mathcal{U}_{\Omega_j}^T \ast \mathcal{U}_{\Omega_j})^{-1} \ast \mathcal{U}_{\Omega_j}^T$. After $L$ rounds, we can obtain an fairly accurate estimation of $\mathcal{U}$. To understand the estimator, considering the noiseless case. Since $\mathcal{T}$ lies in $\mathcal{U}$, we have $||\mathcal{T}(\Omega_j,j,:) - \mathcal{P}_{\mathcal{U}_{\Omega_j}} \ast \mathcal{T}(\Omega_j,j,:)  ||=0$, i.e., $\mathcal{T}(\Omega_j,j:) = \mathcal{U}_{\Omega_j} \ast (\mathcal{U}_{\Omega_j}^T \ast \mathcal{U}_{\Omega_j})^{-1} \ast \mathcal{U}_{\Omega_j}^T \ast \mathcal{T}(\Omega_j,j,:)$. And according to Definition \ref{tsvd}, we know that $\mathcal{T}(\Omega_j,j,:) = \mathcal{U}_{\Omega_j} \ast \Theta \ast \mathcal{V}^T(:,j,:)$. Therefore, using the estimate $\hat{\mathcal{U}}$, we approximate each lateral slice $\mathcal{T}(:,j,:)$ with $\hat{\mathcal{T}}(:,j,:) = \hat{\mathcal{U}} \ast (\hat{\mathcal{U}}_{\Omega_j}^T \ast \hat{\mathcal{U}}_{\Omega_j})^{-1} \ast \hat{\mathcal{U}}_{\Omega_j}^T \ast \mathcal{T}(\Omega_j,j,:)$ and concatenate these estimates to form $\hat{\mathcal{T}}$.

\section{Performance Bounds}

   For performance guarantee, we are interested in the recovery error and required sampling budget. We prove that Algorithm \ref{alg:puncturing} has bounded recovery error and achieves near-optimal sampling budget. Since we use the estimated sampling probability in the $2$nd-pass sampling, we also prove that our estimates $\hat{p}_j$ are relatively close to $p_j$.

{\textbf{STEP 1}} - First, we analyze a single round of the $2$nd-pass sampling. {Lemma \ref{lemma_singleround} states that if the probability estimates are to within a constant tolerance of the true estimates (see Equation \ref{sampling_probability}), then sampling $s$ columns of $\mathcal{G}$ according to $\hat{p}_j$ (estimated based on samples obtained in the $1$st-pass sampling) will minimize the residual error within $\hat{\mathcal{U}}$ at rate $\frac{1}{s}$ ($s \geq 1$).}. The second term in the right-hand side of (\ref{error_singleround}) denotes the residual error outside of $\hat{\mathcal{U}}$, which remains unreduced. Note that without any {\em prior} information (i.e., $\hat{\mathcal{U}}$ and $\hat{p}_j$), sampling additional $s$ columns of $\mathcal{G}$ will reduce the residual error at rate $\frac{s}{N_2}$ which is $ \ll \frac{1}{s}$ as $s \ll \sqrt{N_2}$ in Algorithm \ref{alg:puncturing}. Therefore, Lemma \ref{lemma_singleround} is the key to efficiently reduce recovery error. Note that  $\mathcal{E} = \mathcal{P}_{\hat{\mathcal{U}}^{\bot}}(\mathcal{T})$ denotes the projection onto the orthogonal complement of $\hat{\mathcal{U}}$. See Appendix A for its proof.

  \begin{lemma}\label{lemma_singleround}
   Let $\mathcal{T} = \mathcal{U} \ast \Theta \ast \mathcal{V}^T \in \mathbb{R}^{N_1 \times N_2 \times N_3}$ with tensor tubal-rank $r$, and $\hat{\mathcal{U}} \in \mathbb{R}^{N_1 \times r \times N_3}$ represent the estimated tensor-column subspace in a round. Let $\mathcal{E} = \mathcal{P}_{\hat{\mathcal{U}}^{\bot}}(\mathcal{T}) = \mathcal{T} - P_{\hat{\mathcal{U}}}(\mathcal{T})$, and $\mathcal{S}$ be $s$ randomly selected lateral slices of $\mathcal{T}$ (as indicated by $\Pi_s^l$), sampled according to the distribution $\hat{p}_j = \frac{||\mathcal{P}_{\hat{\mathcal{U}}^{\bot}}(\mathcal{T}(:,\Omega_j^1,:))||_F^2}{|| \mathcal{P}_{\hat{\mathcal{U}}^{\bot}}(\mathcal{T}(:,\Omega^1,:))||_F^2}$. If there exist constants $\alpha_1,\alpha_2 \in \mathbb{R}$, such that:
   \begin{equation}\label{sampling_probability}
   \frac{1 - \alpha_1}{1 + \alpha_2} \cdot \frac{||\mathcal{E}(:,j,:)||_F^2}{||\mathcal{E}||_F^2} \leq \hat{p}_j \leq
   \frac{1 + \alpha_2}{1 - \alpha_1} \cdot \frac{||\mathcal{E}(:,j,:)||_F^2}{||\mathcal{E}||_F^2},
   \end{equation}
   then, with probability $\geq 1 - \rho$ we have:
   \begin{equation}\label{error_singleround}
   ||\mathcal{T} - P_{\hat{\mathcal{U}} \cup \text{t-span}(\mathcal{S}), r}(\mathcal{T})||_F^2 \leq \frac{r}{s\rho} \cdot \xi_0 \frac{1 + \alpha_2}{1 - \alpha_1} \cdot ||\mathcal{E}||_F^2  + ||\mathcal{T} - \mathcal{T}_r||_F^2,
   \end{equation}
   where $\xi_0$ is a constant such that  \begin{align}
    \sum_{j=1}^{N_2} |\breve{\mc{V}}^{T}(i,j,k)|^2 \leq \xi_0,
   \end{align} $\text{t-span}(\mathcal{S})$ denotes the space spanned by the slices of $\mathcal{S}$, and $P_{\mathcal{H},r}(\cdot)$ denotes a projection on to the best $r$-dimensional subspace of $\mathcal{H}$. %, and $\alpha_1,\alpha_2$ are constants.
   %, and $\rho$ is set by the engineer to control the successful probability.
   %$\mathcal{T}_r$ is the best rank-$r$ approximation to $\mathcal{T}$,
   \end{lemma}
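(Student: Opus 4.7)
The plan is to lift the matrix adaptive-column-sampling argument (Deshpande--Vempala, as adapted by Singh et al.\ for the matrix case) to the t-product setting via the Fourier correspondence of Remark~\ref{computation}. Upon taking the FFT along the third mode, the t-product becomes ordinary block-diagonal matrix multiplication, so projections onto tensor-column subspaces decouple into $N_3$ matrix column-projections on the frontal slices of $\breve{\mathcal{T}}$. By Parseval's identity (applied mode-wise), the squared Frobenius residual $\|\mathcal{T} - P_{\hat{\mathcal{U}}\cup \text{t-span}(\mathcal{S}),\,r}(\mathcal{T})\|_F^2$ equals (up to the $1/N_3$ factor) the sum over $k$ of the analogous matrix residuals for $\breve{\mathcal{T}}(:,:,k)$, so bounds derived slice-wise can be aggregated.

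First, I would set up the orthogonal decomposition $\mathcal{T} = P_{\hat{\mathcal{U}}}(\mathcal{T}) + \mathcal{E}$ under the t-product. Since $P_{\hat{\mathcal{U}}}(\mathcal{T})$ already lies in the enlarged subspace $\hat{\mathcal{U}} \cup \text{t-span}(\mathcal{S})$, the residual being analyzed reduces to the projection residual of $\mathcal{E}$ alone. I would then split $\mathcal{E} = \mathcal{E}_r + (\mathcal{E} - \mathcal{E}_r)$ using the t-SVD of $\mathcal{E}$ truncated to tubal-rank $r$, so that the additive $\|\mathcal{T} - \mathcal{T}_r\|_F^2$ term in (\ref{error_singleround}) accounts for what is inherently unrecoverable at rank $r$, while the first term controls the rank-$r$-recoverable portion.

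Next, for the controllable part, I would apply the core expectation bound of adaptive-residual sampling slice-by-slice in the Fourier domain: if columns are sampled from the ideal distribution $p_j = \|\mathcal{E}(:,j,:)\|_F^2/\|\mathcal{E}\|_F^2$, then the rank-$r$ projection residual onto the sampled span has expected squared Frobenius norm at most $\frac{r}{s}\|\mathcal{E}\|_F^2$ plus the truncation error. Because we instead sample from $\hat{p}_j$, which by hypothesis (\ref{sampling_probability}) satisfies $\hat{p}_j \in \bigl[\tfrac{1-\alpha_1}{1+\alpha_2}p_j,\,\tfrac{1+\alpha_2}{1-\alpha_1}p_j\bigr]$, a standard importance-reweighting argument inflates this bound by the multiplicative factor $\frac{1+\alpha_2}{1-\alpha_1}$. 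The constant $\xi_0$ enters precisely when reassembling the slice-wise bounds: expanding $\mathcal{E}$ via the t-SVD factors $\mathcal{U}\ast\Theta\ast\mathcal{V}^T$ in the Fourier domain introduces row-sum terms $\sum_{j}|\breve{\mathcal{V}}^{T}(i,j,k)|^2$, which are uniformly controlled by $\xi_0$. A final application of Markov's inequality converts the expectation bound into the stated high-probability statement, contributing the $\frac{1}{\rho}$ factor in (\ref{error_singleround}).

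The main obstacle, I expect, will be in the decoupling step: rigorously verifying that the rank-$r$ truncated projection $P_{\hat{\mathcal{U}}\cup \text{t-span}(\mathcal{S}),\,r}(\cdot)$ under the t-product genuinely commutes with the Fourier transform on the third mode in the block-diagonal sense --- including the subtlety that the \emph{best} rank-$r$ sub-tensor inside a union of tensor-column subspaces is defined via $t$-linear combinations and thus couples Fourier slices through the same index set, not independently per slice. Related to this is careful bookkeeping of the $\xi_0$ factor so that it appears exactly once in the final bound and with the correct multiplicative placement relative to $\frac{1+\alpha_2}{1-\alpha_1}$. Once these reductions are in hand, the expectation-to-probability step is routine.
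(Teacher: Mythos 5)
Your high-level ingredients are the right ones --- importance sampling of lateral slices, a second-moment bound inflated by the factor $\frac{1+\alpha_2}{1-\alpha_1}$ from (\ref{sampling_probability}), and a final Markov inequality contributing the $1/\rho$ --- and you correctly sense that the cross-slice coupling is the danger point. But the central reduction you propose, namely decoupling into $N_3$ independent matrix problems in the Fourier domain and applying the matrix adaptive-sampling theorem to each frontal slice $\breve{\mathcal{T}}(:,:,k)$ separately, is exactly the step that fails, and not for the rank-truncation reason you anticipate. The obstruction is that all $N_3$ frontal slices share a \emph{single} sampling distribution $\hat{p}_j \propto \|\mathcal{E}(:,j,:)\|_F^2 = \frac{1}{N_3}\sum_{k}\|\breve{\mathcal{E}}(:,j,k)\|^2$, which is proportional to the aggregate tube energy, not to the per-slice column energy $\|\breve{\mathcal{E}}(:,j,k)\|^2$ that the matrix theorem needs for slice $k$. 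The ratio between these two distributions can be as large as $N_3$ and is not controlled by $\frac{1+\alpha_2}{1-\alpha_1}$, so the per-slice importance-reweighting step either breaks or costs an extra factor of order $N_3$. A symptom of the same problem is your placement of $\xi_0$: for fixed $k$ the row sum $\sum_{j}|\breve{\mathcal{V}}^{T}(i,j,k)|^2$ equals $1$ by unitarity of $\breve{\mathcal{V}}(:,:,k)$, so in a genuinely slice-decoupled argument $\xi_0$ would collapse to $1$ and never truly enter the bound.

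The paper's proof avoids this by never decoupling the slices: following \cite{VolSampling2006}, it builds an explicit rank-$r$ competitor at the tube level, defining tube-valued estimators $X_l^{(i)} = \frac{1}{\hat{p}_j}\,\mathcal{E}(:,j,:)\ast\mathcal{V}^{T}(i,j,:)$ and $w^{(i)} = P_{\hat{\mathcal{U}}}(\mathcal{T})\ast\mathcal{V}(:,i,:)+X^{(i)}$, which are unbiased for $\mathcal{U}(:,i,:)\ast\sigma_i$ and lie in $\hat{\mathcal{U}}\cup\text{t-span}(\mathcal{S})$; the span of $w^{(1)},\dots,w^{(r)}$ is the candidate subspace that upper-bounds the best rank-$r$ projection. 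The cross-slice cost then appears exactly once, inside the second-moment computation, via the inequality $\frac{\|\mathcal{E}(:,j,:)\ast\mathcal{V}^{T}(i,j,:)\|_F^2}{\|\mathcal{E}(:,j,:)\|_F^2}\le\max_{k}|\breve{\mathcal{V}}^{T}(i,j,k)|^2$ followed by summation over $j$, which is where $\xi_0$ (in effect a bound on $\sum_{j}\max_{k}|\breve{\mathcal{V}}^{T}(i,j,k)|^2$) is genuinely needed. To repair your plan you would have to replace the per-slice black-box invocation of the matrix result with this tube-level moment computation; the reduction to $\mathcal{E}$ and the final Markov step you describe are otherwise fine.
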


{\textbf{STEP 2}} - {Now note that we perform the adaptive sampling scheme for $L$ rounds. For this} Lemma \ref{error_allround} below gives an induction argument to chain {Lemma \ref{lemma_singleround}} across all rounds of the $2$nd-pass sampling.

%This result has been shown for the matrix case \cite{Singh2013NIPS}. Since the proof process only uses a matrix version of Lemma \ref{lemma_singleround} and the union operator of spaces which holds for both matrix-column subspaces and tensor-column subspaces, Lemma \ref{error_allround} holds for our tensor case too.

Setting $\epsilon < 1$ and $L$ sufficiently large, Lemma \ref{error_allround} states that our two-pass sampling scheme can approximate the low tubal-rank tensor $\mathcal{T}$ with error comparable to that of the best rank-$r$ approximation $\mathcal{T}_{r}$. This indicates that the $2$nd-pass sampling estimates $\mathcal{U}$ with high accuracy.

%, i.e., $||\mathcal{T} - \mathcal{T}_r||_F^2$. $\mathcal{T}$ is of tensor tubal-rank $r$ and $||\mathcal{T} - \mathcal{T}_r||_F^2$ denotes error coming from noise, therefore, the $2$nd-pass sampling of Algorithm \ref{alg:puncturing} can lead to optimal recovery error. This indicates that the $2$nd-pass sampling estimates $\mathcal{U}$ with high accuracy.
   %\textcolor{green}{i.e., $\hat{\mathcal{U}} = \mathcal{U}$ in the $L$-th round}.

   \begin{lemma}\label{error_allround}
   Suppose that (\ref{sampling_probability}) holds with $\frac{1 + \alpha_2}{1 - \alpha_1} \leq c$ for some constant $c$ in each round. Let $\mathcal{S}_1,\mathcal{S}_2,...,\mathcal{S}_L$ denote the sets of slices selected at each round and set $s = \frac{Lcr \xi_0}{\rho \epsilon}$. Then with probability $\geq 1-\rho$ we have:
   \begin{eqnarray}
   &&||\mathcal{T} - \mathcal{P}_{\hat{\mathcal{U}}}(\mathcal{T}) ||_F^2 = ||\mathcal{T} - P_{\cup_{i=1}^{L} \mathcal{S}_i, r}(\mathcal{A})||_F^2 \nonumber\\
   &&\leq \frac{1}{1-\epsilon} ||\mathcal{T} - \mathcal{T}_r||_F^2 + \epsilon^L ||\mathcal{T}||_F^2.
   \end{eqnarray}
   \end{lemma}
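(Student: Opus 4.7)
The plan is to prove Lemma \ref{error_allround} by straightforward induction on the $L$ rounds of the second-pass sampling, chaining the single-round bound of Lemma \ref{lemma_singleround} at each step, with a union bound handling the failure probability across all rounds.

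\textbf{Setup and inductive step.} Let $\hat{\mathcal{U}}_l$ denote the estimate of the tensor-column subspace after $l$ rounds, with $\hat{\mathcal{U}}_0 = \emptyset$, and write $\mathcal{E}_l = \mathcal{T} - \mathcal{P}_{\hat{\mathcal{U}}_l}(\mathcal{T})$ for the residual error. Apply Lemma \ref{lemma_singleround} to round $l+1$, using the samples $\mathcal{S}_{l+1}$ drawn according to $\hat{p}_j$ relative to $\hat{\mathcal{U}}_l$, with per-round failure probability set to $\rho/L$. The lemma then yields, on an event of probability at least $1 - \rho/L$,
\begin{equation*}
\|\mathcal{E}_{l+1}\|_F^2 \;\leq\; \frac{r}{s\,(\rho/L)}\cdot \xi_0\cdot\frac{1+\alpha_2}{1-\alpha_1}\,\|\mathcal{E}_l\|_F^2 \;+\; \|\mathcal{T}-\mathcal{T}_r\|_F^2.
\end{equation*}
Under the hypothesis $\frac{1+\alpha_2}{1-\alpha_1}\leq c$ and the choice $s = \frac{Lcr\xi_0}{\rho\epsilon}$, the leading coefficient collapses exactly to $\epsilon$, giving the clean one-step recursion $\|\mathcal{E}_{l+1}\|_F^2 \leq \epsilon\,\|\mathcal{E}_l\|_F^2 + \|\mathcal{T}-\mathcal{T}_r\|_F^2$.

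\textbf{Unrolling the recursion.} Iterating this inequality from $l=0$ to $l=L-1$ produces
\begin{equation*}
\|\mathcal{E}_L\|_F^2 \;\leq\; \epsilon^L\,\|\mathcal{E}_0\|_F^2 \;+\; \Bigl(\sum_{k=0}^{L-1}\epsilon^k\Bigr)\,\|\mathcal{T}-\mathcal{T}_r\|_F^2.
\end{equation*}
Since $\hat{\mathcal{U}}_0 = \emptyset$, we have $\|\mathcal{E}_0\|_F^2 = \|\mathcal{T}\|_F^2$, and since $\epsilon<1$ the geometric sum is bounded by $1/(1-\epsilon)$. Combining gives exactly the desired bound
\begin{equation*}
\|\mathcal{T} - \mathcal{P}_{\hat{\mathcal{U}}}(\mathcal{T})\|_F^2 \;\leq\; \tfrac{1}{1-\epsilon}\|\mathcal{T}-\mathcal{T}_r\|_F^2 + \epsilon^L\,\|\mathcal{T}\|_F^2.
\end{equation*}

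\textbf{Failure probability accounting.} A union bound over the $L$ rounds ensures that the single-round conclusion from Lemma \ref{lemma_singleround} holds simultaneously for every $l=1,\dots,L$ with probability at least $1-L\cdot(\rho/L)=1-\rho$, which matches the stated confidence. This is also where the factor of $L$ in the choice of $s$ originates.

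\textbf{Anticipated obstacle.} The routine induction is easy; the delicate point is verifying that the hypothesis \eqref{sampling_probability} of Lemma \ref{lemma_singleround} continues to hold at every round, because $\hat{p}_j$ is computed against the current estimate $\hat{\mathcal{U}}_l$, which itself is random and depends on prior rounds. The estimate $\hat{p}_j$ uses only the first-pass samples $\Omega_j^1$, so conditional on $\hat{\mathcal{U}}_l$ the proximity of $\hat{p}_j$ to the true $p_j$ with respect to the updated residual $\mathcal{E}_l$ must be controlled uniformly; this is what the separate concentration argument (Lemma \ref{lemma:sampling_budget}, based on the tensor-column incoherence $\mu(\mathcal{U})$) provides, and it is what justifies treating $c$ as a uniform constant across the $L$ rounds in the induction above.
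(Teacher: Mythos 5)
Your proposal is correct and is essentially the argument the paper intends: the paper's proof is a one-line deferral to the adaptive matrix-completion induction of Krishnamurthy and Singh \cite{Singh2013NIPS} (applied to the Fourier-domain frontal slices), and you have simply carried out that same induction explicitly at the tensor level by chaining Lemma \ref{lemma_singleround} with per-round failure probability $\rho/L$, which makes the coefficient collapse to $\epsilon$ under the stated choice of $s$. Your closing remark correctly locates the only delicate point (validity of the hypothesis (\ref{sampling_probability}) at every round) in Lemma \ref{lemma:sampling_budget}, exactly where the paper places it.
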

{\begin{proof} The proof follows along the same lines as the proof for the adaptive matrix completion case in \cite{Singh2013NIPS}, by applying the matrix result to $N_3$ frontal slices of the tensor in the Fourier domain $\breve{\mathcal{T}}$. \end{proof}}

{Now note that Lemma \ref{lemma_singleround} requires, that the estimates of $\hat{p}_j$ are close to the actual values by a factor at most $c = \frac{1 + \alpha_2}{1 - \alpha_1}$. For this Lemma \ref{lemma:sampling_budget} states that this condition holds with high probability for $\alpha_1 = -1, \alpha_2 = 4$, and $c=\frac{5}{2}$}.

   \begin{lemma}\label{lemma:sampling_budget}
   Let $\mathcal{E} = \mathcal{P}_{\hat{\mathcal{U}}^{\bot}}(\mathcal{T})$ and $\mathcal{T}$ satisfies the tensor-column incoherence condition (\ref{tensor_column_incoherency}) with $\mu(\mathcal{U}) \leq \mu_0$, with probability $\geq 1 - 6 \rho$ we have
   \begin{equation}\label{probability_estimation}
   \frac{2}{5} \cdot \frac{||\mathcal{E}(:,j,:)||_F^2}{||\mathcal{E}||_F^2} \leq \hat{p}_j \leq
   \frac{5}{2} \cdot \frac{||\mathcal{E}(:,j,:)||_F^2}{||\mathcal{E}||_F^2}.
   \end{equation}
   as long as the expected sampling budget $M$ satisfies:
   \begin{eqnarray}\label{sampling_number}
   M &\geq& CrN_2 (\mu_0\log(N_2/\rho)\log(rN_2/\rho) \nonumber\\&& + (\lceil \log(N_1 N_2N_3)\rceil)^2 /(\rho \epsilon)), 
   \end{eqnarray}
 for some  $C >0$, and thus  is of order $O(Nr\log^2 N)$ with $N = \max(N_1,N_2)$.
   \end{lemma}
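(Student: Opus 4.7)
The plan is to reduce the problem to concentration of sampled Frobenius norms. Writing $p_j = \|\mathcal{E}(:,j,:)\|_F^2 / \|\mathcal{E}\|_F^2$ and expanding $\hat{p}_j$ as a ratio of random sums, notice that both the numerator $A_j \triangleq \|\mathcal{P}_{\hat{\mathcal{U}}^{\bot}}(\mathcal{T}(:,\Omega_j^1,:))\|_F^2$ and the denominator $B \triangleq \|\mathcal{P}_{\hat{\mathcal{U}}^{\bot}}(\mathcal{T}(:,\Omega^1,:))\|_F^2$ are sums of squared norms of uniformly subsampled horizontal slices of $\mathcal{E}$. Since $\Omega_j^1$ contains $m = \delta M/N_2$ uniformly chosen indices in $\{1,\dots,N_1\}$ (and $\Omega^1$ is the union across columns), one has $\mathbb{E}[A_j] = (m/N_1)\|\mathcal{E}(:,j,:)\|_F^2$ and $\mathbb{E}[B] = (m/N_1)\|\mathcal{E}\|_F^2$. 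The entire task thus reduces to showing that $A_j$ and $B$ concentrate multiplicatively around their means with sufficiently high probability, after which a simple ratio argument gives the $2/5$--$5/2$ bound of (\ref{probability_estimation}) with $\alpha_1=-1,\alpha_2=4$.

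The main tool will be a scalar Bernstein inequality applied to the random variables $X_i = \|\mathcal{P}_{\hat{\mathcal{U}}^{\bot}}(\mathcal{T}(i,j,:))\|_F^2$, for which one needs to control $\max_i X_i$ and $\sum_i X_i^2$ in terms of the means. To get these bounds, I would pass to the Fourier domain (using Remark \ref{computation}), where the t-product becomes block-diagonal matrix multiplication and $\mathcal{P}_{\hat{\mathcal{U}}^{\bot}}$ acts independently on each of the $N_3$ frontal slices $\breve{\mathcal{E}}(:,:,k) = (I - \breve{\hat{\mathcal{U}}}\breve{\hat{\mathcal{U}}}^\top)(:,:,k)\, \breve{\mathcal{T}}(:,:,k)$. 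The tensor-column incoherence of $\mathcal{U}$ provides the bound $\|\breve{\mathcal{U}}^\top(:,:,k)\mathbf{e}_i\|_F^2 \leq \mu_0 r/N_1$, and I would argue that this property is inherited by an $\hat{\mathcal{U}}$ whose columns lie in the span of previously sampled lateral slices (or alternatively, the quantity of interest $\|(I - \breve{\hat{\mathcal{U}}}\breve{\hat{\mathcal{U}}}^\top)\breve{\mathcal{T}}\mathbf{e}_i\|^2$ can be directly bounded by a constant multiple of $\mu_0 r/N_1 \cdot \|\mathcal{E}(:,j,:)\|_F^2$, as in the matrix adaptive sampling argument of \cite{Singh2013NIPS}, applied frontal-slice-wise).

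The key steps will then be, in order: (i) derive the max-row-norm bound $\max_i X_i \leq (C\mu_0 r/N_1)\|\mathcal{E}(:,j,:)\|_F^2$ from incoherence, combined over the $N_3$ Fourier frontal slices; (ii) apply Bernstein's inequality column-by-column to obtain, for each $j$, $|A_j - \mathbb{E}[A_j]| \leq \tfrac{1}{2}\mathbb{E}[A_j]$ with probability $1 - \rho/N_2$ provided $m \geq C'\mu_0 r \log(N_2/\rho)$; (iii) a matching concentration for $B$; (iv) union bound over $j \in \{1,\dots,N_2\}$; (v) a secondary application to control the spurious contribution coming from the adaptively chosen $\hat{\mathcal{U}}$, where the $(\log(N_1N_2N_3))^2/(\rho\epsilon)$ term in (\ref{sampling_number}) enters via an $\epsilon$-net/covering argument over the $N_3$ Fourier slices and the possible $r$-dimensional subspaces produced across the $L$ rounds. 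Translating the per-column requirement $m = \delta M/N_2 \geq C'\mu_0 r \log(N_2/\rho) \log(rN_2/\rho)$ into a bound on $M$ yields the first summand of (\ref{sampling_number}), and the second summand arises from the covering.

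The main obstacle I anticipate is step (v): rigorously handling the dependence of $\hat{\mathcal{U}}$ on past samples. Because $\hat{\mathcal{U}}$ is built adaptively from the same $\Omega^1$ that enters $A_j$ and $B$, the randomness is not fresh, and one cannot directly take expectations. The standard remedy is either to condition on $\hat{\mathcal{U}}$ and pay a union bound over an $\epsilon$-net of all $r$-dimensional tensor-column subspaces (whose cardinality produces the extra $\log^2(N_1N_2N_3)$), or to design an independence-preserving sample-splitting argument between the subspace-estimation samples and the probability-estimation samples. Either route is standard in spirit but must be executed carefully in the t-SVD framework, since the covering must be constructed in the block-diagonalized Fourier representation and the incoherence must be preserved under the $\epsilon$-net discretization. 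Once this is handled, combining (i)--(v) yields (\ref{probability_estimation}) with failure probability $\leq 6\rho$ and sample complexity $O(Nr\log^2 N)$ as claimed.
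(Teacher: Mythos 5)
Your plan reduces the lemma to showing that $A_j=\sum_{i\in\Omega_j^1}\|\mathcal{E}(i,j,:)\|^2$ concentrates around $\tfrac{m}{N_1}\|\mathcal{E}(:,j,:)\|_F^2$, but this misidentifies what the algorithm can actually compute and therefore misses the main technical content. Since only the entries in $\Omega_j^1$ are observed, the residual entering $\hat p_j$ is \emph{not} the restriction of the full residual $\mathcal{E}(:,j,:)$ to the sampled rows; it is $\|\mathcal{T}(\Omega_j^1,j,:)-\mathcal{P}_{\hat{\mathcal{U}}_{\Omega_j^1}}(\mathcal{T}(\Omega_j^1,j,:))\|_F^2$, the residual of the observed sub-tube with respect to the \emph{row-subsampled} subspace $\hat{\mathcal{U}}_{\Omega_j^1}$. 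Subsampling and projection do not commute: expanding this quantity (frontal-slice-wise in the Fourier domain) produces $\|\mathcal{Z}(\Omega_j^1,:)\|_F^2$ \emph{minus} a cross term $\|\breve{\mathcal{Z}}_{\Omega_j^1}^{T}\breve{\mathcal{U}}_{\Omega_j^1}(\breve{\mathcal{U}}_{\Omega_j^1}^{T}\breve{\mathcal{U}}_{\Omega_j^1})^{-1}\breve{\mathcal{U}}_{\Omega_j^1}^{T}\breve{\mathcal{Z}}_{\Omega_j^1}\|$. Your scalar-Bernstein step controls only the first piece (this is the paper's Lemma \ref{lemma_p1}); the paper additionally needs a vector Bernstein bound on $\|\hat{\mathcal{U}}_{\Omega_j^1}^{T}\ast\mathcal{Z}(\Omega_j^1,:)\|_F^2$ (Lemma \ref{lemma_p2}) and an operator-norm bound on $(\hat{\mathcal{U}}_{\Omega_j^1}^{T}\ast\hat{\mathcal{U}}_{\Omega_j^1})^{-1}$ from \cite{Recht2010} (Lemma \ref{lemma:inverse}), assembled into the two-sided bound of Lemma \ref{space_detection}. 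These are precisely where $\mu(\hat{\mathcal{U}})$ and the $\log(rN_2/\rho)$ factor in (\ref{sampling_number}) enter; without them the lower bound in (\ref{probability_estimation}) cannot be obtained. A related gap: the scalar Bernstein step needs the incoherence of the \emph{residual} $\mathcal{Z}=\mathcal{P}_{\hat{\mathcal{U}}^{\bot}}(\mathcal{T}(:,j,:))$, i.e.\ $\mu(\mathcal{Z})=O(r\mu_0\log(1/\rho))$, which does not follow immediately from $\mu(\mathcal{U})\le\mu_0$ and requires its own argument (the paper imports it from \cite{Singh2013NIPS}).

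Your step (v) also takes a different, and I believe unworkable, route for the adaptivity. An $\epsilon$-net over all $r$-dimensional tensor-column subspaces has metric entropy scaling like $N_1 r N_3$, which would inflate the sample complexity well beyond $\log^2 N$; moreover the $(\lceil\log(N_1N_2N_3)\rceil)^2/(\rho\epsilon)$ term in (\ref{sampling_number}) does not come from any covering argument --- it is simply the accounting of the second-pass budget $(1-\delta)M=Ls$ with $L=\lceil\log_{1/\epsilon}(N_1N_2N_3)\rceil$ and $s\propto Lr\xi_0/(\rho\epsilon)$. The paper handles the dependence of $\hat{\mathcal{U}}$ on past samples not by covering but by invoking incoherence-preservation results (Lemmas 14--15 of \cite{Singh2013NIPS}, Corollary 1 of \cite{Singh2015}): any subspace spanned by sampled lateral slices inherits incoherence $O(r\mu_0\log(1/\rho)/(Ls))$, after which Lemma \ref{space_detection} is applied conditionally with a union bound over the $N_2$ columns (replacing $\rho$ by $\rho/N_2$) and over rounds, yielding the stated $1-6\rho$ success probability. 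You would need to replace your covering step with an argument of this type to close the proof.
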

   \begin{proof}  This result relies on three conditions: the incoherence of each lateral slice $\mathcal{T}(:,j,:)$, the tensor-column incoherence of $\mathcal{U}$ - Equation (\ref{tensor_column_incoherency}), and Lemma \ref{space_detection}; those three conditions' failure probabilities are less than $\rho$, $\rho$, and $4\rho$, respectively; therefore, Lemma \ref{lemma:sampling_budget} holds with probability $\geq 1- 6\rho$. See Appendix C for a complete proof. \end{proof}

   \begin{remark}
   For $\mathcal{T} \in \mathbb{R}^{N_1 \times N_2 \times N_3}$ with tensor tubal-rank $r$, its t-SVD in Definition \ref{tsvd} indicates that the degree of freedom (in terms of non-zero vectors) is $N_1 r + N_2 r + r$ which is of order $O(Nr)$ with $N = \max(N_1,N_2)$. Therefore, our sampling budget is near-optimal within a factor of $\log^2 N$.
   \end{remark}

{\textbf{STEP 3}} - {Finally, we analyze the estimation process (Last step of the Algorithm) in Lemma \ref{lemma:each_slice}}. It states that our estimator outputs each lateral slice with bounded error, which is comparable to the energy outside of $\hat{\mathcal{U}}$, i.e., $||\mathcal{P}_{\hat{\mathcal{U}}^{\bot}} \mathcal{T}(:,j,:)||_F$.

   \begin{lemma}\label{lemma:each_slice}
   For $\mathcal{T} \in \mathbb{R}^{N_1 \times N_2 \times N_3}$ with tensor tubal-rank $r$, let $\mathcal{T}(:,j,:)$ denote the $j$-th lateral slice and $\hat{\mathcal{U}}$ denote the tensor-column subspace at the $L$-th round of the $2$nd-pass sampling. Algorithm \ref{alg:puncturing} estimates $\mathcal{T}(:,j,:)$ as $\hat{\mathcal{T}}(:,j,:) = \hat{\mathcal{U}} \ast (\hat{\mathcal{U}}_{\Omega_j}^T \ast \hat{\mathcal{U}}_{\Omega_j})^{-1} \ast \hat{\mathcal{U}}_{\Omega_j}^T \ast \mathcal{T}(\Omega_j,j,:)$. Then with probability $\geq 1 - 2\rho$,
   \begin{equation}
   ||\mathcal{T}(:,j,:) - \hat{\mathcal{T}}(:,j,:) ||_F^2 \leq \left(1 + \frac{r\mu(\hat{\mathcal{U}}) \beta}{m(1-\gamma)^2} \right) ||\mathcal{P}_{\hat{\mathcal{U}}^{\bot}} \mathcal{T}(:,j,:)||_F^2.
   \end{equation}
   where $\beta = (1 + 2 \sqrt{\log(1/\rho)})^2$, $\gamma = \sqrt{\frac{8r\mu(\hat{\mathcal{U}})}{3m}\log(2r/\rho)}$, and $\mu(\hat{\mathcal{U}})$ is defined in (\ref{tensor_column_incoherency}).
   \end{lemma}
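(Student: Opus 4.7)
The plan is to mimic the matrix-case proof of adaptive row-completion (e.g. Krishnamurthy--Singh), lifting it to the t-product framework by going to the Fourier domain where everything decouples across the $N_3$ frontal slices. First I would perform a noise/signal decomposition. Write $\mathcal{T}(:,j,:) = \mathcal{P}_{\hat{\mathcal{U}}}\mathcal{T}(:,j,:) + \mathcal{E}_j$ with $\mathcal{E}_j := \mathcal{P}_{\hat{\mathcal{U}}^{\bot}}\mathcal{T}(:,j,:)$. Because the in-subspace component $\hat{\mathcal{U}} \ast \mathbf{c}$ (for some coefficient tube $\mathbf{c}$) is reproduced exactly by the least-squares style estimator, substituting $\mathcal{T}(\Omega_j,j,:) = \hat{\mathcal{U}}_{\Omega_j}\ast\mathbf{c} + \mathcal{E}_j(\Omega_j,:,:)$ gives
\begin{equation*}
\mathcal{T}(:,j,:) - \hat{\mathcal{T}}(:,j,:) = \mathcal{E}_j - \hat{\mathcal{U}} \ast (\hat{\mathcal{U}}_{\Omega_j}^T \ast \hat{\mathcal{U}}_{\Omega_j})^{-1} \ast \hat{\mathcal{U}}_{\Omega_j}^T \ast \mathcal{E}_j(\Omega_j,:,:).
\end{equation*}
Since the first summand lies in $\hat{\mathcal{U}}^{\bot}$ and the second in $\hat{\mathcal{U}}$, the Pythagoras identity (in the t-product inner product) splits the squared error into $\|\mathcal{E}_j\|_F^2$ plus the norm of the second term. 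What remains is therefore to bound the second term by $\tfrac{r\mu(\hat{\mathcal{U}})\beta}{m(1-\gamma)^2}\|\mathcal{E}_j\|_F^2$.

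Next, I would invoke Remark \ref{computation} and Parseval's theorem to move to the Fourier domain. The t-product computations become independent matrix multiplications on each frontal slice: the residual contribution decouples as $\tfrac{1}{N_3}\sum_{k=1}^{N_3} \|\breve{\hat{U}}^{(k)}(\breve{\hat{U}}_{\Omega_j}^{(k)\top}\breve{\hat{U}}_{\Omega_j}^{(k)})^{-1}\breve{\hat{U}}_{\Omega_j}^{(k)\top}\breve{\mathcal{E}}_j^{(k)}(\Omega_j)\|_F^2$. Since $\hat{\mathcal{U}}$ is tensor-orthonormal ($\hat{\mathcal{U}}^{T}\ast\hat{\mathcal{U}}=\mathcal{I}$, i.e., each $\breve{\hat U}^{(k)}$ has orthonormal columns), the outer $\breve{\hat U}^{(k)}$ can be dropped from the Frobenius norm, and the quantity to control slice-by-slice reduces to $\|(\breve{\hat U}_{\Omega_j}^{(k)\top}\breve{\hat U}_{\Omega_j}^{(k)})^{-1}\|_{op} \cdot \|\breve{\hat U}_{\Omega_j}^{(k)\top}\breve{\mathcal{E}}_j^{(k)}(\Omega_j)\|_F^2$.

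The third step is the two standard concentration arguments, now applied per frontal slice under the tensor-column incoherence hypothesis $\mu(\hat{\mathcal{U}})\le\mu_0$. Because $\Omega_j$ is a uniformly random set of $m$ row-indices, a matrix Chernoff bound on $\sum_{i\in\Omega_j}\breve{\hat U}^{(k)\top}\mathbf{e}_i\mathbf{e}_i^\top\breve{\hat U}^{(k)}$ using the row-norm bound $\|\breve{\hat U}^{(k)\top}\mathbf{e}_i\|_F^2\le r\mu(\hat{\mathcal{U}})/N_1$ yields
\begin{equation*}
\left\|(\breve{\hat U}_{\Omega_j}^{(k)\top}\breve{\hat U}_{\Omega_j}^{(k)})^{-1}\right\|_{op} \le \frac{N_1}{m(1-\gamma)}
\end{equation*}
with probability $\ge 1-\rho$, which is exactly where $\gamma=\sqrt{(8r\mu(\hat{\mathcal{U}})/3m)\log(2r/\rho)}$ enters. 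For the second factor, a vector-Bernstein/Hoeffding bound applied to $\sum_{i\in\Omega_j}\mathbf{e}_i\mathbf{e}_i^\top\breve{\hat U}^{(k)}\breve{\mathcal{E}}_j^{(k)}(i)$ shows that with probability $\ge 1-\rho$,
\begin{equation*}
\left\|\breve{\hat U}_{\Omega_j}^{(k)\top}\breve{\mathcal{E}}_j^{(k)}(\Omega_j)\right\|_F^2 \le \beta\, \frac{m}{N_1}\cdot \frac{r\mu(\hat{\mathcal{U}})}{N_1}\|\breve{\mathcal{E}}_j^{(k)}\|_F^2,
\end{equation*}
with $\beta=(1+2\sqrt{\log(1/\rho)})^2$. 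Multiplying the two bounds, summing over $k$, and applying Parseval once more converts $\sum_k \|\breve{\mathcal{E}}_j^{(k)}\|_F^2$ into $N_3\|\mathcal{E}_j\|_F^2$ and cancels the $N_1/N_1^2$ factor against the opening $1/N_3$, yielding the claimed $\tfrac{r\mu(\hat{\mathcal{U}})\beta}{m(1-\gamma)^2}\|\mathcal{E}_j\|_F^2$. A union bound over the two concentration events gives the $1-2\rho$ probability.

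The main obstacle, beyond bookkeeping, is that the sampling pattern $\Omega_j$ is shared across all $N_3$ Fourier frontal slices, so the two concentration steps must be shown to hold simultaneously for every $k$ from a single draw of $\Omega_j$. The cleanest way is to establish both bounds at the level of the block-diagonal matrix $\text{blkdiag}(\breve{\hat{\mathcal{U}}})$ restricted to the induced row-index set (so that one application of matrix Chernoff/Bernstein covers all slices at once), rather than to union-bound over $k$. A secondary subtlety is justifying that $\hat{\mathcal{U}}$ returned by Algorithm \ref{alg:puncturing} is genuinely tensor-orthonormal with a well-defined coherence parameter; this can be arranged by an implicit t-QR step on $\hat{\mathcal{U}}$ before computing the projection, which leaves the range space (and hence the projector $\mathcal{P}_{\hat{\mathcal{U}}}$) unchanged.
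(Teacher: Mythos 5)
Your proposal follows essentially the same route as the paper: the same decomposition of $\mathcal{T}(:,j,:)$ into its $\hat{\mathcal{U}}$ and $\hat{\mathcal{U}}^{\bot}$ components, the same observation that the estimator reproduces the in-subspace part exactly so the cross term vanishes by Pythagoras, and the same two concentration ingredients (the paper's Lemma \ref{lemma_p2} for $\|\hat{\mathcal{U}}_{\Omega_j}^T \ast \mathcal{Z}(\Omega_j,:)\|_F^2$ and Lemma \ref{lemma:inverse} for the inverse Gram matrix) combined in the Fourier domain exactly as you describe. The only nit is that the residual term requires the \emph{square} of the operator norm of the inverse Gram matrix, i.e. a factor $\frac{N_1^2}{m^2(1-\gamma)^2}$ rather than the first power you write in the intermediate slice-by-slice display; your final formula with $(1-\gamma)^2$ is consistent with the squared bound, so this is a bookkeeping slip rather than a gap.
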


   With our choice of $m$ (and correspondingly $M$ related via $\delta M = m N_2$), $(1 + \frac{r\mu(\hat{\mathcal{U}}) \beta}{m(1-\gamma)^2})$ is smaller than $5/4$ as shown in Appendix \ref{proof:sampling_budget}. Therefore, $|| \hat{\mathcal{T}}- \mathcal{T}'||_F^2 = \sum\limits_{j=1}^{N_2} ||\mathcal{T}(:,j,:) - \hat{\mathcal{T}}(:,j,:) ||_F^2 \leq \frac{5}{4} ||\mathcal{T} - \mathcal{P}_{\hat{\mathcal{U}}}(\mathcal{T})||_F^2$, combining Lemma \ref{error_allround} and Lemma \ref{lemma:sampling_budget}, (see also \cite{Singh2013NIPS}), we have Lemma \ref{lemma_ap}. Note that Lemma \ref{error_allround}, Lemma \ref{lemma:sampling_budget}, and Lemma \ref{lemma:each_slice} each has failure probability less than $\rho$, $2\rho$ and $6\rho$, therefore, Lemma \ref{lemma_ap} has success probability $\geq 1- 9 \rho$.

   \begin{lemma}\label{lemma_ap}
   Assume that $\mathcal{T}' = \mathcal{T} + \mathcal{N}$, $\mathcal{T}$ has tensor tubal-rank $r$, tensor-column incoherence $\mu(\mathcal{U}) \leq \mu_0$, $\mathcal{N}(i,j,k) \sim N(0,\sigma^2)$, and for all $i$, $\sum_{j=1}^{N_2} |\breve{\mc{V}}^{T}(i,j,k)|^2 \leq \xi_0$. Then for $\rho, \epsilon \in (0,1)$, sample $s = \frac{5Lr \xi_0}{2\rho \epsilon}$ columns of $\mathcal{G}$ each round, after $L$ % = \log(N_1 N_2)$
   rounds, compute $\hat{\mathcal{T}}$ as described. Then with probability $\geq 1 - 9 \rho$,
   \begin{equation}
   || \hat{\mathcal{T}}- \mathcal{T}'||_F^2 \leq 5/4 \left( \frac{1}{1 - \epsilon} ||\mathcal{T}' - \mathcal{T}_r'||_F^2 + \epsilon^L||\mathcal{T}'||_F^2 \right).
   \end{equation}
   \end{lemma}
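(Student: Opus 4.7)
The plan is to establish Lemma~\ref{lemma_ap} by composing the three intermediate results (Lemma~\ref{error_allround}, Lemma~\ref{lemma:sampling_budget}, and Lemma~\ref{lemma:each_slice}) via a union bound on good events, matching the specified sampling rate $s=\frac{5Lr\xi_0}{2\rho\epsilon}$ to the hypotheses of Lemma~\ref{error_allround} by tracking a single universal constant $c=\frac{5}{2}$ throughout.

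First I would invoke Lemma~\ref{lemma:sampling_budget}: under the tensor-column incoherence assumption $\mu(\mathcal{U})\leq\mu_0$ and the sampling-budget condition stated there, the probability estimates $\hat{p}_j$ built from the $1$st-pass samples satisfy (\ref{sampling_probability}) with $\alpha_1=-1$ and $\alpha_2=4$, so that $\frac{1+\alpha_2}{1-\alpha_1}=\frac{5}{2}=:c$. This occurs with probability at least $1-6\rho$. With $c$ pinned down, the hypothesis of Lemma~\ref{error_allround} is satisfied at $s=\frac{Lcr\xi_0}{\rho\epsilon}=\frac{5Lr\xi_0}{2\rho\epsilon}$, which is exactly the budget in the statement, yielding with probability $\geq 1-\rho$ the subspace-approximation bound
\[
\|\mathcal{T}'-\mathcal{P}_{\hat{\mathcal{U}}}(\mathcal{T}')\|_F^2 \leq \frac{1}{1-\epsilon}\|\mathcal{T}'-\mathcal{T}_r'\|_F^2 + \epsilon^L\|\mathcal{T}'\|_F^2.
\]

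Next I would apply Lemma~\ref{lemma:each_slice} column-by-column: for each $j=1,\ldots,N_2$ the final estimator returns $\hat{\mathcal{T}}(:,j,:)$ satisfying the per-slice inflation bound. Choosing $m=\delta M/N_2$ large enough so that $1+\frac{r\mu(\hat{\mathcal{U}})\beta}{m(1-\gamma)^2}\leq\frac{5}{4}$ (verified via the sampling-budget analysis in Appendix~C, as indicated in the text preceding the lemma), and summing over $j$, gives $\|\hat{\mathcal{T}}-\mathcal{T}'\|_F^2\leq\frac{5}{4}\|\mathcal{T}'-\mathcal{P}_{\hat{\mathcal{U}}}(\mathcal{T}')\|_F^2$. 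Chaining this with the subspace-approximation bound above produces the claimed inequality. A union bound over the failure events of Lemma~\ref{error_allround} ($\leq\rho$), Lemma~\ref{lemma:sampling_budget} ($\leq 6\rho$), and Lemma~\ref{lemma:each_slice} ($\leq 2\rho$) yields the asserted $1-9\rho$ success probability.

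The main obstacle I anticipate is noise bookkeeping: $\mathcal{T}'=\mathcal{T}+\mathcal{N}$ is not itself low-tubal-rank, whereas the clean tensor $\mathcal{T}$ is what satisfies the incoherence and the uniform bound $\sum_j|\breve{\mathcal{V}}^T(i,j,k)|^2\leq\xi_0$ that feed into Lemmas~\ref{lemma_singleround} and~\ref{lemma:sampling_budget}. One must verify that the adaptive sampling probabilities $\hat{p}_j$ computed from noisy fingerprints still approximate the idealized residual-energy distribution on the clean $\mathcal{T}$ well enough to inherit the $c=\frac{5}{2}$ guarantee, and that the final bound legitimately re-expresses in terms of $\mathcal{T}_r'$ (the best rank-$r$ approximation of the noisy $\mathcal{T}'$) rather than $\mathcal{T}_r$. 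Equally delicate is ensuring the constants in Lemma~\ref{lemma:sampling_budget} absorb a union bound across all $L$ adaptive rounds without spoiling the $O(Nr\log^2 N)$ sample-complexity order claimed in the preceding remark.
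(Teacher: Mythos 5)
Your proposal follows essentially the same route as the paper: Lemma \ref{lemma:sampling_budget} supplies $c=\tfrac{5}{2}$ so that Lemma \ref{error_allround} applies with $s=\tfrac{5Lr\xi_0}{2\rho\epsilon}$, Lemma \ref{lemma:each_slice} summed over the $N_2$ columns gives the $\tfrac{5}{4}$ inflation factor, and the union bound $\rho+6\rho+2\rho$ yields the $1-9\rho$ success probability. The noise-bookkeeping issue you flag (the intermediate lemmas are stated for the clean low-tubal-rank $\mathcal{T}$ while the conclusion is phrased in terms of $\mathcal{T}'$ and $\mathcal{T}_r'$) is genuine, but the paper leaves it equally unaddressed, so your argument matches the paper's own at its level of rigor.
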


   While Lemma \ref{lemma_ap} bounds the difference between $\hat{\mathcal{T}}$ and $\mathcal{T}'$, we then bound the difference between $\hat{\mathcal{T}}$ and $\mathcal{T}$, which measures the recovery error of Algorithm \ref{alg:puncturing}. This error consists of two parts: the first term measures the performance of our estimation process, and the second term is essentially $||\mathcal{N}_{\Omega}||_F^2$ which measures the effect of noise in the samples $\mathcal{Y}$. {Combining these results we have the following main Theorem}.

\begin{theorem}\label{recovery_error}
   Under the partial observation model $\mathcal{P}_{\Omega}(\mathcal{T} + \mathcal{N})$, where $\mathcal{T} \in \mathbb{R}^{N_1 \times N_2 \times N_3}$, $\mathcal{N}(i,j,k) \sim N(0,\sigma^2)$. Assume that $\mathcal{T}$ has tubal-rank $r$, $\sum_{j=1}^{N_2} |\breve{\mc{V}}^{T}(i,j,k)|^2 \leq \xi_0, \forall i$, and  tensor-column incoherence $\mu(\mathcal{U}) \leq \mu_0$. Then for $\rho, \epsilon \in (0,1)$, $L = \lceil \log_{1/\epsilon}(N_1 N_2N_3) \rceil$ and 
   \begin{eqnarray}M &=& \max \left\{\frac{5N_2 L^2 r \xi_0}{2(1-\delta) \rho \epsilon},\right.\nonumber\\
    && \left.CrN_2 (\mu_0\log(N_2/\rho)\log(rN_2/\rho) + L^2 /(\rho \epsilon)) \right\},
    \end{eqnarray}with probability $\geq 1 - 9\rho$, there exist two constants $c_1,c_2$ such that the estimation error of Algorithm 1 obeys,
   \begin{equation}
   || \mathcal{T} - \hat{\mathcal{T}}||_F^2 \leq \frac{c_1}{N_1N_2N_3} ||\mathcal{T}||_F^2 + c_2(MN_3 + \sqrt{8MN_3})\sigma^2. %||\mathcal{N}_{\Omega}||_F^2.
   \end{equation}
   %where setting $s= \frac{(1-\delta)M}{N_2L} = \frac{5Lr}{2\rho \epsilon}$ determines the constant $C$ in (\ref{sampling_number}).
   \end{theorem}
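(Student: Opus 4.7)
The plan is to reduce Theorem \ref{recovery_error} to Lemma \ref{lemma_ap} via a triangle-inequality decomposition, combined with a chi-squared tail bound for the Gaussian noise restricted to $\Omega$. First I would introduce the surrogate $\mathcal{T}' = \mathcal{T} + \mathcal{P}_{\Omega}(\mathcal{N})$, which is the object the algorithm effectively reconstructs, since only the sampled noise values enter the slice estimator $\hat{\mathcal{T}}(:,j,:) = \hat{\mathcal{U}} \ast (\hat{\mathcal{U}}_{\Omega_j}^T \ast \hat{\mathcal{U}}_{\Omega_j})^{-1} \ast \hat{\mathcal{U}}_{\Omega_j}^T \ast (\mathcal{T}(\Omega_j,j,:) + \mathcal{N}(\Omega_j,j,:))$. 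Because $\mathcal{T}$ has tubal-rank $r$ with $\mu(\mathcal{U}) \leq \mu_0$, and the choices of $M$ and $L = \lceil \log_{1/\epsilon}(N_1 N_2 N_3) \rceil$ satisfy the hypotheses of Lemma \ref{lemma_ap}, I would invoke that lemma to bound $\|\hat{\mathcal{T}} - \mathcal{T}'\|_F^2$ by $\frac{5}{4}\bigl(\frac{1}{1-\epsilon} \|\mathcal{T}' - \mathcal{T}'_r\|_F^2 + \epsilon^L \|\mathcal{T}'\|_F^2\bigr)$.

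The second step is to simplify the two right-hand terms and pass from $\mathcal{T}'$ back to $\mathcal{T}$. Since $\mathcal{T}$ itself has tubal-rank $r$, it is an admissible rank-$r$ competitor for $\mathcal{T}'$, so $\|\mathcal{T}' - \mathcal{T}'_r\|_F^2 \leq \|\mathcal{T}' - \mathcal{T}\|_F^2 = \|\mathcal{P}_{\Omega}(\mathcal{N})\|_F^2$. The elementary inequality $\|\mathcal{T}'\|_F^2 \leq 2\|\mathcal{T}\|_F^2 + 2\|\mathcal{P}_{\Omega}(\mathcal{N})\|_F^2$ is immediate, and the choice of $L$ forces $\epsilon^L \leq (N_1 N_2 N_3)^{-1}$, which produces precisely the first term in the theorem. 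A second triangle inequality $\|\hat{\mathcal{T}} - \mathcal{T}\|_F^2 \leq 2\|\hat{\mathcal{T}} - \mathcal{T}'\|_F^2 + 2\|\mathcal{P}_{\Omega}(\mathcal{N})\|_F^2$, followed by collecting constants, yields
\begin{equation}
\|\hat{\mathcal{T}} - \mathcal{T}\|_F^2 \leq \frac{c_1}{N_1 N_2 N_3}\|\mathcal{T}\|_F^2 + c_2 \|\mathcal{P}_{\Omega}(\mathcal{N})\|_F^2
\end{equation}
for absolute constants depending only on $\epsilon$.

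The last step is to control $\|\mathcal{P}_{\Omega}(\mathcal{N})\|_F^2$ via chi-squared concentration: its $M N_3$ nonzero entries are i.i.d.\ $N(0,\sigma^2)$, so $\|\mathcal{P}_{\Omega}(\mathcal{N})\|_F^2/\sigma^2 \sim \chi^2_{M N_3}$, and the Laurent--Massart tail inequality delivers $\|\mathcal{P}_{\Omega}(\mathcal{N})\|_F^2 \leq (M N_3 + \sqrt{8 M N_3})\sigma^2$ with at least constant probability, any subleading $O(1)$ additive terms being absorbed into $c_2$. A final union bound over this deviation event and the $9\rho$ failure probability of Lemma \ref{lemma_ap} closes the argument. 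The one delicate point I anticipate is justifying that Lemma \ref{lemma_ap} transfers cleanly from $\mathcal{T}$ to the surrogate $\mathcal{T}'$: one must verify that the probability estimates $\hat{p}_j$ produced from noisy first-pass samples still lie inside the bracket in (\ref{sampling_probability}) with essentially the same constants, so that $\hat{\mathcal{U}}$ continues to track the column space of $\mathcal{T}$ rather than being corrupted by the injected noise. Beyond this bookkeeping, the argument is just triangle inequalities and standard Gaussian concentration.
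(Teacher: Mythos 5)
Your proposal follows essentially the same route as the paper's proof: the same surrogate $\mathcal{T}' = \mathcal{T} + \mathcal{N}_{\Omega}$, the same triangle-inequality split $\|\hat{\mathcal{T}}-\mathcal{T}\|_F^2 \leq 2\|\hat{\mathcal{T}}-\mathcal{T}'\|_F^2 + 2\|\mathcal{N}_{\Omega}\|_F^2$, the same invocation of Lemma \ref{lemma_ap} with $\|\mathcal{T}'-\mathcal{T}'_r\|_F \leq \|\mathcal{T}'-\mathcal{T}\|_F$ via the best rank-$r$ approximation, the same use of $\epsilon^L \leq (N_1N_2N_3)^{-1}$, and the same chi-squared tail bound $(MN_3+\sqrt{8MN_3})\sigma^2$ on the sampled noise (which the paper attributes to \cite{Candes2010}). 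The delicate point you flag about $\hat{p}_j$ under noisy first-pass samples is real but is also left implicit in the paper's own treatment.
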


As a consequence of this theorem, for example, assuming that the $\ell_2$-norm of each fingerprint is approximately the same, say $C$, then our algorithm guarantees that the recovery error of each fingerprint in $\ell_2$-norm will be bounded by $\sqrt{\frac{c_1 C^2}{N_1N_2N_3} + \frac{c_2(MN_3 + \sqrt{8MN_3})\sigma^2}{N_1N_2}}$ and the relative error is bounded by $\sqrt{\frac{c_1}{N_1N_2N_3} + \frac{c_2(MN_3 + \sqrt{8MN_3})\sigma^2}{N_1N_2C^2}}$. Since $N_1,N_2$ are relatively large, $N_3 \ll \min(N_1,N_2)$ and $M$ is provided in Lemma \ref{lemma:sampling_budget} to be in order of $O(N r\log^2 N)$ with $N = \max(N_1,N_2)$, therefore, the relative error is small.

\section{{Performance Evaluation}}

  We are interested in two kinds of performance: recovery error and localization error. Varying the sampling rate as $10\% \sim 90\%$, we quantify the recovery error in terms of {\em normalized square of error (NSE)} for entries that are not sampled, i.e., recovery error for set $\Omega^{c}$. The NSE is defined as:
  \begin{equation}
  NSE = \frac{\sum_{(i,j)\in \Omega^{c}} ||\hat{\mathcal{T}}(i,j,:) - \mathcal{T}(i,j,:) ||_F^2 }{ \sum_{(i,j)\in \Omega^{c}} ||\mathcal{T}(i,j,:)||_F^2 },
  \end{equation}
  where $\hat{\mathcal{T}}$ is the estimated tensor, $\Omega^{c}$ is the complement of set $\Omega$.

  In the simulations, we uniformly select $500$ testing points within the selected region and then using the classic localization schemes to perform localization estimation. We measure the localization error as the Euclidean distance between the estimated location and the actual location of the testing point, i.e.,
  $d_e = \sqrt{(\hat{x}_1 - x_1)^2 + (\hat{x}_2 - x_2)^2}$.

{For tensor recovery, we consider three algorithms, tensor completion (TC) \emph{under uniformly random tubal-sampling} and using the algorithm proposed in \cite{Shuchin,Shuchin2015PAMI}, using the \emph{face-wise} matrix completion (MC) algorithm in \cite{Kong2013INFOCOM}, and tensor completion via matricization or flattening (MC-flat) \cite{Yamada} under \emph{uniform element-wise sampling} of the 3D tensor, using the AltMin algorithm for matrix completion \cite{AltMin_ACM}. We subsequently use the completed RSS map for localization and compare the error in location estimates.}
 % In this section, we conduct extensive simulations to test the locality of the fingerprint space, the recovery performance of our puncturing approach, and the localization performance of stardard localization schemes over the rebuilt fingerprint space.

\subsection{Experiment Setup - Model-based Data}

  \begin{figure}[t]
  \centering
  \includegraphics[trim=0in .2in 1.6in 0in, clip,width=0.38\textwidth]{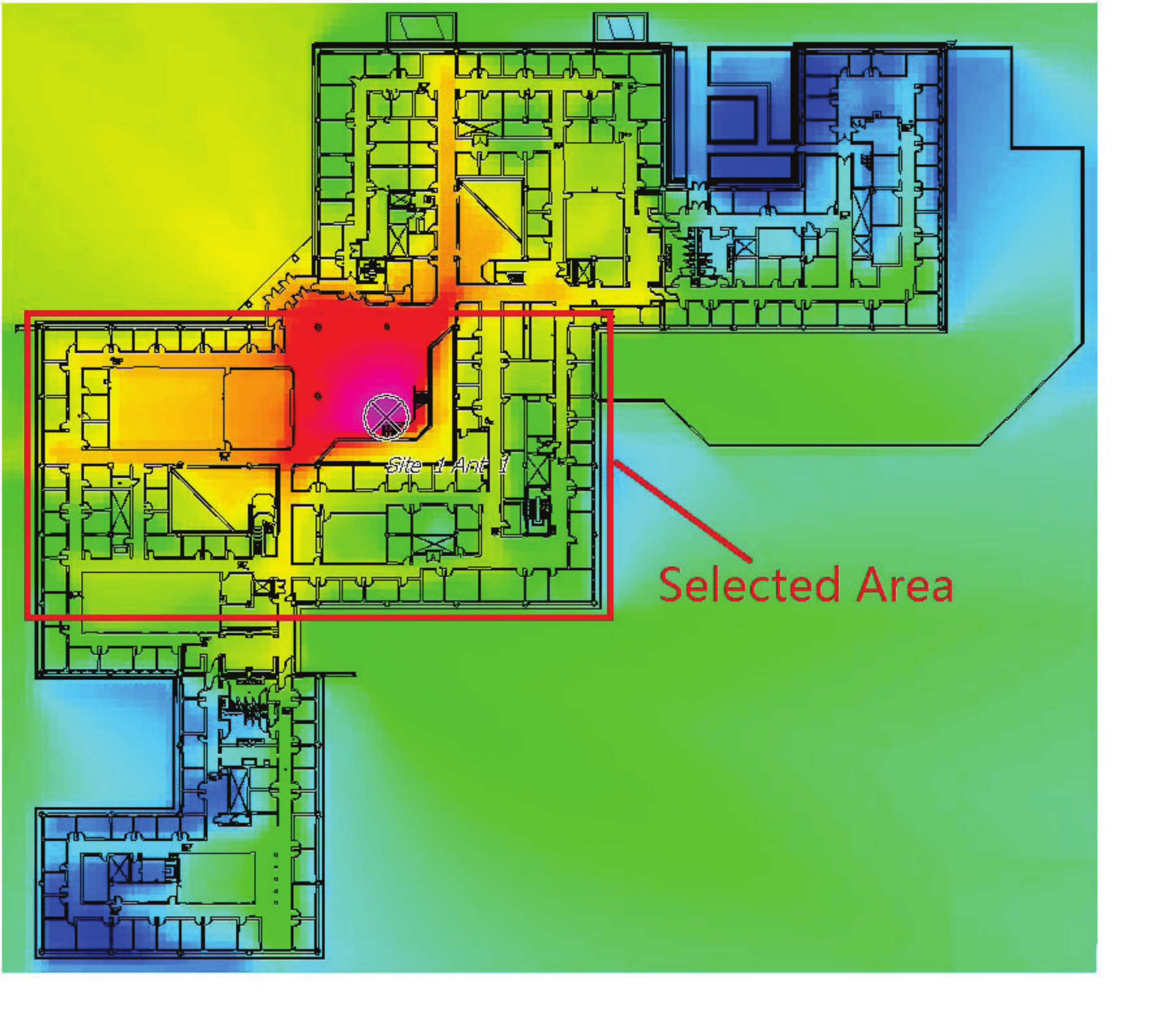}
  \includegraphics[trim=1in .6in 1in 0in, clip,width=0.1\textwidth]{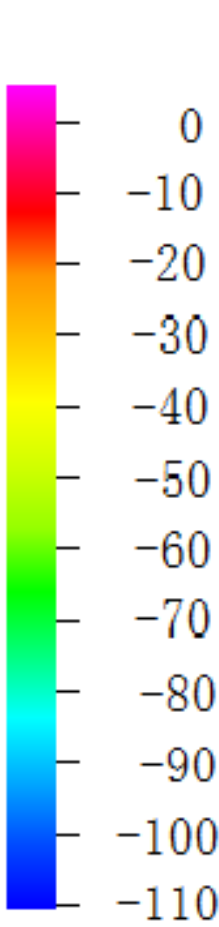}
  \caption{We select a rectangle area within an office building for simulations. RSS is measured in dBm.}\label{fig:scenario}
  \end{figure}

  %\textbf{Ground truth}:

  We select a region of $47.5$m $\times$ $59.7$m in a real office building, as shown in Fig. \ref{fig:scenario}. It is divided into a $476 \times 598$ grid map. There are $15$ access points randomly deployed within this region.
  The indoor radio channel is characterized by multi-path propagation with dominant propagation phenomena: the shadowing of walls, wave guiding effects in corridors due to multiple reflections, and diffractions around vertical wedges.
  {The ray-tracing based model \cite{SRT,Ray_tracing_TWC2009} is adopted, which considers all these effects leading to highly accurate prediction results. Further, the model parameters were found by supervised learning with the real collected measurements using a professional software \cite{SRT}.}
  We generated a $476 \times 598 \times 15$ RSS tensor as the {\em ground truth} for our simulations. Note that the RSS values are measured in dBm. For example, the RSS radio map for the $5$-th and $15$-th access points are shown in Fig. \ref{fig:rss-map}.

  \begin{figure}[t]
  \centering
  \includegraphics[width=0.48\textwidth]{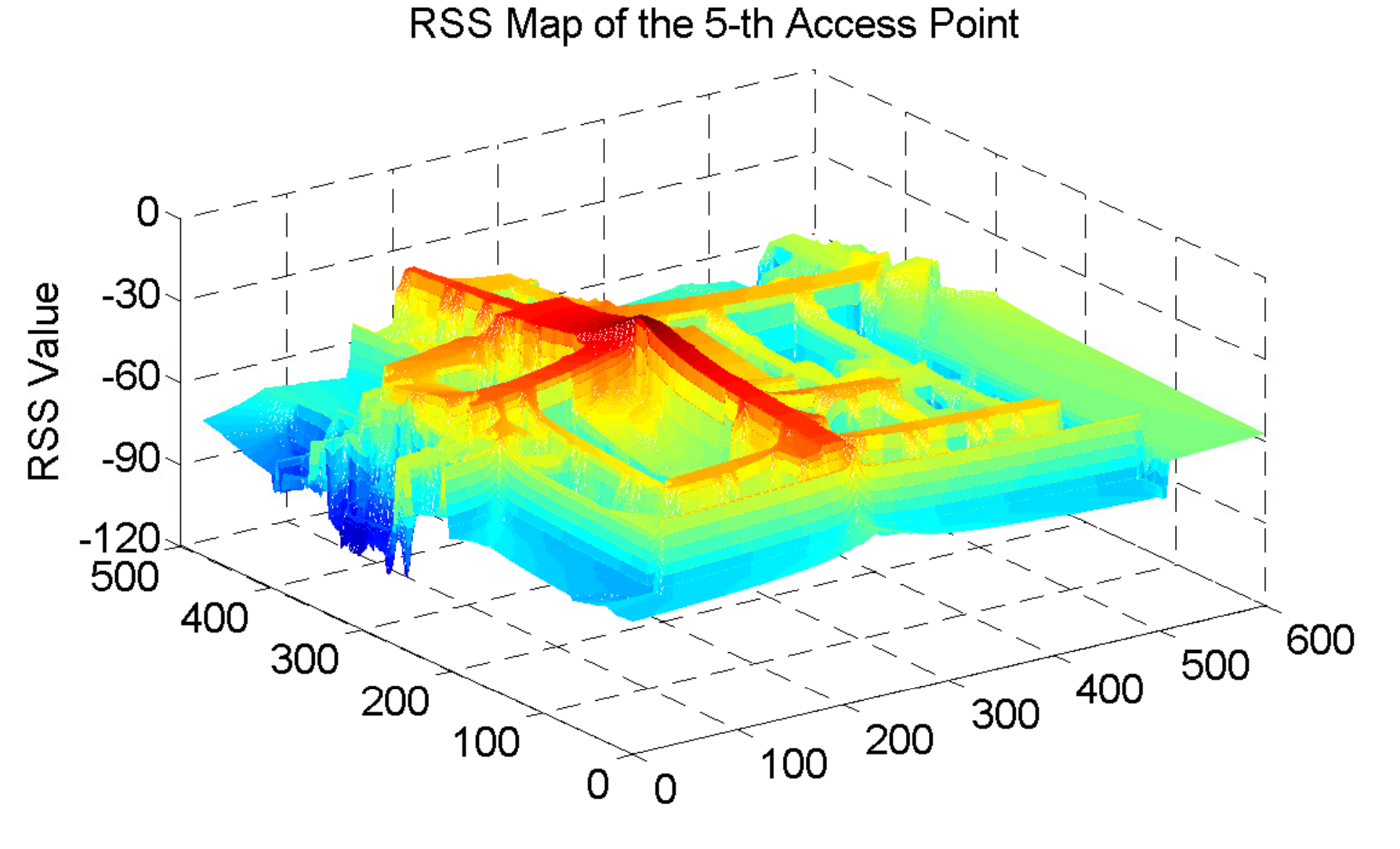}
  \includegraphics[width=0.48\textwidth]{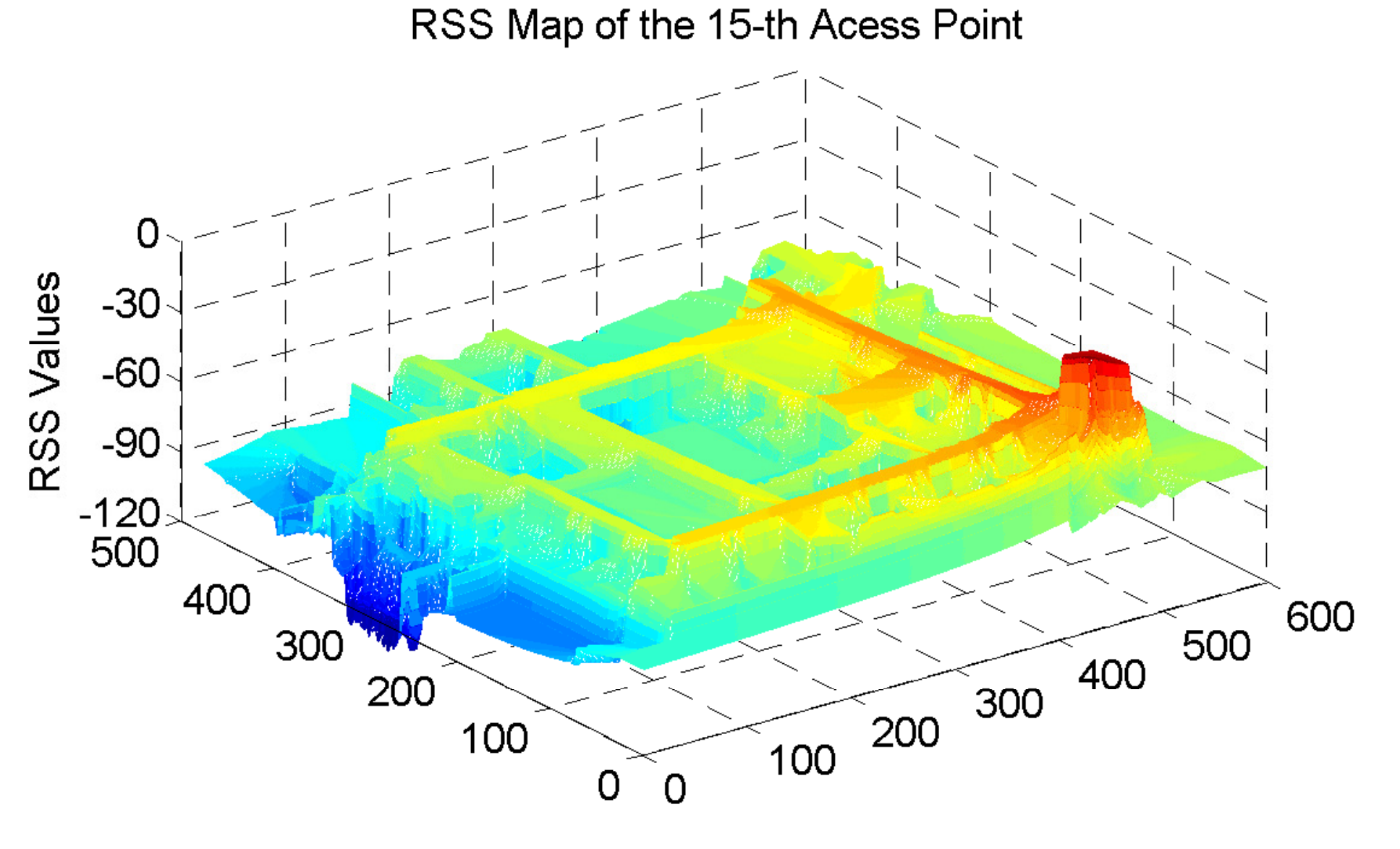}
  \caption{The RSS radio map of the $5$-th and $15$-th access points.}\label{fig:rss-map}
  \end{figure}

%\subsubsection{Radio Map Recovery Performance}\label{sect:recovery}

\noindent {\em Radio Map Recovery Performance: }  Fig. \ref{fig:recovery} shows the RSS tensor recovery performance for varying sampling rate. Compared schemes are matrix completion and tensor completion via uniform sampling, and adaptive sampling with allocation ratio $\delta = 1/4$ and $\delta = 1/2$. We find that all tensor approaches are better than matrix completion, this is because tensor exploits the cross correlations among access points while matrix completion only takes advantage of correlation within each access point. Both AS schemes outperform tensor completion via uniform sampling since adaptivity can guide the sampling process to concentrate on more informative entries. Allocating equal sampling budget for the $1$st-pass and the $2$nd-pass gives better performance than uneven allocation. This shows that the $1$st-pass and the $2$nd-pass have equal importance. The proposed scheme (AS with $\delta = 1/2$) rebuilds a fingerprint data with $5\%$ error using less than $30\%$ samples. %, i.e., the adaptive puncturing scheme reduces the sampling budget by $70\%$.

  \begin{figure}[t]
  \centering
  \includegraphics[width=0.5\textwidth]{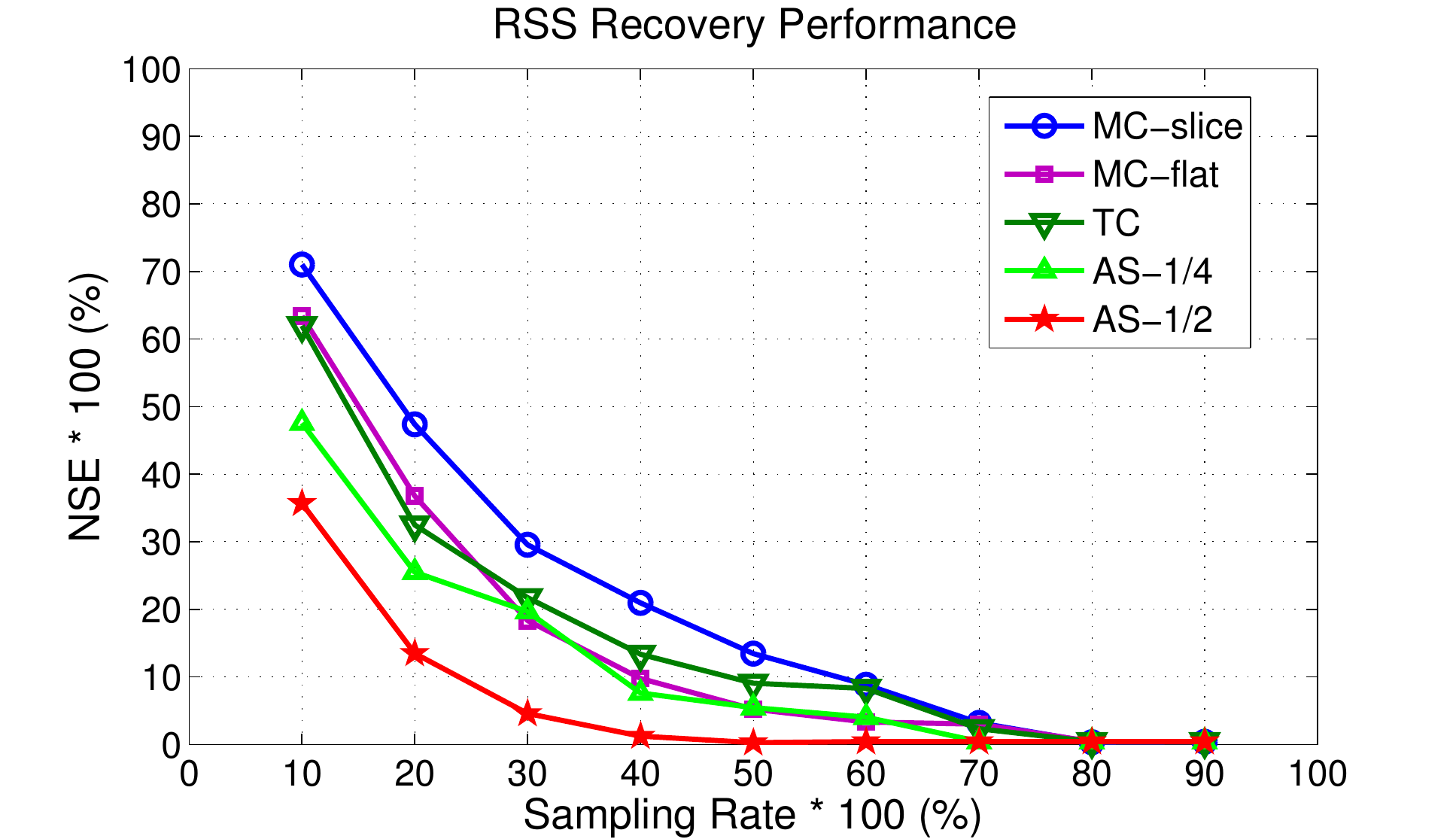}
  \caption{Tensor recovery for varying sampling rate. MC denotes matrix completion via uniform sampling, TC denotes tensor completion via uniform sampling, while AS-$1/4$ and AS-$1/2$ denote our adaptive sampling scheme with allocation ratios $\delta = 1/4$ and $\delta = 1/2$, respectively. {MC-flat denotes completion by \emph{flattening} the 3-D data into a matrix of locations $\times$ APs and using matrix completion, an approach followed in \cite{Nikitaki1}.}}\label{fig:recovery}
  \end{figure}

{\subsection{Experiment Setup - Real data}
We collected a WiFi RSS data set in the same office. The data set contains $89$ selected locations and $31$ access points. Since the locations are not exactly on a grid, we set the grid size to be $3$m $\times$ $5$m, and apply the KNN method to extract a full third-order tensor as the ground truth. To be specific, for each grid point, we set its RSS vector by averaging the RSS vectors from the nearest three ($k=3$) locations. The ground truth tensor has dimension $10 \times 10 \times 31$. This tensor serves as a complement to our model-generated data while in the next section, we want to test the localization performance at a finer granularity and covering the whole region of interest.}

\noindent {\em Radio Map Recovery Performance: } Fig. \ref{fig:real_data_recovery} shows the RSS tensor recovery performance for the real-world data set. First, compared with Fig. \ref{fig:recovery}, we see that the recovery performance on real-world data is consistent with that of simulated data.
  Second, for real-world data set, tensor model is superior to matrix model. In our case, a major ingredient for the recovery improvement may be the large number of access points (i.e., $31$), compared with the dimension of the grid (i.e., $10 x 10$). Third, as expected, the propose adaptive scheme achieves better recovery performance.

  \begin{figure}[t]
  \centering
  \includegraphics[width=0.5\textwidth]{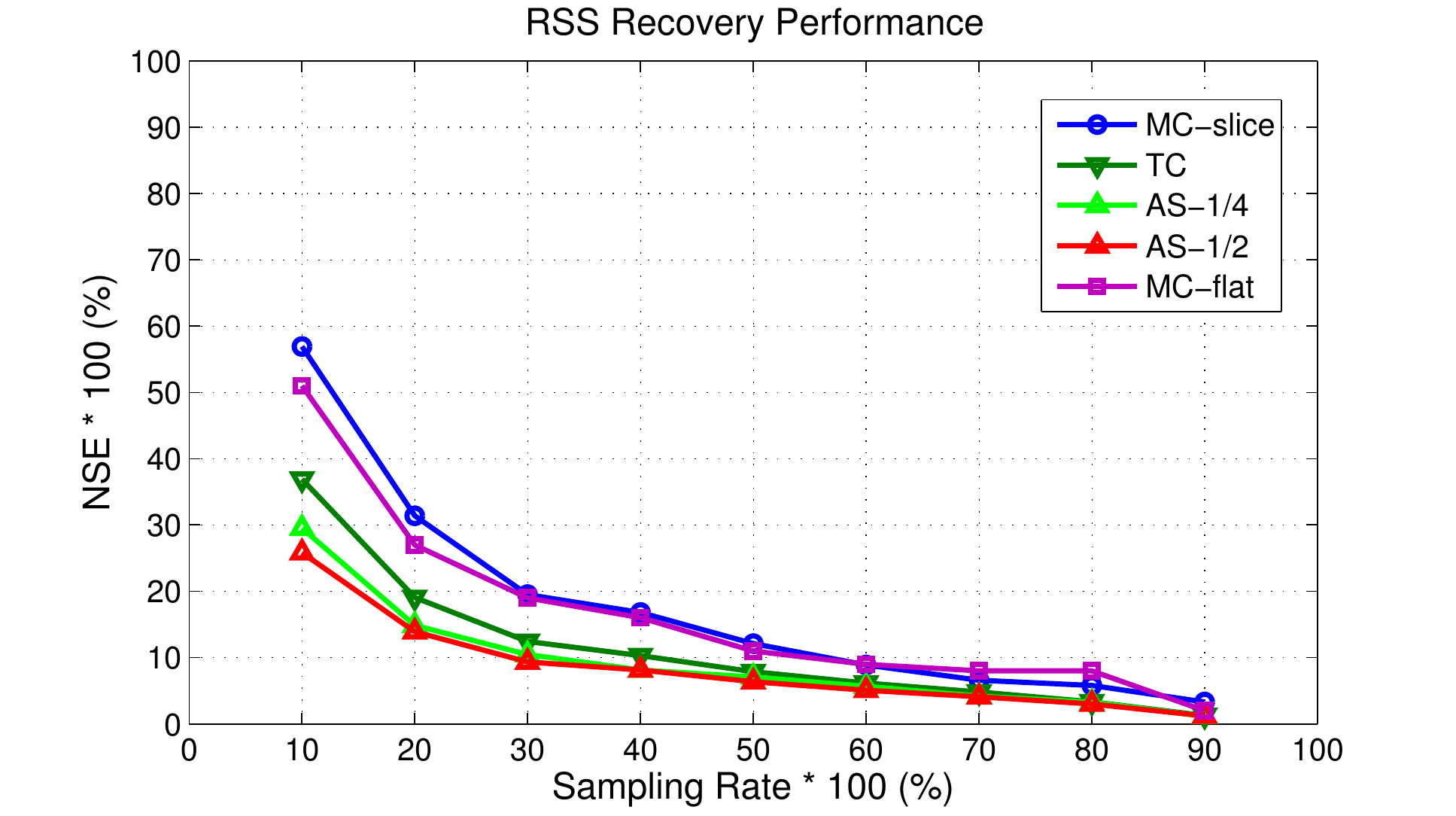}
  \caption{Tensor recovery in the real-world data set. The same legend as in Fig. \ref{fig:recovery}}\label{fig:real_data_recovery}
  \end{figure}

\subsection{Localization Performance}

  An important factor that influences the localization error is the measurement noise. Note that for site survey, the engineer stays for a while to obtain a stable fingerprint. Therefore, we only consider measurement noise in the query fingerprint. The noise may come from the measuring process or the dynamics in the environment. High SNR and low SNR cases are considered. For the high SNR case we add $3$dBm Gaussian noise, while $10$dBm Gaussian noise is added for the low SNR case.

  We choose three representative localization techniques for comparison, namely, weighted KNN, the kernel approach, and support vector machine (SVM).
  KNN (weighted KNN) is the most widely used technique since it is simple and is reported to have good performance in indoor localization systems \cite{RADAR2000}\cite{Yang2012MobiCom}. The kernel approach is an improved scheme over weighted KNN, which can be regarded as the basic principle of all machine learning-based localization approaches. The kernel function encapsulates the complicated relationship between RF fingerprints and physical positions. SVM is an efficient machine learning method widely used in fingerprint-based indoor localization. SVM uses kernel functions and tries to learn the complicated relationship between RF fingerprints and physical positions by regression.

  In addition to localization based on the estimated tensor $\hat{\mathcal{T}}$, we also test the above three localization techniques on the samples $\mathcal{Y}$ (without doing any reconstruction or estimation). Let DL denote {\em direct localization (DL} on uniformly sampled fingerprints, while DL-$1/4$ and DL-$1/2$ denote direct localization on adaptively sampled fingerprints with allocation ratios $\delta = 1/4$ and $\delta = 1/2$, respectively.

  %\textbf{1. (Weighted) KNN}:	
\subsubsection{Weighted KNN}

\begin{figure}[t]
\subfigure{\includegraphics[width=0.49\textwidth]{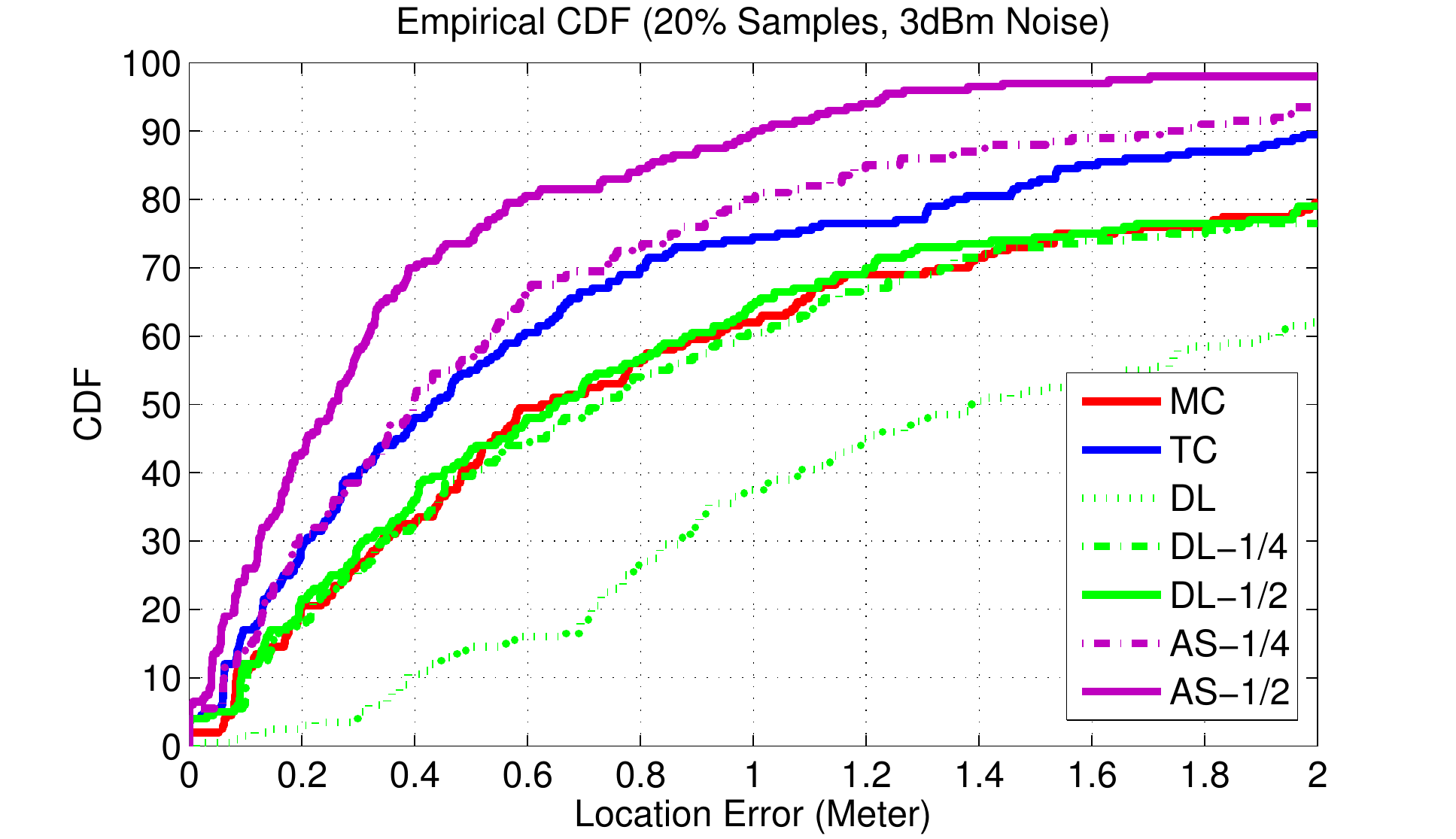}}
\subfigure{\includegraphics[width=0.49\textwidth]{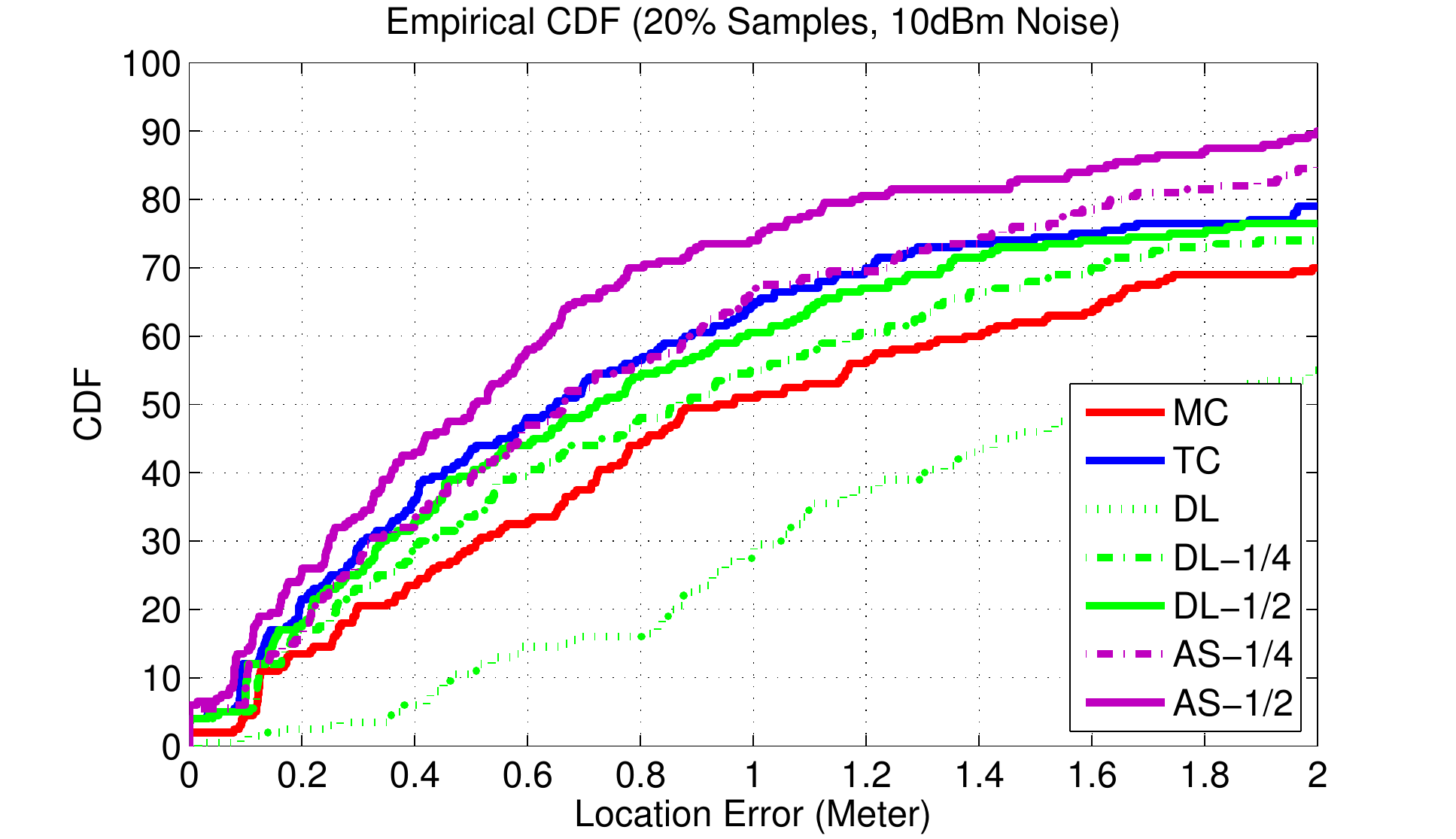}}
\caption{The empirical CDFs of localization error of KNN with $20\%$ samples for $3$dBm noise case and $10$dBm noise case. MC denotes matrix completion via uniform sampling, TC denotes tensor completion via uniform sampling, DL denotes direct localization on uniformly sampled fingerprints, while AS denotes our adaptive sampling scheme. For adaptive DL and AS, there are two allocation ratios $\delta = 1/4$ and $\delta = 1/2$.}\label{fig:KNN}
\end{figure}

  Let $k$ be a fixed positive integer which are usually set to be $1,3,5,7,$ etc., consider a sampled fingerprint $r_p$ at reference point $p$. Find within the fingerprint database $\mathcal{F}$ the reference point locations $p_1,p_2,...,p_k$ whose fingerprints are nearest to $r_p$. Then, estimate the location $p$ by weighted averaging $p_1,p_2,...,p_k$ as follows:
   \begin{equation}\label{knn_form}
   \hat{p} = %\frac{\sum_{i=1}^{k} \frac{p_i}{d(r_{p_i},r_p) + d_0} }{\sum_{i=1}^{k} \frac{1}{d(r_{p_i},r_p) + d_0}} =
   \sum_{i=1}^{k}p_i~ \frac{ \frac{1}{d(r_{p_i},r_p) + d_0} }{\sum_{i=1}^{k} \frac{1}{d(r_{p_i},r_p) + d_0}},
   \end{equation}
   where $d(r_{p_i},r_p)$ is the Euclidean distance between the two fingerprints, and $d_0$ is a small real constant used to avoid division by zero. In the simulations, we find that $k=5$ and $d_0 = 0.01$dBm are the best since the reference points in our experiments are located on a grid map with each reference point having $4$ neighbors (plus itself leads to $k=5$). Note that we estimate the x-coordinate and y-coordinate separately.

   Fig. \ref{fig:KNN} shows the empirical CDFs of localization error of weighted KNN with $20\%$ samples for $3$ dBm noise and $10$ dBm noise. Besides the recovery schemes  TC and MC,  we also consider direct localization (DL) with samples drawn uniformly from the grid map and those output by the adaptive sampling approach. We find that the proposed adaptive sampling approach (both $\delta = 1/4$ and $\delta = 1/2$) dramatically outperforms other schemes. In the high SNR case, $97\%$ ($90\%$ respectively) of the reference points can be localized with error less than $2$m ($1$m), while $90\%$ ($95\%$ respectively) in the low SNR case. Such performance improvements may be explained by the performance of direct localization (uniform, adaptive with $\delta = 1/4$ and $\delta = 1/2$). Since in both high and low SNR cases, adaptive sampled RF fingerprints are more useful for localization than non-adaptive sampled RF fingerprints. To some extent, it shows that our adaptive sampling approach has identified entries that are highly informative.

   %\textbf{2. Kernel Approach}:
\subsubsection{Kernel Approach}
\begin{figure}[t]
\subfigure{\includegraphics[width=0.5\textwidth]{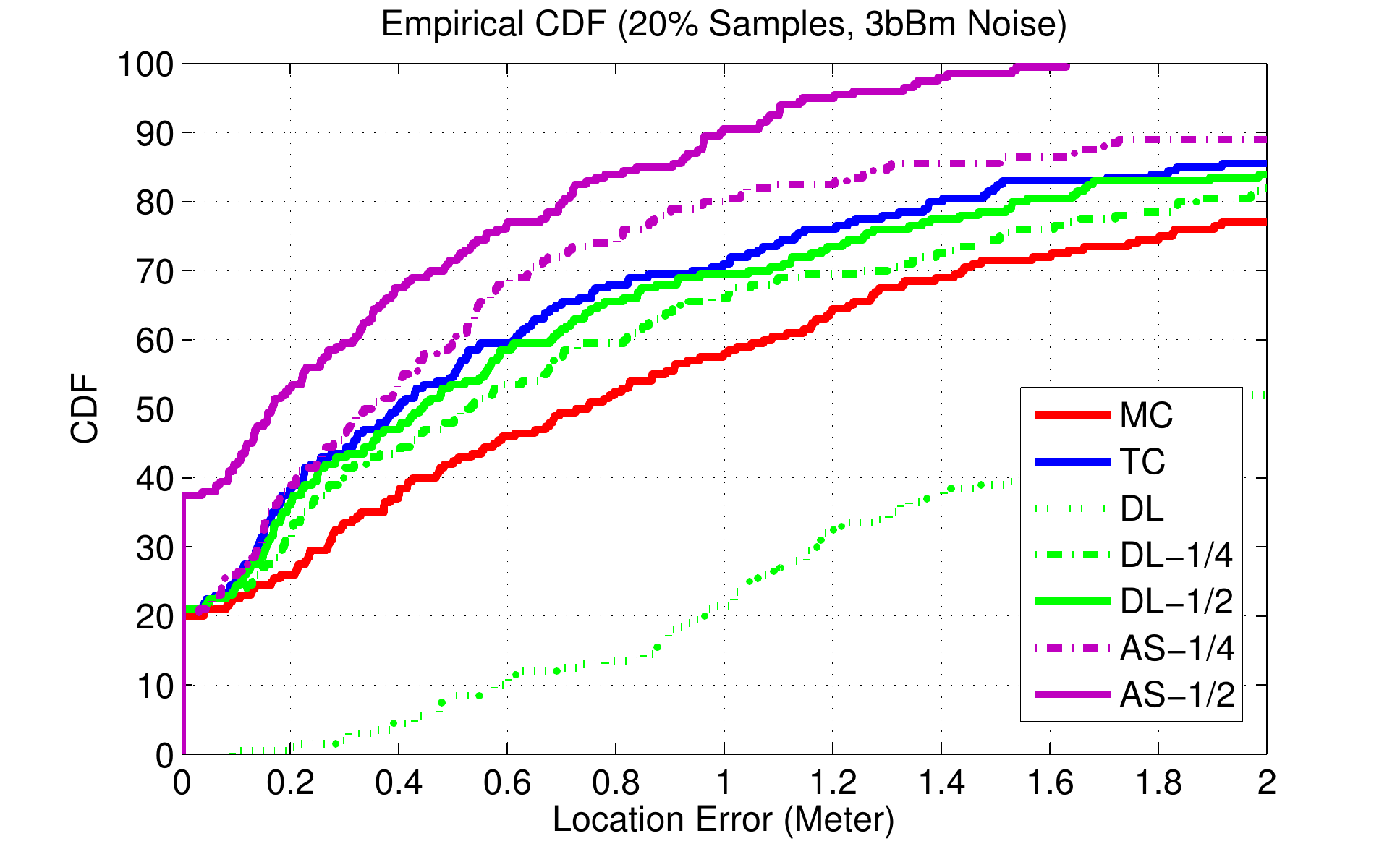}}
\subfigure{\includegraphics[width=0.5\textwidth]{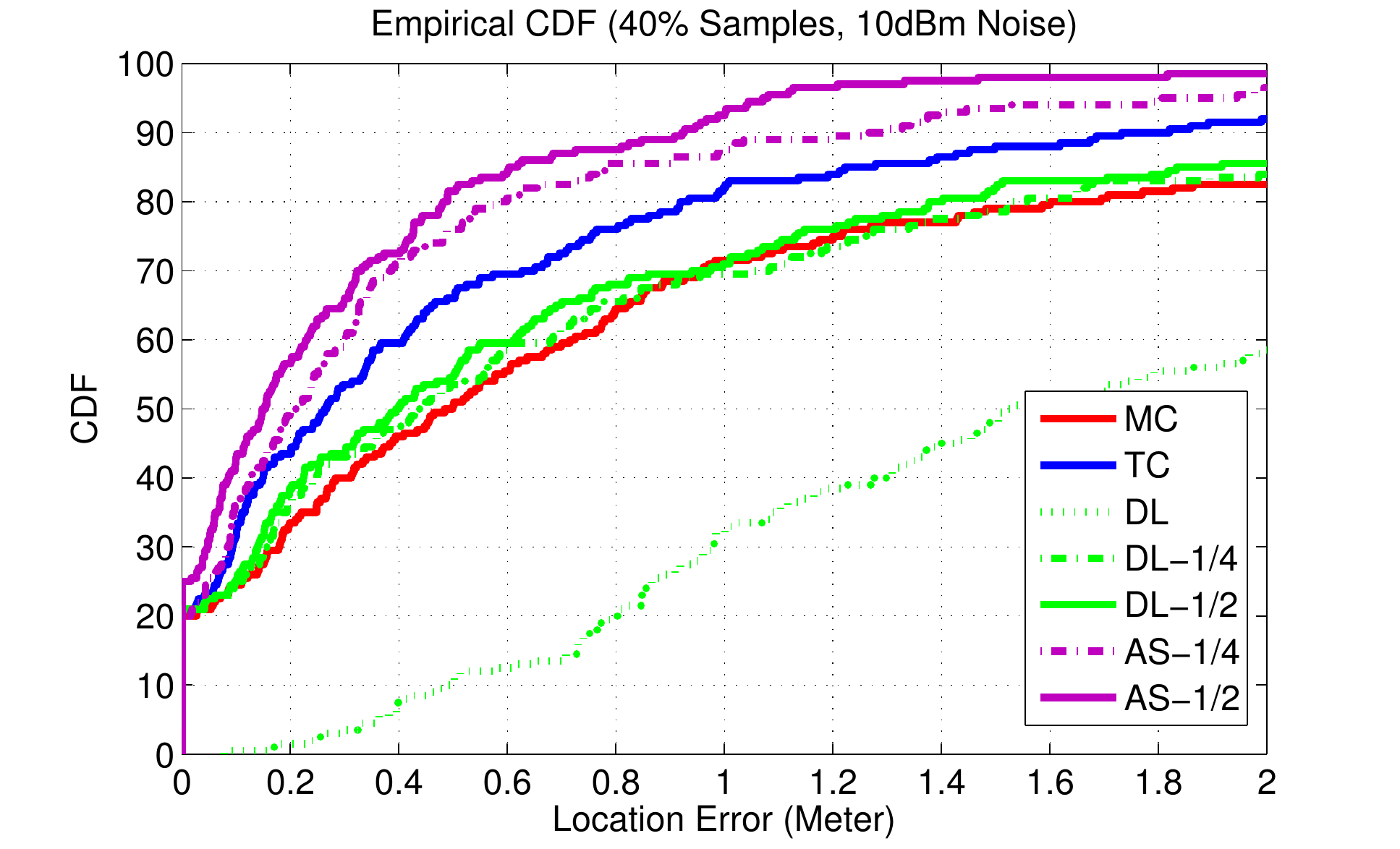}}
\caption{The empirical CDFs for the kernel approach with the same legend as Fig. \ref{fig:KNN}.}\label{fig:kernel}
\end{figure}

   We estimate the coordinates as $\hat{p}  = \sum_{i \in H(p)} \phi(r_{p_i}, r_p) p_i + \alpha_0$,
   where $H(p)$ denotes the set of reference points whose fingerprints have the smallest Euclidean distances to the query fingerrpint, $\phi(\cdot)$ is the kernel function which can be polynomial functions, Gaussian functions, and exponential functions. In the simulations, we set $\phi(\cdot)$ to be the square distance, the cardinality of $H(p)$ be $50$. Therefore, we have:
   \begin{equation}\label{kernel_location}
   \hat{p}  = \sum_{i \in H(p)} p_i \frac{ \frac{1}{d^2(r_{p_i},r_p) + d_0^2} }{\sum_{i=1}^{k} \frac{1}{d^2(r_{p_i},r_p) + d_0^2}}.
   \end{equation}
   We can see the strong similarity with the weighted KNN. A major difference is that the number of reference points used by the kernel method is much larger than the number of neighboring reference points used by weighted KNN.

   Fig. \ref{fig:kernel} shows the empirical CDFs of localization error of the kernel approach with $20\%$ samples for $3$dBm and $10$dBm noise. Similar to KNN, the proposed scheme outperforms all other schemes. There are two interesting things to notice. First, for the kernel approach, $38\%$ testing points has zero error for adaptive sampling with $\delta = 1/2$ and $3$dBm noise. Second, compared with Fig. \ref{fig:KNN}, the kernel approach is more robust to noise. Surprisingly, the kernel approach performs better in the $10$dBm noise case than in the $3$dBm noise case. The reason is that with lower SNR, the kernel function captures the location centroid of WiFi access points, i.e., the coordinates of access points. Because (\ref{kernel_location}) is dominant by nearby (relative to the user's location) access points, and with $50$ ($|H(p)|=50$) reference points rather than $k=5$ reference points, (\ref{kernel_location}) essentially captures the location of the dominating access points \cite{Kuo2011TMC}.

\subsubsection{SVM Approach}

   %\textbf{3. SVM}:
   Support vector machines (SVM) separate reference points using linear hyperplanes in the fingerprint space. The x-coordinate and y-coordinate are estimated separately.
   Let $(x_j,y_j)$ denote the coordinates of reference point $p_j$ and $r_{p_j}$ denote the fingerprint. For $x_j$, the training data $\{(x_j, y_j),r_{p_j}\}$ become labeled-pairs $\{\Delta_i,r_{p_i}\}$ where the input is $r_{p_i}$, and the output is $\Delta_i =1$ if $x_i = x_j $ and $\Delta_i =-1$ if $x_i \neq x_j $.
   SVM construct a classifier of the form $\Delta_i(r_{p_i}) = \text{sign}[\sum_{i=1}^{N_1N_2} w_i \Delta_i \phi(r_{p_i},r_{p_j}) + b]$, where $w_i$ are positive real constants and $b$ is a real constant. The kernel function $\phi(\cdot,\cdot)$ typically has the following choices: $\phi(r_{p_i},r_{p_j}) = r_{p_i}^T r_{p_j}$ (linear SVM); $\phi(r_{p_i},r_{p_j}) = (r_{p_i}^T r_{p_j} + 1)^d$ (polynomial SVM of degree $d$); $\phi(r_{p_i},r_{p_j}) = \exp\{ - ||r_{p_i} - r_{p_j}||_2^2 /\sigma^2\}$ (radius basis function kernel SVM); $\phi(r_{p_i},r_{p_j}) = \tanh[\kappa r_{p_i}^T r_{p_j} + \theta]$ (two-layer neural SVM), where $\sigma, \kappa, \theta$ are constants. We use linear SVM in the experiments, and according to the structural risk minimization principle \cite{SVM}, we seek the solution of the following optimization problem: $\min_{w,b,\xi} ~\frac{1}{2}||w||_2 + C \sum_{i=1}^{N} \xi_i$ such that
   \begin{equation}
\left\{
   \begin{aligned}
   w^T \cdot r_{p_i} + b &\geq 1 - \xi_i,~~\text{if}~~x_j = x_i, \\
   w^T \cdot r_{p_i} + b &\leq-1 + \xi_i,~~\text{if}~~x_j \neq x_i,\\
   \xi_i &~> 0.~~\\
   \end{aligned}
   \right.
   \end{equation}
   where $C$ is the penalty parameter of the error term and $\xi_i$ are introduced in case a separating hyperplane in this higher dimensional space does not exist. We use the SVM library \cite{Lib_svm}.

   Fig. \ref{fig:SVM} shows the empirical CDFs of localization error of the SVM approach with $20\%$ samples for $3$dBm and $10$ dBm noise. We can see that the performance of SVM is quite similar to the kernel approach in Fig. \ref{fig:kernel}. Notice that for adaptive sampling, the allocation ratio $\delta$ does not affect the localization error that much as in the KNN approach and the kernel approach. It seems that for high SNR case, we should allocate more sampling budget to the $1$st-pass sampling, while more sampling budget to the $2$nd-pass sampling for low SNR case. Since for high SNR, the $1$st-pass sampling is able to better locate more informative columns of $\mathcal{G}$ (essentially those more informative columns lies near WiFi access points), while for low SNR it is better to have an accurate estimation of the low-dimensional subspace which is the aim of the $2$nd-pass sampling.

\begin{figure}[t]
\subfigure{\includegraphics[width=0.48\textwidth]{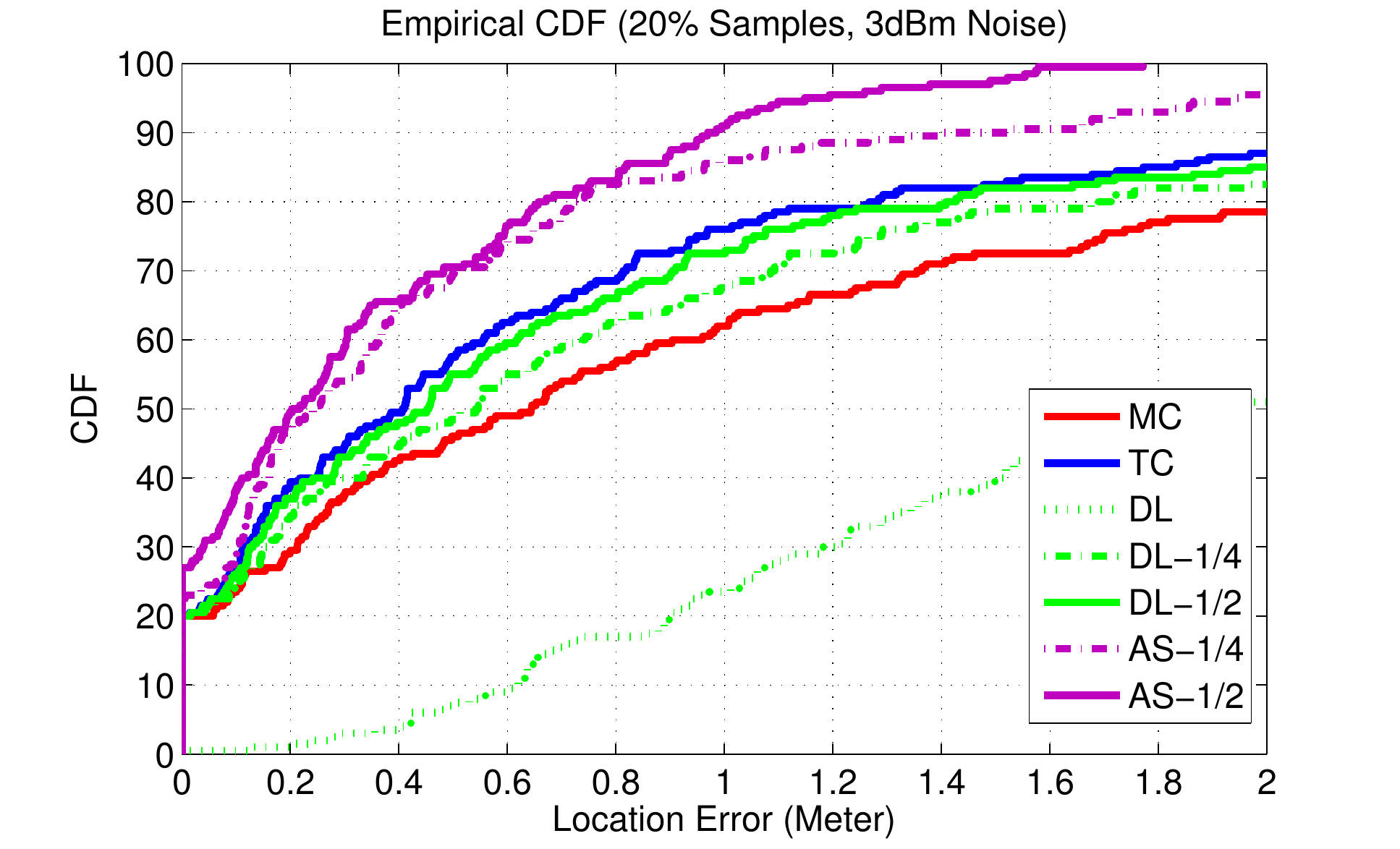}}
\subfigure{\includegraphics[width=0.48\textwidth]{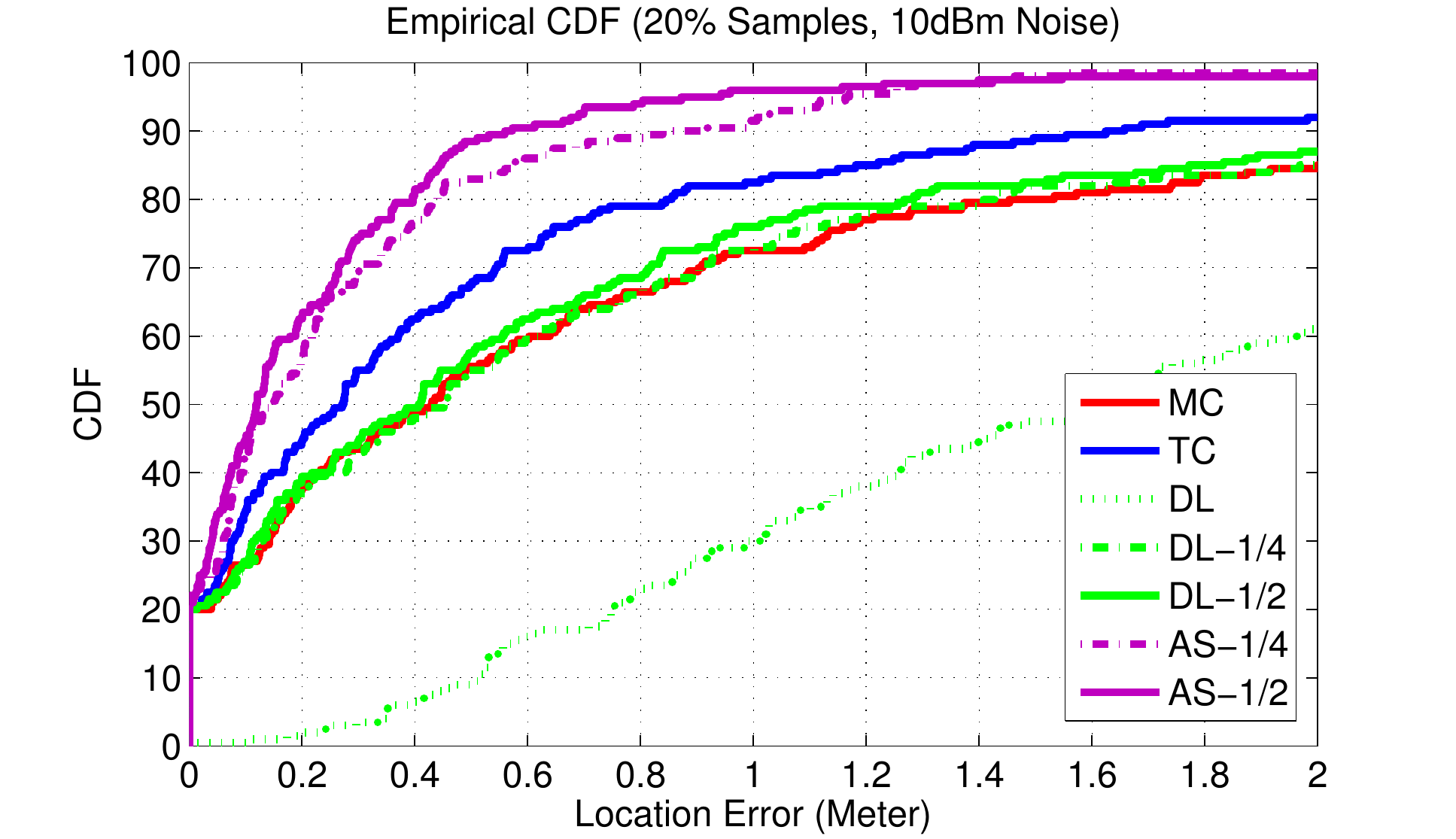}}
\caption{The empirical CDFs for SVM with the same legend as Fig. \ref{fig:KNN}.}\label{fig:SVM}
\end{figure}

\subsubsection{Reduction of Sampled Reference Points}
\begin{figure}[t]
\centering
\includegraphics[width=0.48\textwidth]{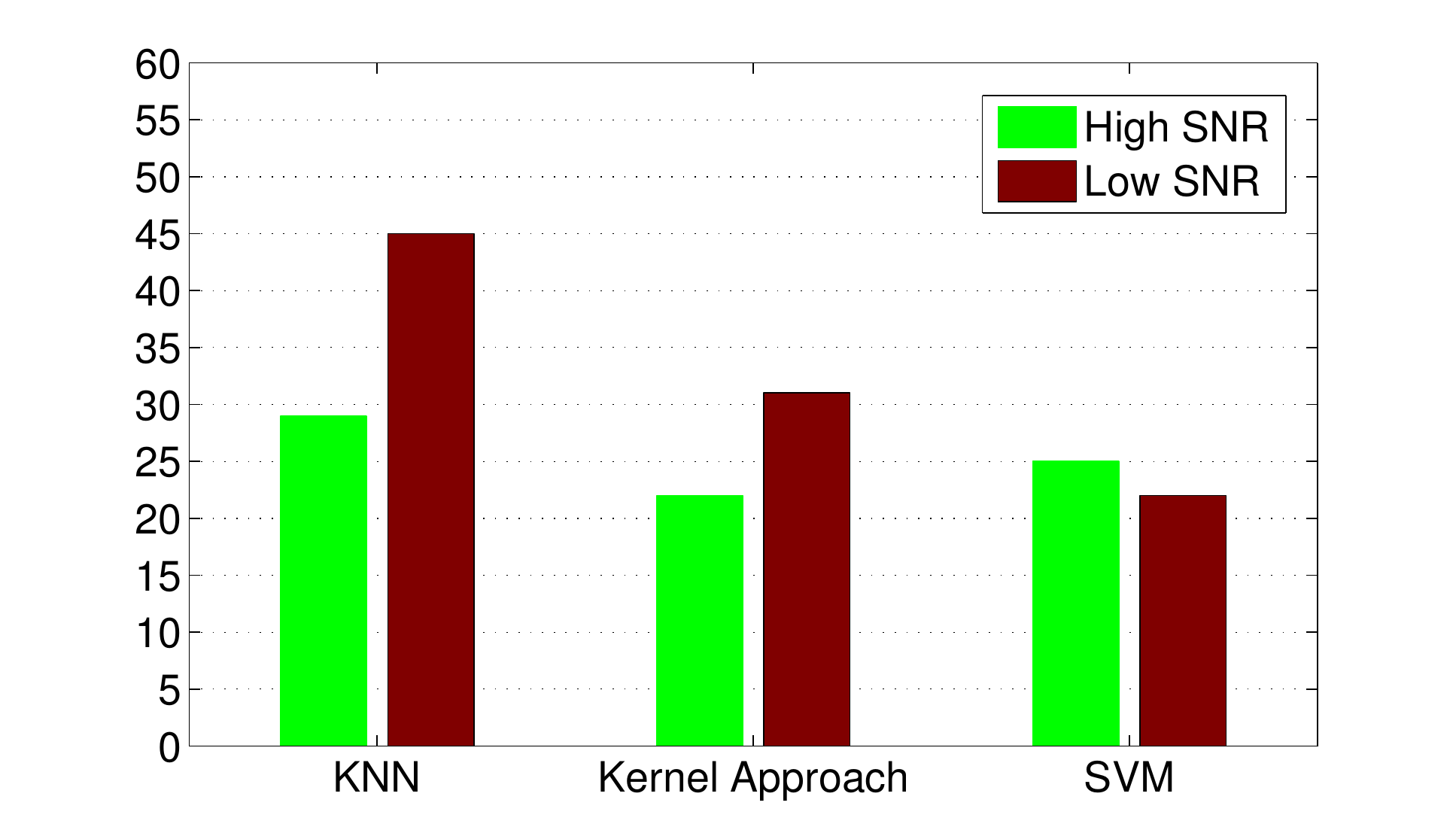}
\caption{The sampling budget needed for KNN, kernel approach and SVM.}\label{fig:budget}
\end{figure}

   We are interested in cutting down the sampling budget. We apply the adaptive sampling scheme with $\delta = 1/2$ and then run recovery-and-localization experiment. Through $20$ simulations, keeping these with $[94\% \sim 96\%]$-percentile localization error being $1$m, and calculate the average sampling rate. Fig. \ref{fig:budget} shows the results for both the high and low SNR cases. For high SNR, the kernel approach needs the least sampling budget which is $23\%$, while SVM needs $22\%$ for low SNR. Maintaining similar localization accuracy of KNN, the amount of samples required by the adaptive sampling approach is cut down by $71\% $ for the high SNR case and $55\%$ for the low SNR case.

\section{Conclusions}

  In this paper, an adaptive sampling approach is proposed to relieve the site survey burden of fingerprint-based indoor localization. It cuts down the sampling budget by $71\%$ for the high SNR case and $55\% $ for the low SNR case while maintaining a similar localization error performance of widely used localization schemes (KNN, the kernel approach and SVM). The performance gain comes from the basic observation that the RF fingerprints are highly correlated across space and across access points and it is possible to adaptively locate more informative entries. Since RF fingerprint calibration is tubal-sampling, we used low tubal-rank tensors to model RF fingerprints instead of low CP-rank, and proposed an algorithm for reconstruct the fingerprint database which adaptively sampling a subset of the reference points and then performing tensor completion. We show that the proposed scheme achieves near-optimal sampling complexity. Our results solve a major challenge of fingerprint-based indoor localization and we advocate its wide adoption in real-world systems.

\section*{Acknowledgements}
The authors would like to thank Rittwik Jana, Sarat Puthenpura, and N. K. Shankaranayanan at AT\&T Labs-Research, for helpful discussions on propagation models and localization used in this paper. Xiao-Yang Liu gratefully acknowledges financial support from China Scholarship Council.

This work was supported in part by the NSF CCF award 1319653, and NSFC  projects - 61373155, 91438121, and 61373156.

\bibliographystyle{IEEETran}

\begin{thebibliography}{0}
{\smallfont

\bibitem{Kilmer2013SIAM}
M.~Kilmer, K.~Braman, and N.~Hao, ``Third order tensors as operators on matrices: A theoretical and computational framework with applications in imaging," {\em SIAM Journal on Matrix Analysis and Applications}, vol. 34, no. 1, pp. 148-172, 2013.

\bibitem{Kang2010TMC}
S.~Kang, J.~Lee, H.~Jang, Y.~Lee, and J.~Song, ``A scalable and energy-efficient context monitoring framework for mobile personal sensor networks," {\em IEEE Transactions on Mobile Computing (TMC)}, vol. 9, no. 5, pp. 686-702, 2010.

\bibitem{Roy2012TMC}
N.~Roy, S.K.~Das, and C.~Julien, ``Resource-optimized quality-assured ambiguous context mediation in pervasive environments," {\em IEEE Transaction on Mobile Computing (TMC)}, vol. 11, pp. 218-229, 2012.



\bibitem{Dina2014NSDI}
F.~Adib, Z.~Kabelac, D.~Katabi, and R.C.~Miller, ``3D tracking via body radio reflections," {\em Usenix NSDI}, 2014

\bibitem{LBS}
I.A.~Junglas and R.T.~Watson, ``Location-based services," {\em Communications of the ACM}, vol. 51, no. 3, pp. 65-69, 2008.

\bibitem{Chen2014MobiCom}
Y.~Wang, J.~Liu, Y.~Chen, M.~Gruteser, J.~Yang, and H.~Liu,
``E-eyes: device-free location-oriented activity identification using fine-grained WiFi signatures," {\em ACM MobiCom}, 2014.

\bibitem{Market}
{ABI Research: Location Technologies.}\\
\url{https://www.abiresearch.com/market-research/service/location-technologies/}.

\bibitem{Voronois2003}
L.~Liao, D.~Fox, J.~Hightower, H.~Kautz, and D.~Schulz, ``Voronoi tracking: Location estimation using sparse and noisy sensor data," {\em IEEE/RSJ International Conference on Intelligent Robots and Systems}, 2003.

\bibitem{Barabasi2005}
A.-L.~Barabasi, ``The origin of bursts and heavy tails in human dynamics," {\em Nature}, 435.7039, pp. 207-211, 2005.




%\bibitem{Zhao2002}
%Y.~Zhao, ``Standardization of mobile phone positioning for 3G systems," {\em IEEE Communications Magazine}, vol. 40, no. 7, pp. 108-116, 2002.



\bibitem{Poor2005SPG}
S.~Gezici, Z.~Tian, G.B.~Giannakis, H.~Kobayashi, A.F.~Molisch, H.V.~Poor, and Z.~Sahinoglu, ``Localization via ultra-wideband radios: a look at positioning aspects for future sensor networks," {\em IEEE Signal Processing Magzine}, 22(4), pp. 70-84, 2005.

%\bibitem{Gustafsson2005SPG}
%F.~Gustafsson and F.~Gunnarsson, ``Mobile positioning using wireless networks: possibilities and fundamental limitations based on available wireless network measurements," {\em IEEE Signal Processing Magazine}, vol. 22, no. 4, pp. 41-53, 2005.

\bibitem{Win2011}
M.Z.~Win, A.~Conti, S.~Mazuelas, Y.~Shen, W.M.~Gifford, D.~Dardari, and M.~Chiani, ``Network localization and navigation via cooperation," {\em IEEE Communications Magazine}, 49(5), pp. 56-62, 2011.

\bibitem{RADAR2000}
P.~Bahl and V.N.~Padmanabhan, ``RADAR: An in-building RF-based user location and tracking system," {\em IEEE INFOCOM}, 2000.

%\bibitem{Youssef2005}
%M.~Youssef and A.~Agrawala, ``The horus WLAN location determination system," {\em ACM MobiSys}, pp. 165-176, 2006.

%\bibitem{LANDARC}
%L.M.~Ni, Y.~Liu, Y.C.~Lau, and A.P.~Patil, ``LANDMARC: indoor location sensing using active RFID," {\em Wireless Netowrks}, vol. 10, no. 6, pp. 701-710, 2004.

%\bibitem{ActiveCampus}
%W.G.~Griswold, P.~Shanahan, S.W.~Brown, R.~Boyer, M.~Ratto, R.B.~Shapiro, and T.M.~Truong, ``ActiveCampus: experiments in community-oriented ubiqutous computing," {\em Computer}, vol. 37, no. 10, pp. 73-81, 2004.


\bibitem{GoogleMap}
Google Indoor Map: \\ \url{http://googleblog.blogspot.com/2011/11/new-frontier-for-google-maps-mapping.html}

\bibitem{WiFiSLAM}
B.~Ferris, D.~Fox, and N.~Lawrence, ``Wifi-SLAM using gaussian process latent variable models," {\em International Joint Conference on Artificial
Intelligence (IJCAI)}, pp. 2480-2485, 2007.


%\bibitem{Indoor_GPS2014}
%S.~Nirjon, J.~Liu J, G.~DeJean, B.~Priyantha, Y.~Jin, and T.~Hart, ``COIN-GPS: indoor localization from direct GPS receiving," {\em ACM MobiSys}, pp. 301-314, 2014.


%\bibitem{Katti2013NSDI}
%K.~Joshi, S.~Hong, and S.~Katti, ``PinPoint: localizing interfering radios," {\em USENIX NSDI}, 2013.

\bibitem{IndoorAtlas}
IndoorAtlas: \url{https://www.indooratlas.com/}

\bibitem{Azizyan2009MobiCom}
M.~Azizyan, I.~Constandache, and R.R.~Choudhury, ``SurroundSense: mobile phone localization via ambience fingerprinting," {\em ACM MobiCom}, 2009.



\bibitem{Li2005IEE}
B.~Li, Y.~Wang, H.K.~Lee, A.~Dempster, C.~Rizos, ``Method for yielding a database of location fingerprints in WLAN," {\em IEE Proceedings-Communications}, 152(5): 580-586, 2005.

\bibitem{Kuo2011TMC}
S.P.~Kuo and Y.C.~Tseng, ``Discriminant minimization search for large-scale RF-based localization systems," {\em IEEE Transactions on Mobile Computing (TMC)}, 10(2): 291-304, 2011.

\bibitem{Yang2012MobiCom}
Z.~Yang, C.~Wu, and Y.~Liu, ``Locating in fingerprint space: wireless indoor localization with little human intervention," {\em ACM MobiCom}, 2012.

\bibitem{Jigsaw2014MobiCom}
R.~Gao, M.~Zhao, T.~Ye, et al, ``Jigsaw: Indoor floor plan reconstruction via mobile crowdsensing," {\em ACM MobiCom}, pp. 249-260, 2014.




\bibitem{Rai2012MobiCom}
A.~Rai A, K.~Chintalapudi, V.~Padmanabhan, R.~Sen, ``Zee: zero-effort crowdsourcing for indoor localization," {\em ACM MobiCom}, pp. 293-304, 2012.

\bibitem{Zhang2013WCNC}
Y.~Zhang, Y.~Zhu, M. Lu, A.~Chen, ``Using compressive sensing to reduce fingerprint collection for indoor localization," {\em IEEE Wireless Communications and Networking Conference (WCNC)}, pp. 4540-4545, 2013.

\bibitem{Hu2013}
Y.~Hu, W.~Zhou, Z.~Wen, Y.~Sun, and B.~Yin, ``Efficient radio map construction based on low-rank approximation for indoor positioning," {\em Mathematical Problems in Engineering}, vol. 2013, 2013.

\bibitem{Milioris2014}
D.~Milioris, G.~Tzagkarakis, and A.~Papakonstantinou, ``Low-dimensional signal-strength fingerprint-based positioning in wireless LANs," {\em Ad hoc networks}, vol. 12, pp. 100-114, 2014.

\bibitem{Nikitaki1}
S. Nikitaki, G. Tsagkatakis, and P. Tsakalides, ``Efficient multi-channel signal strength based localization via matrix completion and bayesian sparse learning," {\em IEEE Transactions on Mobile Computing}, To appear, Available Online at IEEE Explore.

\bibitem{Nikitaki2}
S. Nikitaki, G. Tsagkatakis, and P. Tsakalides, ``Efficient training for fingerprint based positioning using matrix completion," in Proc. 20th European Signal Processing Conference (EUSIPCO '12), Boucharest, Romania, August 27-31, 2012.

\bibitem{Milioris}
D. Milioris, M. Bradonjic, and P. Muhlethaler, 11Building complete training maps for indoor location estimation," {\em IEEE INFOCOM}, 2015.
\bibitem{Candes2010}
E.J.~Cand\`{e}s and T.~Tao, ``The power of convex relaxation: Near-optimal matrix completion," {\em IEEE Transactions on Information Theory}, vol. 56, no. 5, pp. 2053-2080, 2010.

\bibitem{Tsakatakis}
G. Tsagkatakis, and P. Tsakalides, ``Efficient recalibration via dynamic matrix completion," in Proceedings of the 23th IEEE Machine Learning for Signal Processing Workshop (MLSP '13), Southampton, UK, September 22-25, 2013

%\bibitem{Feng2012TMC}
%C.~Feng, W.~Au, S.~Valaee, Z.~Tan, ``Received-signal-strength-based indoor positioning using compressive sensing," {\em IEEE Transactions on Mobile Computing (TMC)}, 11(12): 1983-1993, 2012.

%\bibitem{Zhu2014}
%J.~Zhu, A.~Zheng, J.~Xu, V.~Li, ``Spatio-temporal (ST) similarity model for constructing WiFi-based RSSI fingerprinting map for indoor localization," {\em International Conference on Indoor Positioning and Indoor Navigation}, 2014.


%\bibitem{ActiveL2005}
%R.~Willett, R.~Nowak, and R.M.~Castro., ``Faster rates in regression via active learning," {\em NIPS}, 2005.

%\bibitem{Wang2012ToIT}
%A.~Tajer, R.M.~Castro, and X.~Wang, ``Adaptive sensing of congested spectrum bands," {\em IEEE Transactions on Information Theory}, vol. 58, no. 9, pp. 6110-6125, 2012.

%\bibitem{Bennett07thenetflix}
%J.~Bennett and S.~Lanning, ``The netflix prize," {\em KDD Cup and Workshop in conjunction with KDD}, 2007.


%\bibitem{Candes2008}
%E.J. Cand\`{e}s and B.~Recht, ``Exact low-rank matrix completion via convex optimization," {\em 46th Annual Allerton Conference on Communication, Control, and Computing}, pp. 806-812, 2008.
\bibitem{Shuchin}
Z.~Zhang, G.~Ely, S.~Aeron, N.~Hao, and M.~Kilmer, ``Novel methods for multilinear data completion and de-noising based on tensor-SVD," {\em IEEE CVPR}, 2014.

\bibitem{Tensor2009SIAM}
T.G.~Kolda and B.W.~Bader, ``Tensor decompositions and applications," {\em SIAM review}, vol. 51, no. 3, pp. 455-500, 2009.

\bibitem{Sujay}
S. Bhojanapalli and S. Sanghavi, ``A new sampling technique for tensors," arXiv:1502.05023, Feb 2015.

\bibitem{Oh}
P. Jain and S. Oh, ``Provable tensor factorization with missing data," in Proc. {\em NIPS,} 2014.

\bibitem{Yamada}
S. Gandy, B. Recht, and I. Yamada, ``Tensor completion and low-n-ranktensor recovery via convex optimization," {\em Inv. Probl.}, vol. 27, no. 2,  2011






\bibitem{Tensor2011}
S.~Gandy, B.~Recht, and I.~Yamada, ``Tensor completion and low-n-rank tensor recovery via convex optimization," {\em Inverse Problems}, vol. 27, no. 2, 2011.


\bibitem{Shuchin2015PAMI}
Z.~Zhang and S.~Aeron, ``Exact tensor completion using the t-SVD," submitted to {\em IEEE Transactions on Signal Processing}, 2015.

\bibitem{Singh2013NIPS}
A.~Krishnamurthy and A.~Singh, ``Low-rank matrix and tensor completion via adaptive sampling," {\em Neural Information Processing Systems (NIPS)}, 2013.
\bibitem{SRT}
3D Indoor Standard Ray Tracing (software): http://awe-communications.com/Propagation/Indoor/SRT/index.htm

\bibitem{Ray_tracing_TWC2009}
A.~Kaya, L.~Greenstein, and W.~Trappe W, ``Characterizing indoor wireless channels via ray tracing combined with stochastic modeling," {\em IEEE Transactions on Wireless Communications}, 8(8): 4165-4175, 2009.




\bibitem{VolSampling2006}
A.~Deshpande, L.~Rademacher, S.~Vempala, and G.~Wang, ``Matrix approximation and projective clustering via volume sampling," {\em ACM-SIAM SODA}, pp. 1117-1126, 2006.





%\bibitem{Kilmer2011TR}
%M.~Kilmer, K.~Braman, and N.~Hao, ``Third order tensors as operators on matrices: A theoretical and computational framework," {\em Tufts University, Department of Computer Science, Tech. Rep.}, January 2011.


%\bibitem{Boyd2011}
%S.~Boyd, N.~Parikh, E.~Chu, B.~Peleato, J.~Eckstein, ``Distributed optimization and statistical learning via the alternating direction method of multipliers," {\em Foundations and Trends in Machine Learning}, 3(1): 1-122, 2011.
%
%
%\bibitem{MC_noise}
%E.J.~Cand{\`e}s and Y.~Plan, ``Matrix completion with noise," {\em Proceedings of the IEEE}, 98(6): 925-936, 2010.


%\bibitem{Recht2010}
%B.~Recht, M.~Fazel, and P.A.~Parrilo, ``Guaranteed minimum-rank solutions of linear matrix equations via nuclear norm minimization," {\em SIAM review}, pp. 1-33, 2010.

%\bibitem{ADMM}
%S.~Boyd, N.~Parikh, E.~Chu, B.~Peleato, and J.~Eckstein, ``Distributed optimization and statistical learning via the alternating direction method of multipliers," {\em Foundations and Trends in Machine Learning}, 3(1): 1-122, 2011.



%\bibitem{Oppenheim}
%A.V.~Oppenheim, R.W.~Schafer, and J.R.~Buck, ``Discrete-time signal processing," Prentice Hall, 3rd edition, 2009.



\bibitem{Kong2013INFOCOM}
L.~Kong, M.~Xia, X.-Y.~Liu, M.-Y.~Wu, and X.~Liu, ``Data loss and reconstruction in sensor networks," {\em IEEE INFOCOM}, 2013.

\bibitem{AltMin_ACM}
Prateek Jain, Praneeth Netrapalli, and Sujay Sanghavi,  ``Low-rank matrix completion using alternating minimization'', In Proceedings of the forty-fifth annual ACM symposium on Theory of computing (STOC), 2013, New York, NY, USA, 665-674

%\bibitem{Candes2010SIAM}
%J.~Cai, E.J.~Cand{\`e}s, and Z.~Shen, ``A singular value thresholding algorithm for matrix completion," {\em SIAM Journal on Optimization}, 20(4): 1956-1982, 2010.

%\bibitem{Zhang2009}
%Y.~Zhang, M.~Roughan, W.~Willinger, and L.~Qiu, ``Spatio-temporal compressive sensing and internet traffic matrices," {\em ACM SIGCOMM}, 2009.

%\bibitem{Recht2011}
%B.~Recht. ``A simpler approach to matrix completion," {\em The Journal of Machine Learning Research}, vol. 12, pp. 3413-3430, 2011.

\bibitem{SVM}
J.A.K.~Suykens and J.~Vandewalle, ``Least squares support vector machine classifiers," {\em Neural processing letters}, 9(3): 293-300, 1999.

\bibitem{Lib_svm}
SVM Library: \url{http://www.csie.ntu.edu.tw/~cjlin/libsvm/}


\bibitem{Recht2010}
L.~Balzano, B.~Recht, and R.~Nowak, ``High-dimensional matched subspace detection when data are missing," {\em IEEE ISIT}, 2010.

%\bibitem{Liu2014TMC}
%H.~Liu, J.~Yang, S.~Sidhom, Y.~Wang, Y.~Chen, and F.~Ye, ``Accurate WiFi based localization for smartphones using peer assistance,"  {\em IEEE Transaction on Mobile Computing}, vol. 13, no. 10, pp. 2199-2214, 2014.

\bibitem{Singh2015}
Y.~Wang and A.~Singh, ``Column subset selection with missing data via active sampling,"
{\em Artificial Intelligence and Statistics (AISTATS)}, 2015.

%\bibitem{Singh2014}
%A.~Krishnamurthy and A.~Singh, ``On the power of adaptivity in matrix completion and approximation," arXiv:1407.3619, Submitted on 14 Jul. 2014.



}

\end{thebibliography}

\appendix

\subsection{Proof of Lemma \ref{lemma_singleround}}
%{\smallfont

   \begin{IEEEproof}
   %Denote the $i$-th horizonal slice of the left (right) orthogonal tensor as $\mathcal{U}(i,:,:)$ ($\mathcal{V}^T(i,:,:)$, respective).
   We will construct $r$ lateral slices $w^{(1)},..., w^{(r)} \in \mathcal{U}$, %(Actually it is $\hat{\mathcal{U}}$, updated with $\hat{\mathcal{U}} \cup \mathbb{U}$),
   organized as $\mathcal{W} \in \mathbb{R}^{N_1 \times r \times N_3}$, %\overline{\mathcal{U}} \cup \text{t-span}(\mathcal{S})$
   and use $\mathcal{W}$ to upper bound the projection because
   %\begin{equation}
   $||\mathcal{T} - P_{\mathcal{U} \cup \text{t-span}(\mathcal{S}), r}(\mathcal{T})||_F^2 \leq ||\mathcal{T} - P_{\mathcal{W}}(\mathcal{T})||_F^2$,
   %\end{equation}
   so we work with $\mathcal{W}$ in the following.
   For each $j=1,...,N_2$ and for each $l=1,...,s$, we define random variables:
   \begin{equation}
   X_l^{(i)} = \frac{1}{\hat{p}_j} \mathcal{E}(:,j,:) \ast \mathcal{V}^T(i,j,:) ~\text{with probability~} \hat{p}_j,
   \end{equation}
   where $i \in \{1,2,...,N_1 \}$. That is the t-product of the $j$-th lateral slice of $\mathcal{E}$ and the $(i,j)$-th mode-3 tube of $\mathcal{V}^T$, scaled by the sampling probability. Defining $X^{(i)} = \frac{1}{s}\sum\limits_{l=1}^{s} X_l^{(i)}$, we have:
   \begin{equation}
   \mathbb{E}[X^{(i)}] = \mathbb{E}[X_l^{(i)}] = \sum\limits_{j=1}^{N_2} \frac{\hat{p}_j}{\hat{p}_j} \mathcal{E}(:,j,:) \ast \mathcal{V}^T(i,j,:) = \mathcal{E} \ast \mathcal{V}(:,i,:).
   \end{equation}
   Defining $w^{(i)} = P_{\mathcal{U}}(\mathcal{T}) \ast \mathcal{V}(:,i,:) + X^{(i)} \in \mathbb{R}^{N_1 \times 1 \times N_3}$ and using the definition of $\mathcal{E}$, it is easy to have:
   \begin{eqnarray}
   \mathbb{E}[w^{(i)}] &=& \mathbb{E}[  (\mathcal{T}-\mathcal{E})\ast \mathcal{V}(:,i,:) + X^{(i)}]
   =  \mathcal{U}(:,i,:) \ast \sigma_i \nonumber\\&&- \mathcal{E}\ast \mathcal{V}(:,i,:) + \mathbb{E}[ X^{(i)}] =\mathcal{U}(:,i,:) \ast \sigma_i, 
   \end{eqnarray}
   where $\sigma_i$ is the $i$-th diagonal tube of $\Theta$.
%
%   It is easy to have that:
%   \begin{equation}
%   w^{(i)} - \mathcal{U}(i,:,:) \ast \sigma_i = X^{(i)} - \mathcal{E}\ast \mathcal{V}^T(i,:,:).
%   \end{equation}

   Next, we proceed to bound the second central moment:
   \begin{eqnarray}
&&   w^{(i)} - \mathcal{U}(:,i,:) \ast \sigma_i = X^{(i)} - \mathcal{E}\ast \mathcal{V}(:,i,:), \nonumber\\
&&   \mathbb{E}[ || w^{(i)} - \mathcal{U}(:,i,:) \ast \sigma_i ||^2 ] =
   \mathbb{E}[||X^{(i)} - \mathcal{E}\ast \mathcal{V}(:,i,:) ||^2 ]\nonumber\\
&&   =\mathbb{E}[||X^{(i)}||_F^2] - || \mathcal{E} \ast \mathcal{V}(:,i,:) ||_F^2, \nonumber\\
&&   \mathbb{E}[ || X^{(i)} ||_F^2 ] = \frac{1}{s^2}\sum_{l=1}^{s} \mathbb{E}[ ||X_l^{(i)}||_F^2 ]  + \frac{s-1}{s} ||\mathcal{E} \ast \mathcal{V}(:,i,:) ||_F^2.\nonumber
   \end{eqnarray}

   Thus, the second central moment is:
   \begin{equation}
   \mathbb{E}[ || w^{(i)} - \mathcal{U}(:,i,:) \ast \sigma_i ||^2 ] = \frac{1}{s^2}\sum_{l=1}^{s} \mathbb{E}[ ||X_l^{(i)}||_F^2 ] - \frac{1}{s} || \mathcal{E} \ast \mathcal{V}(:,i,:) ||_F^2.
   \end{equation}

Now, we use the probability $\hat{p}_j$ to evaluate each term in the summation as follows.
\begin{eqnarray}
 &&  \mathbb{E}[ ||X_l^{(i)}||_F^2 ] = \sum_{j=1}^{N_2} \hat{p}_j \frac{ || \mathcal{E}(:,j,:) \ast \mathcal{V}^{T}(i,j,:) ||_F^2 }{\hat{p}_j^2} \nonumber\\&&\leq \sum_{j=1}^{N_2} \frac{1 + \alpha_2}{1 - \alpha_1} \frac{|| \mathcal{E}(:,j,:) \ast \mathcal{V}^{T}(i,j,:) ||_F^2}{|| \mathcal{E}(:,j,:)||_F^2} ||\mathcal{E}||_F^2.
   \end{eqnarray}
Now note that,
\begin{align}
&\frac{\| \mathcal{E}(:,j,:) \ast \mathcal{V}^{T}(i,j,:) \|_F^2}{\| \mathcal{E}(:,j,:)\|_F^2}  = \frac{\sum_{k=1}^{N3} \| \breve{\mc{E}}(:,j,k)\|_{F}^{2} |\breve{\mc{V}}^{T}(i,j,k)|^2}{\sum_{k=1}^{N3} \| \breve{\mc{E}}(:,j,k)\|_{F}^{2}}\nonumber\\
& \leq \frac{\left(\sum_{k=1}^{N3} \| \breve{\mc{E}}(:,j,k)\|_{F}^{2}\right) \max_{k}|\breve{\mc{V}}^{T}(i,j,k)|^2}{\sum_{k=1}^{N3} \| \breve{\mc{E}}(:,j,k)\|_{F}^{2}}.
\end{align}
Using the assumption that $\sum_{j=1}^{N_2} \max_{k}|\breve{\mc{V}}^{T}(i,j,k)|^2 \leq \xi_0$, we obtain,
\begin{align}
 \mathbb{E}[ ||X_l^{(i)}||_F^2 ] \leq \xi_0 \frac{1 + \alpha_2}{1 - \alpha_1} \|\mathcal{E}\|_F^2
 \end{align}

   Therefore, we have the following upper bound on the second central moment.
   \begin{equation}
   \mathbb{E}[ || w^{(i)} - \mathcal{U}(:,i,:) \ast \sigma_i ||^2 ] \leq \frac{1}{s} \cdot \xi_0 \frac{1 + \alpha_2}{1 - \alpha_1} ||\mathcal{E}||_F^2.
   \end{equation}

   To complete the proof, let $y^{(i)} = w^{(i)} / \sigma_i$ (note that here $/$ is the inverse of the t-product that $w^{(i)} = y^{(i)} \ast \sigma_i$), and define a tensor $\mathcal{Y} = \sum\limits_{j=1}^{k} y^{(i)} \ast \mathcal{U}(:,i,:)$ with $k > r$. Since $y^{(i)} \in \mathcal{W}$, the subspace of $\mathcal{Y}$ is contained in $\mathcal{W}$, so $||\mathcal{T} - P_{\mathcal{W}}(\mathcal{T})||_F^2 \leq ||\mathcal{T} - \mathcal{Y} ||_F^2$.
   \begin{eqnarray}
&&   ||\mathcal{T} - \mathcal{Y} ||_F^2 = \sum\limits_{i=1}^{d} || (\mathcal{T} - \mathcal{Y}) \ast \mathcal{V}^{T}(i,:,:)||_F^2\nonumber\\
&&    \leq \sum\limits_{i=k+1}^{d} ||\sigma_i||_F^2 + \sum\limits_{i=1}^{k} ||\mathcal{U}(:,i,:) \ast \sigma_i - w^{(i)}||_F^2.
   \end{eqnarray}

   We then use Markov's inequality on the second term. Specifically, with probability $\geq 1 - \rho$, we have:
   \begin{eqnarray}
  && ||\mathcal{T} - \mathcal{Y} ||_F^2  \nonumber\\&&\leq ||\mathcal{T} - \mathcal{T}_k ||_F^2 + \frac{1}{\rho} \mathbb{E} [\sum\limits_{i=1}^{k} ||\mathcal{U}(:,i,:) \ast \sigma_i - w^{(i)}||_F^2] \nonumber\\&&\leq ||\mathcal{T} - \mathcal{T}_r ||_F^2 + \frac{r}{\rho s} \cdot \xi_0 \frac{1 + \alpha_2}{1- \alpha_1} ||\mathcal{E}||_F^2.
   \end{eqnarray}
   \end{IEEEproof}

\subsection{Some Lemmas that will be used for other results in Appendix}

   We need the following two versions of Bernstein's inequalities in our proofs.

   \begin{lemma}
   (Scalar Version) Let $X_1,...,X_n$ be independent centered scalar variables with $\sigma^2 = \sum_{i=1}^{n} \mathbb{E}[X_i^2]$ and $R = \max_{i} |X_i|$. Then
   \begin{equation}
   \mathbb{P}\left( \sum\limits_{i=1}^{n} X_k \geq t\right) \leq \exp\left( \frac{-t^2}{2\sigma^2 + \frac{2}{3}Rt} \right).
   \end{equation}
   \end{lemma}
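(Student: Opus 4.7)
The plan is to establish this standard Bernstein-type concentration bound via the Chernoff method, which reduces the problem from a tail bound on a sum to a one-dimensional optimization over an MGF exponent. First I would apply Markov's inequality to $e^{\lambda \sum_i X_i}$ for an arbitrary $\lambda > 0$ and use independence to factor:
\begin{equation*}
\mathbb{P}\left( \sum_{i=1}^{n} X_i \geq t \right) \leq e^{-\lambda t} \prod_{i=1}^{n} \mathbb{E}\bigl[e^{\lambda X_i}\bigr].
\end{equation*}
This reduces the problem to bounding each individual moment generating function, then choosing the optimal $\lambda$.

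The key technical step is to bound $\mathbb{E}[e^{\lambda X_i}]$ using the centering $\mathbb{E}[X_i]=0$ and the uniform bound $|X_i|\leq R$. I would expand $e^{\lambda X_i}$ as a power series and take expectations: the linear term vanishes, and for $k \geq 2$ the crude estimate $|\mathbb{E}[X_i^k]| \leq R^{k-2}\mathbb{E}[X_i^2]$, combined with the elementary factorial inequality $k! \geq 2\cdot 3^{k-2}$ (easily verified by induction for $k \geq 2$), yields
\begin{equation*}
\sum_{k\geq 2} \frac{\lambda^k \mathbb{E}[X_i^k]}{k!} \;\leq\; \frac{\lambda^2\,\mathbb{E}[X_i^2]}{2}\sum_{j\geq 0}\!\left(\frac{\lambda R}{3}\right)^{\!j} \;=\; \frac{\lambda^2\,\mathbb{E}[X_i^2]}{2(1-\lambda R/3)},
\end{equation*}
valid for $\lambda R < 3$. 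Applying $1+x \leq e^x$ and multiplying over $i$ gives
\begin{equation*}
\mathbb{P}\left( \sum_{i=1}^{n} X_i \geq t \right) \leq \exp\!\left( -\lambda t + \frac{\lambda^2 \sigma^2}{2(1-\lambda R/3)} \right).
\end{equation*}

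Finally, I would optimize the exponent over $\lambda \in (0, 3/R)$. The choice $\lambda = t/(\sigma^2 + Rt/3)$ is the minimizer, and a direct substitution produces exactly the claimed bound $\exp\bigl(-t^2/(2\sigma^2 + \tfrac{2}{3}Rt)\bigr)$. The main obstacle is the middle step: carefully summing the tail series $\sum_{k \geq 2} (\lambda R)^{k-2}/k!$ and showing it is dominated by $1/(2(1-\lambda R/3))$; everything else (exponentiation, factoring across independent $X_i$, and the one-dimensional optimization) is routine calculus. The factorial bound $k! \geq 2\cdot 3^{k-2}$ is precisely what produces the $2/3$ coefficient in the denominator of the final exponent and distinguishes this variance-aware Bernstein bound from a cruder Hoeffding-type inequality that would use only $|X_i|\leq R$.
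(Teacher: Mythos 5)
Your proposal is correct: it is the canonical Chernoff-method proof of Bernstein's inequality, and every step (the series expansion of the MGF, the bound $|\mathbb{E}[X_i^k]|\leq R^{k-2}\mathbb{E}[X_i^2]$, the factorial estimate $k!\geq 2\cdot 3^{k-2}$, and the substitution $\lambda = t/(\sigma^2+Rt/3)$) checks out. The paper itself states this lemma without proof, treating it as a standard tool, so there is no in-paper argument to compare against; your write-up is exactly the proof one would supply. One cosmetic point: the choice $\lambda = t/(\sigma^2+Rt/3)$ is not literally the minimizer of the exponent $-\lambda t + \lambda^2\sigma^2/(2(1-\lambda R/3))$ (the derivative there equals $Rt^2/(6\sigma^2)>0$), but this is immaterial since the Chernoff bound is valid for every admissible $\lambda$ and this choice yields precisely the stated exponent.
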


   \begin{lemma}
   (Vector Version) Let $X_1,...,X_n$ be independent centered random vectors with $\sum_{i=1}^{n} \mathbb{E}||X_i||_2^2 \leq V$. Then for any $t \leq V(\max_{i} ||X_i||_2)^{-1}$.
   \begin{equation}
   \mathbb{P}\left( ||\sum\limits_{i=1}^{n} X_k ||_2 \geq \sqrt{V}+ t\right) \leq \exp\left( \frac{-t^2}{4V} \right).
   \end{equation}
   \end{lemma}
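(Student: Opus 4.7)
My plan is to reduce the vector concentration to concentration around the mean via a Doob martingale argument, and to handle the mean itself by a direct second-moment computation. Throughout, set $Z := \|\sum_{i=1}^n X_i\|_2$ and $R := \max_i \|X_i\|_2$.

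\textbf{Controlling the mean.} By independence and centering, all cross terms vanish in the expansion of $Z^2$, giving $\mathbb{E}[Z^2] = \sum_i \mathbb{E}\|X_i\|_2^2 \leq V$. Jensen's inequality then yields $\mathbb{E}[Z] \leq \sqrt{\mathbb{E}[Z^2]} \leq \sqrt{V}$. Consequently it suffices to show $\mathbb{P}(Z \geq \mathbb{E}[Z] + t) \leq \exp(-t^2/(4V))$ in the stated range of $t$; the $\sqrt{V}$ offset in the event of interest is absorbed by Jensen rather than being charged to concentration.

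\textbf{Doob martingale with a ghost copy.} Introduce the filtration $\mathcal{F}_k = \sigma(X_1,\ldots,X_k)$ and the Doob martingale $M_k := \mathbb{E}[Z \mid \mathcal{F}_k]$, so that $M_0 = \mathbb{E}[Z]$, $M_n = Z$, and $Z - \mathbb{E}[Z] = \sum_k D_k$ with $D_k := M_k - M_{k-1}$. Letting $X_k'$ be an independent copy of $X_k$ and writing $Z^{(k)}$ for $Z$ with $X_k$ replaced by $X_k'$, the identity $M_{k-1} = \mathbb{E}[Z^{(k)} \mid \mathcal{F}_k]$ together with the reverse triangle inequality $|Z - Z^{(k)}| \leq \|X_k - X_k'\|_2$ gives a pointwise bound $|D_k| \leq 2R$. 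A Jensen step followed by $\mathbb{E}X_k' = 0$ and independence produces the conditional-variance bound $\mathbb{E}[D_k^2 \mid \mathcal{F}_{k-1}] \leq \mathbb{E}\|X_k - X_k'\|_2^2 = 2\,\mathbb{E}\|X_k\|_2^2$, and summing gives a predictable quadratic variation $\sum_k \mathbb{E}[D_k^2 \mid \mathcal{F}_{k-1}] \leq 2V$.

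\textbf{Martingale Bernstein and closing out.} I then invoke Freedman's martingale Bernstein inequality (the martingale analogue of the scalar Bernstein stated immediately above the target lemma) on $\sum_k D_k$, obtaining $\mathbb{P}(Z - \mathbb{E}[Z] \geq t) \leq \exp\!\bigl(-t^2/(4V + \tfrac{4}{3}Rt)\bigr)$. The hypothesis $t \leq V/R$ forces $Rt \leq V$, collapsing the denominator to a constant multiple of $V$; combined with Step~1 this yields $\mathbb{P}(Z \geq \sqrt{V} + t) \leq \exp(-t^2/(4V))$.

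\textbf{Main obstacle.} The delicate issue is landing on exactly the constant $4$ in the denominator. A naive chain of (ghost-copy variance factor $2$) combined with Freedman's $2V$ denominator and the linear-in-$t$ slack gives something slightly looser than $4V$, so one must either sharpen the variance identity $\mathbb{E}\|X_k - X_k'\|_2^2 = 2\,\mathbb{E}\|X_k\|_2^2$ (avoiding a Jensen loss by handling $D_k^2$ directly in $L^2$) or use the cutoff $t \leq V/R$ precisely so that the Bernstein linear term is exactly absorbed into the quadratic term. The Doob/ghost-copy construction and the second-moment expansion are otherwise standard and require no machinery beyond the scalar Bernstein inequality already given in the excerpt.
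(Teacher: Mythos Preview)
The paper does not prove this lemma; it is stated without proof in the appendix as a standard concentration tool (a vector-valued Bernstein inequality), listed alongside the scalar Bernstein inequality and then invoked in the proof of Lemma~\ref{lemma_p2}. There is therefore no paper proof to compare against.

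Your proposed route --- Jensen to absorb $\sqrt{V}$ into $\mathbb{E}[Z]$, a Doob martingale with a ghost-copy coupling, then Freedman's martingale Bernstein --- is a standard and correct strategy for this class of results and will produce a bound of the form $\exp(-t^2/(cV))$ in the regime $t\leq V/R$. You have already identified the only real issue: the constant. Running your numbers honestly, the ghost-copy step gives predictable quadratic variation at most $2V$ and increment bound $2R$, so Freedman yields $\exp\bigl(-t^2/(4V+\tfrac{4}{3}Rt)\bigr)$; using $Rt\leq V$ collapses this only to $\exp(-3t^2/(16V))$, not $\exp(-t^2/(4V))$. Neither of your two suggested fixes closes that gap: the identity $\mathbb{E}\|X_k-X_k'\|_2^2=2\,\mathbb{E}\|X_k\|_2^2$ is already tight, and the cutoff $t\leq V/R$ is already being used in full. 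To land on the constant $4$ one typically cites a sharper source inequality directly (e.g., the Pinelis--Sakhanenko vector Bernstein, or the version used in Gross's matrix-completion work) rather than going through Doob/Freedman. Since the paper only quotes the lemma and does not rely on the specific constant in any tight way, this is a citation-tracing issue rather than a gap in the paper's own argument; your proof would be perfectly adequate for the paper's purposes up to an immaterial constant.
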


   \begin{lemma}\label{lemma_p1}
   Let $\mathcal{Z} =  \mathcal{P}_{\mathcal{U}^{\bot}} (\mathcal{T}(:,j,:))$, $\mu(\mathcal{Z}) = \frac{N_1 ||\mathcal{Z}||_{F\infty}^2}{ ||\mathcal{Z}||_F^2}$,
where $ ||\mathcal{Z}||_{F\infty} = \max_{i} \| \mathcal{Z}(i,1,:)\|_{F},\,\, i \in \{1,2,...,N_1\}$, and
    $$\alpha = \sqrt{2\frac{\mu(\mathcal{Z})}{m}\log(1/\rho)} + \frac{2\mu(\mathcal{Z})}{3m}\log(1/\rho), $$ then with probability $\geq 1- 2 \rho$,
   \begin{equation}
   (1- \alpha) \frac{m}{N_1} || \mathcal{Z} ||_F^2 \leq ||\mathcal{Z}(\Omega_j^1,:) ||_F^2 \leq ( 1 + \alpha) \frac{m}{N_1} || \mathcal{Z} ||_F^2.
   \end{equation}
%   \begin{equation}
%   (1- \alpha) \frac{m}{N_1} || \mathcal{Z}(:,j,:) ||_F^2 \leq ||\mathcal{Z}(\Omega_j^1,j,:) ||_F^2 \leq ( 1 + \alpha) \frac{m}{N_1} || \mathcal{Z}(:,j,:) ||_F^2
%   \end{equation}
   \end{lemma}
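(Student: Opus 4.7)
The plan is to recognize Lemma \ref{lemma_p1} as a standard Bernstein-type concentration argument applied to the random sum of per-row energies under uniform tubal sampling. Specifically, for each $i \in \{1,\ldots,N_1\}$ let $Y_i = \|\mathcal{Z}(i,1,:)\|_F^2$, so that $\|\mathcal{Z}\|_F^2 = \sum_i Y_i$ and $\max_i Y_i = \|\mathcal{Z}\|_{F\infty}^2 = \mu(\mathcal{Z})\|\mathcal{Z}\|_F^2/N_1$ by the definition of $\mu(\mathcal{Z})$. Model the $1$st-pass sampling as drawing indices $i_1,\ldots,i_m$ i.i.d. uniformly from $\{1,\ldots,N_1\}$ so that $\|\mathcal{Z}(\Omega_j^1,:)\|_F^2 = \sum_{l=1}^{m} Y_{i_l}$, whose expectation equals $\frac{m}{N_1}\|\mathcal{Z}\|_F^2$ — exactly the center of the desired two-sided bound.

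Next I would set $X_l = Y_{i_l} - \mathbb{E}[Y_{i_l}]$, which are i.i.d., centered, and bounded by $R := \max_i Y_i = \mu(\mathcal{Z})\|\mathcal{Z}\|_F^2/N_1$. For the variance proxy, using $\mathbb{E}[X_l^2] \leq \mathbb{E}[Y_{i_l}^2] = \frac{1}{N_1}\sum_i Y_i^2$ and the elementary estimate $\sum_i Y_i^2 \leq (\max_i Y_i)(\sum_i Y_i) = \|\mathcal{Z}\|_{F\infty}^2\|\mathcal{Z}\|_F^2$, I would obtain
\begin{equation}
\sigma^2 := \sum_{l=1}^{m}\mathbb{E}[X_l^2] \leq \frac{m\,\mu(\mathcal{Z})\,\|\mathcal{Z}\|_F^4}{N_1^2}.
\end{equation}

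Then I would apply the scalar Bernstein inequality stated in the preceding lemma to $\sum_l X_l$, with deviation parameter $t = \alpha \cdot \frac{m}{N_1}\|\mathcal{Z}\|_F^2$. Substituting the bounds for $\sigma^2$ and $R$ above, a direct calculation shows that the choice
\begin{equation}
\alpha = \sqrt{\frac{2\mu(\mathcal{Z})}{m}\log(1/\rho)} + \frac{2\mu(\mathcal{Z})}{3m}\log(1/\rho)
\end{equation}
is exactly what makes the Bernstein exponent equal $\log(1/\rho)$, by splitting $t$ into its subgaussian and subexponential pieces matching $\sqrt{2\sigma^2\log(1/\rho)}$ and $\frac{2R}{3}\log(1/\rho)$ respectively. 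Taking a union bound over the upper-tail and lower-tail deviations $\pm t$ yields the factor of $2$ in the failure probability $2\rho$, completing the proof.

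There is no real conceptual obstacle here; the argument is essentially bookkeeping. The only point that needs attention is verifying that the $\sum_i Y_i^2 \leq \|\mathcal{Z}\|_{F\infty}^2 \|\mathcal{Z}\|_F^2$ step gives the same coherence-parameter scaling that appears inside the square root of $\alpha$, which is what ties the Bernstein tail parameter cleanly to the tensor-column incoherence bookkeeping used elsewhere in the paper (e.g., in Lemma \ref{lemma:sampling_budget}).
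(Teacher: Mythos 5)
Your proposal is correct and follows essentially the same route as the paper: both define the centered per-row energy variables, bound the variance proxy by $\frac{m}{N_1}\|\mathcal{Z}\|_{F\infty}^2\|\mathcal{Z}\|_F^2$ via $\sum_i Y_i^2 \leq (\max_i Y_i)\sum_i Y_i$, and apply the scalar Bernstein inequality with $t = \alpha\frac{m}{N_1}\|\mathcal{Z}\|_F^2$, the two terms of $\alpha$ matching the subgaussian and subexponential pieces of the tail. The only cosmetic difference is that you make the inversion of the Bernstein exponent and the two-sided union bound explicit, which the paper leaves as ``plugging in the definition of $\alpha$.''
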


   \begin{proof}
   This proof is an direct application of the scalar version of Bernstein's inequality. The left orthogonal tensor $\mathcal{U}$ is of size $N_1 \times r \times N_3$. Let $T(:,\Omega_j^1,:)$ denote the sampled fingerprints in $j$-th column of $\mathcal{G}$ during the $1$st-pass sampling, and $\mathcal{Z} =  \mathcal{P}_{\mathcal{U}^{\bot}} (\mathcal{T}(:,j,:)) \in \mathbb{R}^{N_1 \times 1 \times N_3}$ is its projection onto $\mathcal{U}^{\bot}$. Where $\Omega_j^1$ corresponds to those sampled $m$ reference points, and $\Omega_j^1(i)$ is the $i$-th element in set $\Omega_j^1$.

   Define a random variable $$X_i = ||\mathcal{Z}(\Omega_j^1(i),1,:)||_2^2 - \frac{1}{N_1}||\mathcal{Z}||_F^2\, ,$$ so that $\mathbb{E}[X_i] = \frac{1}{N_1}\sum_{i=1}^{N_1} ||\mathcal{Z}(i,1,:)||_2^2  - \frac{1}{N_1}||\mathcal{Z}||_F^2 =0$, and $\sum_{i=1}^{m} X_i = ||\mathcal{Z}(\Omega_j^1,:)||_F^2 - \frac{m}{N_1}||\mathcal{Z}||_F^2$. We compute the variance and bound for $X_i$ as follows.
   \begin{equation}
   \begin{split}
   \phi^2 &= \sum_{i=1}^{m} \mathbb{E}[X_i^2] \leq \frac{m}{N_1} \sum\limits_{i=1}^{N_1} ||\mathcal{Z}(i,1,:)||_2^4 \leq \frac{m}{N_1}||\mathcal{Z}||_{F\infty}^2 ||\mathcal{Z}||_F^2, \\
   \Phi &= \max_{i \in \{1,2,...,N_1\}} |X_i| \leq ||\mathcal{Z}||_{F\infty}^2,
   \end{split}
   \end{equation}

   Applying the Berstein's inequality:
   \begin{equation}
   \mathbb{P} \left(\left| \sum\limits_{i=1}^{m} X_i \right| > t\right) \leq 2 \exp\left( \frac{-t^2}{2||\mathcal{Z}||_{F\infty}^2(\frac{m}{N_1} ||\mathcal{Z}||_F^2 + \frac{1}{3} t)} \right).
   \end{equation}

   Using the definition of $\mu(\mathcal{Z})$ and setting $t=\alpha \frac{m}{N_1} || \mathcal{Z} ||_F^2$, the above bound becomes:
   \begin{equation}
   \mathbb{P} \left(\left| \sum\limits_{i=1}^{m} X_i \right| > \alpha \frac{m}{N_1} || \mathcal{Z} ||_F^2 \right) \leq 2 \exp\left( \frac{-\alpha^2 m}{2\mu(\mathcal{Z}) (1 + \alpha/3)} \right).
   \end{equation}
   Plugging in the definition of $\alpha$, then the probability $\leq 2\rho$.
   \end{proof}

   \begin{lemma}\label{lemma_p2}
   Let $\beta = 1 + 2 \sqrt{\log(1/\rho)}$, with probability $\geq 1 - \rho$,
   \begin{equation}
   || \mathcal{U}_{\Omega_j^1}^T \ast \mathcal{Z}(\Omega_j^1,:) ||_F^2 \leq \beta \frac{m}{N_1} \cdot \frac{r \mu(\mathcal{U})}{N_1} ||\mathcal{Z}||_F^2.
   \end{equation}
   \end{lemma}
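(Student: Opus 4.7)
The plan is to bound $\mathcal{U}_{\Omega_j^1}^T \ast \mathcal{Z}(\Omega_j^1,:)$ by viewing it as a sum of $m$ centered random ``tubes'' and applying the vector Bernstein inequality recalled just before this lemma. The critical starting observation is that $\mathcal{Z} = \mathcal{P}_{\mathcal{U}^{\bot}}(\mathcal{T}(:,j,:))$ lies in the orthogonal complement of the tensor-column space of $\mathcal{U}$, so $\mathcal{U}^T \ast \mathcal{Z} = 0$. Expanding this via the definition of the t-product gives $\sum_{i=1}^{N_1} \mathcal{U}^T(:,i,:) \ast \mathcal{Z}(i,1,:) = 0$. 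Therefore, if I let $Y_i = \mathcal{U}^T(:,i,:) \ast \mathcal{Z}(i,1,:)$ and pick $\omega$ uniformly in $\{1,\ldots,N_1\}$, then $\mathbb{E}[Y_\omega]=0$, while the same t-product expansion yields the exact identity $\mathcal{U}_{\Omega_j^1}^T \ast \mathcal{Z}(\Omega_j^1,:) = \sum_{i \in \Omega_j^1} Y_i$, so the quantity of interest is a centered random sum over $m$ uniform samples.

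Next, I would estimate the per-term norm and the variance proxy by passing to the Fourier domain via Remark \ref{computation}. The $k$-th frontal slice of $\breve{Y}_i$ is the outer product $\breve{\mathcal{U}}^T(:,i,k)\,\breve{\mathcal{Z}}(i,1,k)$, so Parseval together with the tensor-column incoherence condition $\|\breve{\mathcal{U}}^T(:,i,k)\|_2^2 \le \mu(\mathcal{U})\, r / N_1$ (uniform in $i$ and $k$) yields the slice-by-slice bound $\|Y_i\|_F^2 \leq \frac{r\mu(\mathcal{U})}{N_1}\|\mathcal{Z}(i,1,:)\|_F^2$. Summing the expectation over a uniformly chosen index gives the variance proxy $V = m\, \mathbb{E}\|Y_\omega\|_F^2 \leq \frac{m}{N_1} \cdot \frac{r\mu(\mathcal{U})}{N_1}\|\mathcal{Z}\|_F^2$, which already displays the factorization appearing on the right-hand side of the target inequality.

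Finally, invoking the vector Bernstein inequality with deviation $t = 2\sqrt{V\log(1/\rho)}$ gives $\bigl\|\sum_{i \in \Omega_j^1} Y_i\bigr\|_F \leq \sqrt{V} + t = \sqrt{V}\bigl(1 + 2\sqrt{\log(1/\rho)}\bigr)$ with probability at least $1-\rho$; squaring produces the claimed bound with the factor $\beta$ (noting that to be consistent with Lemma \ref{lemma:each_slice} the constant should read $\beta = (1+2\sqrt{\log(1/\rho)})^2$, which is what Bernstein's bound actually delivers after squaring). The main technical obstacle will be carefully routing the tensor-column incoherence through the Fourier-slice representation of the t-product so that the per-index bound collapses to the familiar matrix form; a secondary issue is verifying Bernstein's range condition $t \leq V/\max_i\|Y_i\|_F$, which becomes a mild lower bound on $m$ that is already ensured by the sampling budget prescribed in Algorithm \ref{alg:puncturing}.
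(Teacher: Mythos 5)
Your proposal follows essentially the same route as the paper's proof: center the tubes $\mathcal{U}^T(:,i,:)\ast\mathcal{Z}(i,1,:)$ using $\mathcal{Z}\in\mathcal{U}^{\bot}$, bound the variance proxy in the Fourier domain via the column incoherence to get $V \le \frac{m}{N_1}\cdot\frac{r\mu(\mathcal{U})}{N_1}\|\mathcal{Z}\|_F^2$, apply the vector Bernstein inequality with $t=2\sqrt{V\log(1/\rho)}$, square, and verify the range condition as a mild lower bound on $m$. Your remark that the constant should be $\beta=(1+2\sqrt{\log(1/\rho)})^2$ after squaring is also correct and matches the value of $\beta$ the paper actually uses downstream in Lemma \ref{space_detection} and Lemma \ref{lemma:each_slice}.
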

   \begin{proof}
   The proof is an application of the vector version of Bernstein's inequality.
   Let  $X_i = \mathcal{U}^T(:,\Omega_j^1(i),:) \ast \mathcal{Z}(\Omega_j^1(i),1,:)$. Since %$\mathcal{Z}$ lies in the null space of $\mathcal{U}$, i.e.,
   $\mathcal{Z} =  \mathcal{P}_{\mathcal{U}^{\bot}} (\mathcal{T}(:,j,:))$, the expectation and variance are:
   \begin{eqnarray}
  && \mathbb{E}[ X_{i}] = \frac{1}{N_1}\sum\limits_{k=1}^{ N_1} \mathcal{U}^T(:,k,:) \ast \mathcal{Z}(k,1,:) = 0_{1 \times N_3},\nonumber \\
&&   \sum\limits_{i=1}^{m} \mathbb{E}||X_i||_2^2 = \frac{m}{N_1} \sum\limits_{k=1}^{N_1} || \mathcal{U}^T(:,k,:) \ast \mathcal{Z}(k,1,:) ||_F^2 \nonumber\\&&= \frac{m}{N_1} \sum\limits_{k=1}^{N_1} \sum_{j=1}^{N_3} \frac{1}{N_3}\|\breve{\mathcal{U}}^T(:,k,j) \breve{\mathcal{Z}}(k,1,j)\|_2^2\nonumber\\
   &&\leq \frac{m}{N_1} \left(\frac{1}{N_3}\sum\limits_{k=1}^{N_1} \|\breve{\mathcal{U}}^T(:,k,:)\|_{F}^{2}\right) \left(  \sum\limits_{k=1}^{N_1}   \sum\limits_{j=1}^{N_3} ||\breve{\mathcal{Z}}(k,1,j)||_2^2 \right)\nonumber\\ %= \frac{m}{N_1}\sum\limits_{k=1}^{N_1}  ||\mathcal{U}(i,j,:)||_2^2 \sum\limits_{k=1}^{N_1} ||\mathcal{Z}(i,:)||_2^2 \\
   &&\leq \frac{m}{N_1} \frac{r\mu(\mathcal{U})}{N_1} ||\breve{\mathcal{Z}}||_F^2 \triangleq V.
   \end{eqnarray}

   Applying Bernstein's inequality, we have that with probability at least $1-\rho$:
   \begin{eqnarray}
&&   || \mathcal{U}_{\Omega_j^1}^T \ast \mathcal{Z}(\Omega_j^1,:) ||_F^2 \leq \sqrt{V}+\sqrt{4V\log(1/\rho)} \nonumber\\&&= \sqrt{\frac{m}{N_1} \frac{r\mu(\mathcal{U})}{N_1}} ||\breve{\mathcal{Z}}||_F (1 + 2 \sqrt{\log(1/\rho)}),
   \end{eqnarray}
   as long as:
   \begin{equation}
   t = \sqrt{4V \log(1/\rho)} \leq V(\max_{i} ||X_i||_2)^{-1}.
   \end{equation}
   Since $\max_{i} ||X_i||_2 \leq ||\breve{\mathcal{Z}}||_{F\infty} \sqrt{r\mu(\mathcal{U})/N_1}$ and using the incoherence assumption on $\mathcal{Z}$, i.e. $$ \sqrt{\mu(\mathcal{Z})}  = \frac{\sqrt{N_1} \mathcal{Z}_{F,\infty}}{\|\mathcal{Z}\|_{F}} = \frac{\sqrt{N_1} \|\breve{\mathcal{Z}}\|_{F,\infty}}{\|\breve{\mathcal{Z}}\|_{F}}$$,
   this condition translates to $m \geq 4 \mu(\mathcal{Z}) \log(1/\rho)$. Squaring the above inequality proves the lemma.
   \end{proof}

   \begin{lemma}\label{lemma:inverse}
   \cite{Recht2010} Let $\rho > 0$ and $m = \frac{8}{3} r\mu_0 \log(2r /\rho)$, then for any orthonormal matrix $U$,
   \begin{equation}
   || ({U}_{\Omega_j^1}^T {U}_{\Omega_j^1})^{-1}||_{op}^{2} \leq \frac{N_1}{(1- \gamma)m}
   \end{equation}
   with probability $\geq 1 - \rho$, provided that $\gamma < 1$. %In particular ${U}_{\Omega_j^1}^T {U}_{\Omega_j^1}$ is invertible with high probability for large enough $m$.
   \end{lemma}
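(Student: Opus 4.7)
\begin{IEEEproof}[Proof Sketch Proposal]
The plan is to recognize this as a matrix Chernoff bound on the minimum eigenvalue of a sum of independent rank-one PSD matrices, then invert to control the operator norm of the inverse. Concretely, I would first write
\[
U_{\Omega_j^1}^\top U_{\Omega_j^1} = \sum_{i=1}^m X_i, \qquad X_i := U(\omega_i,:)^\top U(\omega_i,:),
\]
where $\omega_1,\dots,\omega_m$ are the sampled row indices drawn uniformly (with or without replacement) from $\{1,\dots,N_1\}$. Each $X_i$ is a PSD rank-one matrix, so $\sum_i X_i$ is PSD, and bounding $\lambda_{\min}(\sum_i X_i)$ from below immediately yields the desired operator-norm bound on the inverse via $\|A^{-1}\|_{\mathrm{op}} = 1/\lambda_{\min}(A)$.

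Next I would compute the two ingredients the matrix Chernoff bound needs. Using orthonormality of $U$, $\mathbb{E}[X_i] = \frac{1}{N_1}\sum_{k=1}^{N_1} U(k,:)^\top U(k,:) = \frac{1}{N_1} U^\top U = \frac{1}{N_1} I_r$, so $\mathbb{E}[\sum_i X_i] = \frac{m}{N_1} I_r$ and $\mu_{\min} := \lambda_{\min}(\mathbb{E}\sum_i X_i) = m/N_1$. For the uniform bound on $\|X_i\|_{\mathrm{op}}$, I would invoke the tensor-column incoherence condition (\ref{tensor_column_incoherency}), which in this single-slice setting specializes to $\|U(k,:)\|_2^2 \leq r\mu_0/N_1$ for every $k$; hence $\|X_i\|_{\mathrm{op}} = \|U(\omega_i,:)\|_2^2 \leq L := r\mu_0/N_1$ almost surely.

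With $\mu_{\min}$ and $L$ in hand, I would apply Tropp's matrix Chernoff lower-tail bound,
\[
\mathbb{P}\!\left(\lambda_{\min}\!\left(\sum_i X_i\right) \leq (1-\gamma)\mu_{\min}\right) \leq r\cdot\exp\!\left(-\frac{\gamma^2 \mu_{\min}}{c\,L}\right),
\]
valid for $\gamma \in (0,1)$ with the sharp constant $c = 2$ in the textbook version and $c = 8/3$ in the Bernstein-flavored version that accounts for the second-order correction. Substituting $\mu_{\min} = m/N_1$ and $L = r\mu_0/N_1$ collapses the exponent to $-\gamma^2 m/(c\, r\mu_0)$, which is exactly what forces the choice $\gamma = \sqrt{\frac{8 r\mu_0}{3m}\log(2r/\rho)}$ in the lemma statement once one sets the resulting failure probability $r\,e^{-3\gamma^2 m/(8 r\mu_0)}$ equal to $\rho$. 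Given the event $\lambda_{\min}(U_{\Omega_j^1}^\top U_{\Omega_j^1}) \geq (1-\gamma)m/N_1$, the conclusion $\|(U_{\Omega_j^1}^\top U_{\Omega_j^1})^{-1}\|_{\mathrm{op}} \leq N_1/((1-\gamma)m)$ is immediate. The sample-size assumption $m = \tfrac{8}{3} r\mu_0\log(2r/\rho)$ is precisely what ensures $\gamma < 1$ so that the inverse exists with high probability.

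The main obstacle is matching the numerical constants: different variants of matrix Chernoff/Bernstein produce bounds with $2$, $8/3$, or $3$ in the denominator of the exponent, and the statement here aligns with the $8/3$ variant, so I would need to be careful to use the Bernstein-type form (or equivalently cite Lemma 3.4 of Tropp's ``User-Friendly Tail Bounds'') rather than the simpler Chernoff moment-generating bound. A secondary subtlety is whether the sampling is with or without replacement; with-replacement makes the $X_i$ i.i.d.\ and the application immediate, while the without-replacement case requires either a standard negative-association argument or appeal to the fact that sampling without replacement is no worse than sampling with replacement for this type of concentration inequality. Since the lemma is quoted from \cite{Recht2010}, I expect the cleanest exposition is to cite the with-replacement version and note that the without-replacement bound follows a fortiori.
\end{IEEEproof}
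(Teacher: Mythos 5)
The paper states this lemma without proof, simply citing \cite{Recht2010}, and your matrix Chernoff/Bernstein argument --- writing $U_{\Omega_j^1}^\top U_{\Omega_j^1}$ as a sum of i.i.d.\ rank-one PSD terms with mean $\tfrac{m}{N_1}I_r$, bounding each term by $r\mu_0/N_1$ via incoherence, and lower-bounding $\lambda_{\min}$ with the Bernstein-type ($8/3$-constant) tail --- is exactly the argument used in that reference, so the proposal is correct and takes the same standard route. Two small caveats worth noting: your argument (like the original) bounds $\|(U_{\Omega_j^1}^\top U_{\Omega_j^1})^{-1}\|_{op}$ to the first power, so the exponent $2$ in the paper's statement should be read as a typo; and with $m$ set \emph{equal} to $\tfrac{8}{3}r\mu_0\log(2r/\rho)$ the definition of $\gamma$ gives $\gamma=1$ exactly, so the hypothesis really needs $m$ strictly (or by a constant factor) larger than this threshold for the stated conclusion to be non-vacuous.
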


   \begin{lemma}\label{space_detection}
   Let $\mathcal{U} \in \mathbb{R}^{N_1 \times r \times N_3}$ be an $r$-dimensional tensor-column subspace and $\mathcal{T}(:,j,:) = \mathcal{X} + \mathcal{Z}$ where $\mathcal{X} = \mathcal{P}_{\mathcal{U}} (\mathcal{T}(:,j,:))$ and $\mathcal{Z} =  \mathcal{P}_{\mathcal{U}^{\bot}} (\mathcal{T}(:,j,:))$. 
   Then with probability at least $1-4 \rho$:
   \begin{eqnarray}
   &&\frac{m(1-\alpha) - r\mu(\mathcal{U}) \frac{\beta}{1 - \gamma}}{N_1} ||\mathcal{Z}||_F^2 \leq ||\mathcal{T}(\Omega_j^1,j,:)\nonumber\\&& - \mathcal{P}_{U_{\Omega_j^1}} (\mathcal{T}(\Omega_j^1,j,:)) ||_F^2 \leq (1+\alpha) \frac{m}{N_1} ||\mathcal{Z}||_F^2,
   \end{eqnarray}
   where $m = |\Omega_j^1|$, $\alpha = \sqrt{2\frac{\mu(\mathcal{Z})}{m}\log(1/\rho)} + \frac{2\mu(\mathcal{Z})}{3m}\log(1/\rho)$, $\beta = (1 + 2 \sqrt{\log(1/\rho)})^2$, $\gamma = \sqrt{\frac{8r\mu(\mathcal{U})}{3m}\log(2r/\rho)}$, $\mu(\mathcal{Z}) = N_1||\mathcal{Z}||_{F\infty}^2 / ||\mathcal{Z}||_F^2$, and $ ||\mathcal{Z}||_{F\infty}$ denotes the tube with maximum $\ell_2$-norm.
   \end{lemma}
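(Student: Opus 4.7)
The plan is to follow the same decomposition strategy used for the analogous matrix statement in \cite{Recht2010}, reducing the residual error to an expression involving only the component $\mathcal{Z}$ orthogonal to $\mathcal{U}$, and then invoking Lemmas \ref{lemma_p1}, \ref{lemma_p2}, and \ref{lemma:inverse} directly.

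First I would decompose $\mathcal{T}(\Omega_j^1,j,:) = \mathcal{X}(\Omega_j^1,:) + \mathcal{Z}(\Omega_j^1,:)$. The key observation is that since $\mathcal{X} = \mathcal{P}_{\mathcal{U}}(\mathcal{T}(:,j,:))$ lies in $\text{t-span}(\mathcal{U})$, we can write $\mathcal{X} = \mathcal{U}\ast\V{c}$ for some tube $\V{c}\in\mb{R}^{r\times 1\times N_3}$, and therefore $\mathcal{X}(\Omega_j^1,:) = \mathcal{U}_{\Omega_j^1}\ast\V{c}$ lies in the t-column space of $\mathcal{U}_{\Omega_j^1}$. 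Thus $\mathcal{P}_{U_{\Omega_j^1}}(\mathcal{X}(\Omega_j^1,:)) = \mathcal{X}(\Omega_j^1,:)$, and the residual satisfies
\begin{equation*}
\mathcal{T}(\Omega_j^1,j,:) - \mathcal{P}_{U_{\Omega_j^1}}(\mathcal{T}(\Omega_j^1,j,:)) = \mathcal{Z}(\Omega_j^1,:) - \mathcal{P}_{U_{\Omega_j^1}}(\mathcal{Z}(\Omega_j^1,:)).
\end{equation*}
By the Pythagorean identity for the orthogonal projection $\mathcal{P}_{U_{\Omega_j^1}}$,
\begin{equation*}
\|\mathcal{Z}(\Omega_j^1,:) - \mathcal{P}_{U_{\Omega_j^1}}\mathcal{Z}(\Omega_j^1,:)\|_F^2 = \|\mathcal{Z}(\Omega_j^1,:)\|_F^2 - \|\mathcal{P}_{U_{\Omega_j^1}}\mathcal{Z}(\Omega_j^1,:)\|_F^2.
\end{equation*}

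The upper bound is then immediate: dropping the nonnegative projection term and invoking Lemma \ref{lemma_p1} gives $\|\mathcal{Z}(\Omega_j^1,:) - \mathcal{P}_{U_{\Omega_j^1}}\mathcal{Z}(\Omega_j^1,:)\|_F^2 \leq \|\mathcal{Z}(\Omega_j^1,:)\|_F^2 \leq (1+\alpha)\frac{m}{N_1}\|\mathcal{Z}\|_F^2$. For the lower bound, I would expand $\mathcal{P}_{U_{\Omega_j^1}}\mathcal{Z}(\Omega_j^1,:) = \mathcal{U}_{\Omega_j^1}\ast(\mathcal{U}_{\Omega_j^1}^T\ast\mathcal{U}_{\Omega_j^1})^{-1}\ast\mathcal{U}_{\Omega_j^1}^T\ast\mathcal{Z}(\Omega_j^1,:)$ and, via the inner product identity $\|\mathcal{P}_{U_{\Omega_j^1}}\mathcal{Z}(\Omega_j^1,:)\|_F^2 = \langle \mathcal{U}_{\Omega_j^1}^T\ast\mathcal{Z}(\Omega_j^1,:),(\mathcal{U}_{\Omega_j^1}^T\ast\mathcal{U}_{\Omega_j^1})^{-1}\ast\mathcal{U}_{\Omega_j^1}^T\ast\mathcal{Z}(\Omega_j^1,:)\rangle$, obtain
\begin{equation*}
\|\mathcal{P}_{U_{\Omega_j^1}}\mathcal{Z}(\Omega_j^1,:)\|_F^2 \leq \|(\mathcal{U}_{\Omega_j^1}^T\ast\mathcal{U}_{\Omega_j^1})^{-1}\|_{op}\cdot\|\mathcal{U}_{\Omega_j^1}^T\ast\mathcal{Z}(\Omega_j^1,:)\|_F^2.
\end{equation*}
Applying Lemma \ref{lemma:inverse} to the first factor and Lemma \ref{lemma_p2} to the second yields the bound $\|\mathcal{P}_{U_{\Omega_j^1}}\mathcal{Z}(\Omega_j^1,:)\|_F^2 \leq \frac{r\mu(\mathcal{U})\beta}{(1-\gamma)N_1}\|\mathcal{Z}\|_F^2$, and combining with the lower tail of Lemma \ref{lemma_p1} produces the claimed lower bound. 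A union bound over the three events (failure probabilities $2\rho$, $\rho$, and $\rho$ respectively) gives overall success probability at least $1-4\rho$.

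The routine part is the algebra; the main obstacle is the bilinear identity and operator-norm bound in the t-product setting, since the usual matrix-case argument of \cite{Recht2010} tacitly uses that the projector $\mathcal{P}_{U_{\Omega_j^1}}$ is self-adjoint and idempotent with respect to the appropriate inner product. To make this rigorous I would appeal to Remark \ref{computation}, passing to the Fourier domain where $\mathcal{P}_{U_{\Omega_j^1}}$ block-diagonalizes into $N_3$ ordinary matrix projectors on the frontal slices of $\breve{\mathcal{U}}_{\Omega_j^1}$; Parseval's theorem (up to the $1/N_3$ factor) transfers Frobenius norms and inner products between the two domains, and the operator norm of a 3-D tensor under $\ast$ can be identified with the largest block-wise operator norm of $\text{blkdiag}(\breve{\cdot})$. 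Once these identifications are in place, the three component lemmas plug in mechanically to deliver the claimed two-sided bound.
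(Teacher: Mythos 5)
Your proposal is correct and follows essentially the same route as the paper's proof: reduce the residual to the $\mathcal{Z}$ component, apply the Pythagorean identity, bound the projection term by $\|(\mathcal{U}_{\Omega_j^1}^T \ast \mathcal{U}_{\Omega_j^1})^{-1}\|_{op}\,\|\mathcal{U}_{\Omega_j^1}^T \ast \mathcal{Z}(\Omega_j^1,:)\|_F^2$, and invoke Lemmas \ref{lemma_p1}, \ref{lemma_p2}, and \ref{lemma:inverse} with a union bound; the paper merely performs these steps after first passing to the Fourier domain and working slice-by-slice, which is exactly the justification you give at the end.
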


   \begin{IEEEproof}
   We begin by considering the evaluation of $\|\mathcal{T}(\Omega_j^1,j,:) - \mathcal{P}_{U_{\Omega_j^1}} (\mathcal{T}(\Omega_j^1,j,:)) \|_F^2$ in the Fourier domain. Specifically note that for any tensor $\mc{T}$, $\| \mathcal{T}\|_{F}^{2} = \frac{1}{N_3} \| \breve{\mathcal{T}} \|_{F}^{2} = \frac{1}{N_3} \sum_{k =1}^{N_3} \| \breve{\mathcal{T}}^{(k)} \|_{F}^{2} $ where $\breve{\mathcal{T}} = {\tt fft}(\mc{T},[\,],3)$ denoted the (un-normalized) Fourier transform of $\mc{T}$ along the third dimension and $\breve{\mc{T}}^{(k)}$ denotes the $k$-th frontal face of $\breve{\mc{T}}$.\\
Now note that from Remark \ref{computation}, it follows that, $\|\mathcal{T}(\Omega_j^1,j,:) - \mathcal{P}_{U_{\Omega_j^1}} (\mathcal{T}(\Omega_j^1,j,:)) \|_F^2 = \frac{1}{N_3} \sum_{k=1}^{N_3} \|\breve{\mc{T}}(\Omega_j^1,j,k) - \mathcal{P}_{\breve{\mc{U}}_{\Omega_j^1}^{(k)}} (\breve{\mc{T}}(\Omega_j^1,j,k)) \|_{F}^{2}$. Now each of the terms,
$\|\breve{\mc{T}}(\Omega_j^1,j,k) - \mathcal{P}_{\breve{\mc{U}}_{\Omega_j^1}^{(k)}} (\breve{\mc{T}}(\Omega_j^1,j,k)) \|_{F}^{2} = \|\breve{\mc{Z}}_{\Omega_j^1}^{(k)}\|_{F}^{2} - \|\breve{\mc{Z}}_{\Omega_j^1}^{(k)T} \breve{\mc{U}}_{\Omega_j^1}^{(k)}(\breve{\mc{U}}_{\Omega_j^1}^{(k)T} \breve{\mc{U}}_{\Omega_j^1}^{(k)})^{-1} \breve{\mc{U}}_{\Omega_j^1}^T \mathcal{Z}_{\Omega_j^1}^{(k)}\|_{F}\,\,,
$ where $\breve{\mc{Z}}_{\Omega_j^1}^{(k)} = \breve{\mc{Z}}(\Omega_j^1,1,k)$ is a column vector of size $|\Omega_j^1|$. Now note that,
$\|\breve{\mc{Z}}_{\Omega_j^1}^{(k)T} \breve{\mc{U}}_{\Omega_j^1}^{(k)}(\breve{\mc{U}}_{\Omega_j^1}^{(k)T} \breve{\mc{U}}_{\Omega_j^1}^{(k)})^{-1} \breve{\mc{U}}_{\Omega_j^1}^T \mathcal{Z}_{\Omega_j^1}^{(k)}\|_{F} \leq \|(\breve{\mc{U}}_{\Omega_j^1}^{(k)T} \breve{\mc{U}}_{\Omega_j^1}^{(k)})^{-1}\|_{op}^{2} \| \breve{\mc{U}}_{\Omega_j^1}^{(k)T} \mathcal{Z}_{\Omega_j^1}^{(k)} \|_{F}^{2}.
$ The proof then follows from the results of Lemma \ref{lemma_p1}, Lemma \ref{lemma_p2} and Lemma \ref{lemma:inverse}.

   \end{IEEEproof}

\subsection{Proof of Lemma \ref{lemma:sampling_budget}}\label{proof:sampling_budget}

   \begin{IEEEproof}
   The key is to apply Lemma \ref{space_detection} by taking a union bound across all rounds and all lateral slices. We first need to bound the incoherences, i.e., $\mu(\mathcal{Z})$, and $\mu(\hat{\mathcal{U}})$ in each round. As a direct consequence of Lemma 14 in \cite{Singh2013NIPS}, Lemma 15 in \cite{Singh2013NIPS} and Corollary 1 in \cite{Singh2015}, we know that with probability $1- \rho$, each lateral slice $\mathcal{T}(:,j,:)$ has incoherence $O(r\mu_0\log(1/\rho))$, and all the estimated tensor columns subspaces $\hat{\mathcal{U}}$ in the $2$nd-pass sampling have incoherence at most $O(\frac{r\mu_0}{Ls}\log(1/\rho))$.

   Take a union bound cross all lateral slices (columns of $\mathcal{G}$), while each $\rho$ term in Lemma \ref{space_detection} is replaced with $\rho/N_2$. Denote by $\hat{\mathcal{U}}$ the tensor column subspace projected onto during the $l$-th round of the $2$nd-pass sampling. Uniformly sample $m < N_1$ reference points in each column, the condition that $m \geq 8/3 dim(\hat{\mathcal{U}})\log(\frac{2rN_2}{\rho})$ ($dim(\cdot)$ means the $2$nd-dimension of the tensor-column subspace of $\hat{\mathcal{U}}$) is clearly satisfied, since $dim(\hat{\mathcal{U}}) \leq Ls = \frac{5L^2 N_1r}{2\rho \epsilon}$ and $\mu(\hat{\mathcal{U}}) \leq c \mu(\mathcal{U})\log(N_2/\rho)$.

   Next, we turn to analyze $\alpha, \beta, \gamma$. More specifically, we want $\alpha = O(1), \gamma = O(1)$ and $\frac{r\mu(\mathcal{U})}{m}\beta = O(1)$.

   For $\alpha$, $\alpha = O(1)$ implies that $m \geq C\mu(\mathcal{Z})\log(N_2/\rho) \geq Cr\mu_0\log^2(N_2/\rho)$. Therefore, by carefully choosing constant $C$ we can make $\alpha \leq 1/4$. For $\gamma$, $\gamma = O(1)$ implies that $m \geq C\mu(\mathcal{U})\log(rN_2/\rho) \geq Cr\mu_0\log(N_2/\rho)\log(rN_2/\rho)$. Therefore, by carefully choosing constant $C$ we can make $\gamma \leq 1/5$.
   For $\beta$, $\frac{k\mu(\mathcal{U})}{m}\beta = O(1)$ implies that $m \geq C r\mu(\mathcal{U}) \beta \geq C \frac{r^2\mu_0}{Ls}\log(N_2/\rho)(1 + 2 \sqrt{\log(N_2/\rho)})^2 = C\frac{r\mu_0 \rho}{L^2 N_2 \epsilon}\log^2(N_2/\rho)$.

   Therefore, $m \geq Cr\mu_0\log(N_2/\rho)\log(rN_2/\rho)$. Combining bounds on $\alpha, \beta, \gamma$, we get the probability estimation bound
   $\frac{2}{5} \cdot \frac{||\mathcal{E}(:,j,:)||_F^2}{||\mathcal{E}||_F^2} \leq \hat{p}_j \leq
   \frac{5}{2} \cdot \frac{||\mathcal{E}(:,j,:)||_F^2}{||\mathcal{E}||_F^2}.
   $

   The total number of reference points sampled in the $1$st-pass sampling is:
   \begin{equation}
   \delta M = N_2 m \geq Cr\mu_0N_2\log(N_2/\rho)\log(rN_2/\rho).
   \end{equation}
   We sample $s$ columns in $L$ rounds, the total number of reference points sampled in the $2$nd-pass sampling is
   \begin{equation}
   (1-\delta) M = Ls = CrN_2 (\lceil \log(N_1 N_2N_3)\rceil)^2 /(\rho \epsilon).
   \end{equation}
   Thus, we have $M = \delta M + (1-\delta) M \geq CrN_2 (\mu_0\log(N_2/\rho)\log(rN_2/\rho) + (\lceil \log(N_1 N_2N_3)\rceil)^2 /(\rho \epsilon))$.
   \end{IEEEproof}

\subsection{Proof of Lemma \ref{lemma:each_slice}}

   \begin{IEEEproof}
   Let $\mathcal{T}(:,j,:) = \mathcal{X} + \mathcal{Z}$ where $\mathcal{X} \in \hat{\mathcal{U}}$ and $\mathcal{Z} \in \hat{\mathcal{U}}^{\bot}$. So $\mathcal{X} = \hat{\mathcal{U}} \ast (\hat{\mathcal{U}}_{\Omega_j}^T \ast \hat{\mathcal{U}}_{\Omega_j})^{-1} \ast \hat{\mathcal{U}}_{\Omega_j}^T \ast \mathcal{X}(\Omega_j,j,:)$, and we are left with
   %\begin{equation}
   $|| \mathcal{Z} -   \hat{\mathcal{U}} \ast (\hat{\mathcal{U}}_{\Omega_j}^T \ast \hat{\mathcal{U}}_{\Omega_j})^{-1} \ast \hat{\mathcal{U}}_{\Omega_j}^T \ast \mathcal{Z} ||_F^2 = ||\mathcal{Z}||_F^2 + ||\hat{\mathcal{U}} \ast (\hat{\mathcal{U}}_{\Omega_j}^T \ast \hat{\mathcal{U}}_{\Omega_j})^{-1} \ast \hat{\mathcal{U}}_{\Omega_j}^T \ast \mathcal{Z}||_F^2$,
   %\end{equation}
   where the cross term is zero because $\mathcal{Z} \in \hat{\mathcal{U}}^{\bot}$.

   Using Lemma \ref{lemma:inverse} (by again going into the Fourier domain) we can upper bound the second term by $ \frac{N_1^2}{(1- \gamma)^2m^2}$, while using Lemma \ref{lemma_p2} we can upper bound the first term by $\beta \frac{m}{N_1} \cdot \frac{r \mu(\mathcal{U})}{N_1} ||\mathcal{Z}||_F^2$. Combining these two yields the result.

   \end{IEEEproof}

\subsection{Proof of Theorem \ref{recovery_error}}

   \begin{IEEEproof}
   Note that $r \ll \min(N_1,N_2)$, and we set $\frac{(1-\delta)M}{N_2L} = \frac{5Lr}{2\rho \epsilon}$, Lemma \ref{lemma_ap} holds.
    %for $s=\frac{(1-\delta)M}{N_2L}$.
    We additionally assume that $\mathcal{T}'$ equals $\mathcal{T}$ on all of the unobserved entries as measurement noise on those entries do not effect our algorithm. That is to say, if $\Omega$ denotes the set of all sampled reference points over the course of the algorithm, tensor $\mathcal{N}$ has zero tubes on $\Omega^c$. We denote $\mathcal{T}' = \mathcal{T} + \mathcal{N}_{\Omega}$ where $\mathcal{R}(i,j,:)$ is a zero tube for $(i,j) \in \Omega$. We expand the Frobinus norm, $|| \hat{\mathcal{T}} - \mathcal{T}||_F \leq  || \hat{\mathcal{T}}- \mathcal{T}'||_F + ||\mathcal{T}' - \mathcal{T}||_F$,  and apply Lemma \ref{lemma_ap}, then we get:
   \begin{eqnarray}
   && || \hat{\mathcal{T}} - \mathcal{T}||_F^2 \leq 2 || \hat{\mathcal{T}}- \mathcal{T}'||_F^2 + 2||\mathcal{N}_{\Omega}||_F^2 \nonumber\\ && \leq \frac{5}{2} \left( \frac{1}{1-\epsilon} ||\mathcal{T}' - \mathcal{T}_r'||_F^2 + \epsilon^L ||\mathcal{T}'||_F^2 \right) + 2||\mathcal{N}_{\Omega}||_F^2.
   \end{eqnarray}
   According to Lemma \ref{best_r_rank}, $\mathcal{T}_r'$ is the best rank-$r$ approximation to $\mathcal{T}'$ and since $\mathcal{T}$ is known to have tubal-rank $r$, we have that $||\mathcal{T}' - \mathcal{T}_r'||_F \leq ||\mathcal{T}' - \mathcal{T}||_F$.
   Setting $L= \lceil \log_2(N_1N_2N_3) \rceil$, we get our final result:
   \begin{eqnarray}
   &&|| \hat{\mathcal{T}} - \mathcal{T} ||_F^2 \leq c_1 || \mathcal{T}' - \mathcal{T}||_F^2 + \frac{c_2}{N_1N_2N_3} ||\mathcal{T}'||_F^2 + c_3 ||\mathcal{N}_{\Omega}||_F^2 \nonumber\\&&\leq \frac{c'_2}{N_1N_2N_3} ||\mathcal{T}||_F^2 + (c_1 + c_3) ||\mathcal{N}_{\Omega}||_F^2.
   \end{eqnarray}
   Note that there exists a constant $c$ such that $||\mathcal{T}'||_F^2 \leq c ||\mathcal{T}||_F^2$.

   The inequality $||\mathcal{N}_{\Omega}||_F \leq \delta $ holds with high probability, for some $\delta >0$ and $\delta^2 \leq (MN_3 + \sqrt{8MN_3}) \sigma^2$ \cite{Candes2010}. Therefore, we have $||\mathcal{N}_{\Omega}||_F^2 \leq (MN_3 + \sqrt{8MN_3}) \sigma^2$, then the proof is completed.

   \end{IEEEproof}

\begin{IEEEbiography}[{\includegraphics[width=1in,height=1.25in,clip,keepaspectratio]{./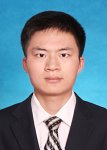}}]{Xiao-Yang Liu} received his B.Eng. degree in
computer science from Huazhong University
of Science and Technology, China, in 2010.
He is currently working toward the PhD degree
in the Department of Computer Science
and Engineer in Shanghai Jiao Tong University.
His research interests include wireless
communication, distributed
systems, big data analysis, cyber security, and sparse optimization.
\end{IEEEbiography}

\begin{IEEEbiography}[{\includegraphics[width=1in,height=1.25in,clip,keepaspectratio]{./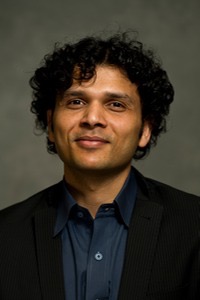}}]{Shuchin Aeron}(M'08 -  SM'15) is currently an assistant professor
in the department of ECE at Tufts University.
He received his Ph.D. in ECE from Boston University
in 2009. From 2009 to 2011 he was a post-doctoral
research fellow at Schlumberger-Doll Research
(SDR), Cambridge, MA where he worked on signal
processing answer products for borehole acoustics.
He has several patents and the proposed workflows
are currently implemented in logging while drilling
tools. He is the recipient of the best thesis award from
both the college of engineering and the department of
ECE in 2009. He received the Center of Information and Systems Engineering
(CISE) award from Boston University in 2006 and a Schlumberger-Doll Research
grant in 2007. His main research interest lie in statistical signal processing,
information theory, and optimal sampling and recovery.
\end{IEEEbiography}

\begin{IEEEbiography}[{\includegraphics[width=1in,height=1.25in,clip,keepaspectratio]{./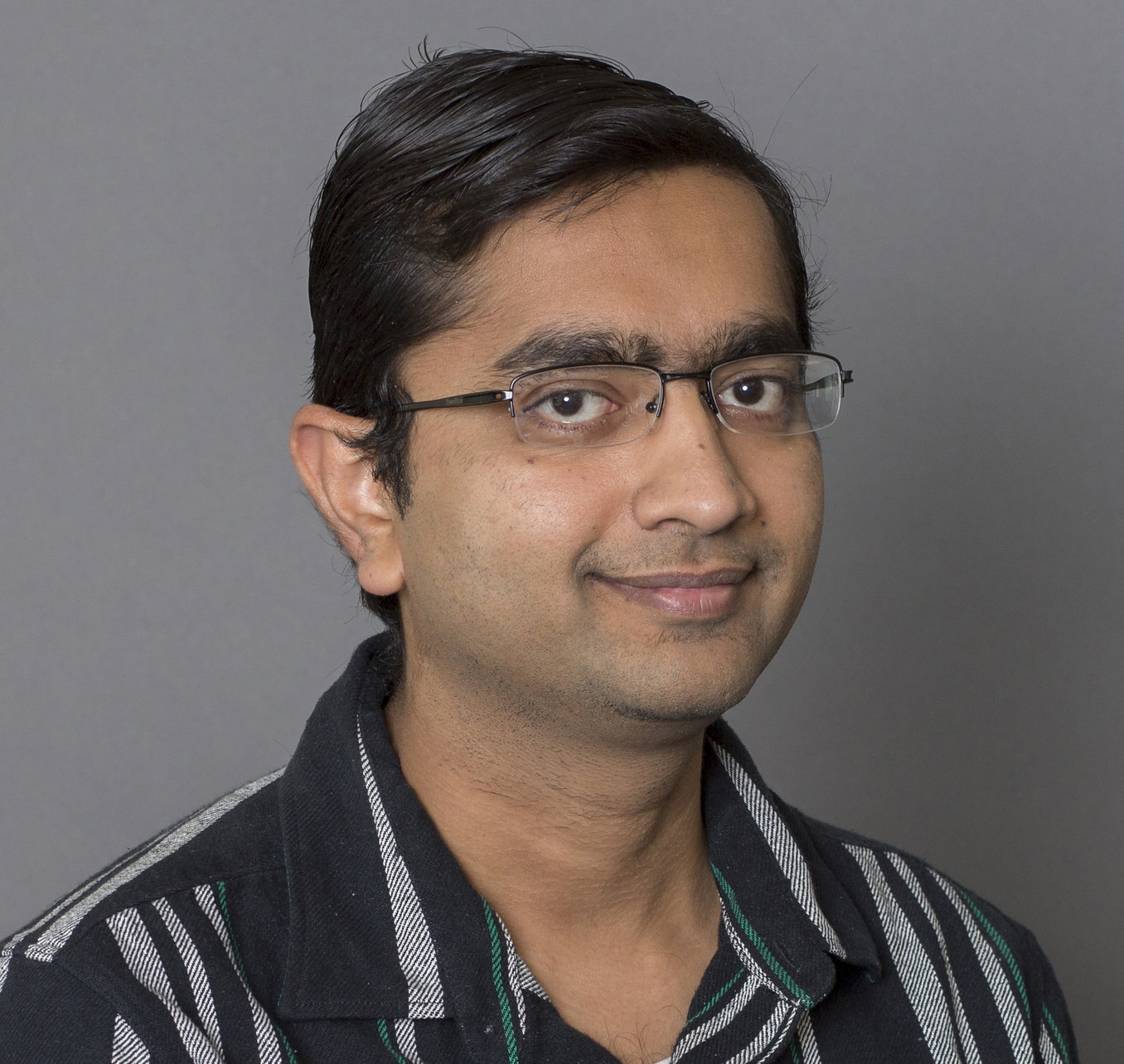}}]{Vaneet Aggarwal}(S'08 - M'11 - SM'15) received the B.Tech. degree in 2005 from the Indian Institute of Technology, Kanpur, India, and the M.A. and Ph.D. degrees in 2007 and 2010, respectively from Princeton University, Princeton, NJ, USA, all in Electrical Engineering.

He is currently an Assistant Professor at Purdue University, West Lafayette, IN. Prior to this, he was a Senior Member of Technical Staff Research at AT\&T Labs-Research, NJ, and an Adjunct Assistant Professor at Columbia University, NY. His research interests are in applications of information and coding theory to wireless systems, machine learning, and distributed storage systems. Dr. Aggarwal was the recipient of Princeton University's Porter Ogden Jacobus Honorific Fellowship in 2009.
\end{IEEEbiography}

\begin{IEEEbiography}[{\includegraphics[width=1in,height=1.25in,clip,keepaspectratio]{./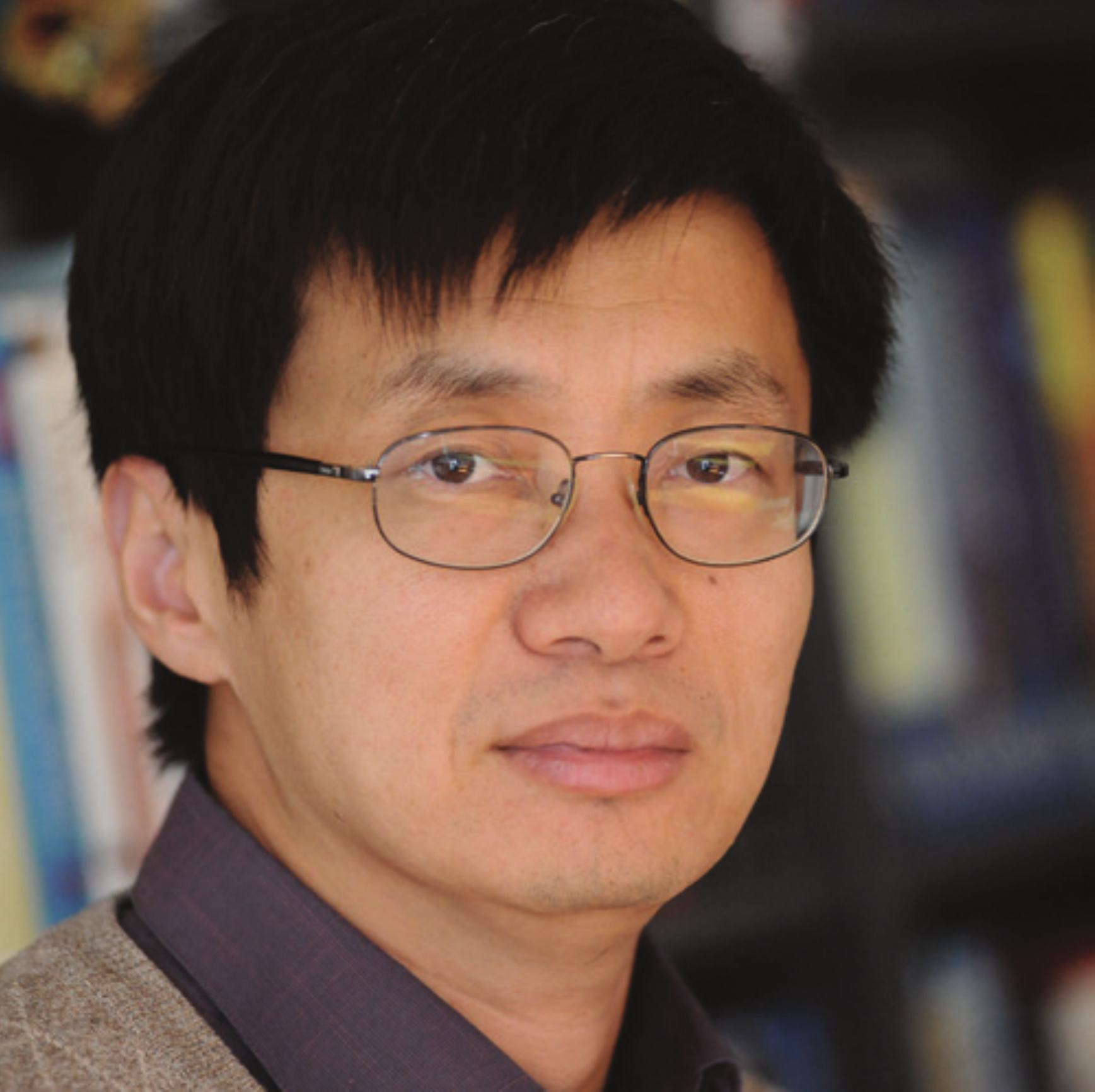}}]{Xiaodong Wang}(S'98-M'98-SM'04-F'08) received the Ph.D degree in Electrical Engineering from Princeton University. He is a Professor of  Electrical Engineering at Columbia University in New York. Dr. Wang's research interests fall in the general areas of computing, signal processing and communications, and has published extensively in these areas. Among his publications is a book entitled ``Wireless Communication Systems: Advanced Techniques for Signal Reception'', published by Prentice Hall in 2003.  His current research interests include wireless communications,   statistical signal processing, and genomic signal processing. Dr. Wang received the 1999 NSF CAREER Award, the 2001 IEEE Communications Society and Information Theory Society Joint Paper Award, and the 2011 IEEE Communication Society Award for Outstanding Paper on New Communication Topics. He has served  as an Associate Editor for the {\em IEEE Transactions on Communications}, the {\em IEEE Transactions on Wireless Communications}, the {\em IEEE Transactions on Signal Processing}, and the {\em IEEE Transactions on Information Theory}. He is a Fellow of the IEEE and listed as an ISI Highly-cited Author.
\end{IEEEbiography}

\begin{IEEEbiography}[{\includegraphics[width=1in,height=1.25in,clip,keepaspectratio]{./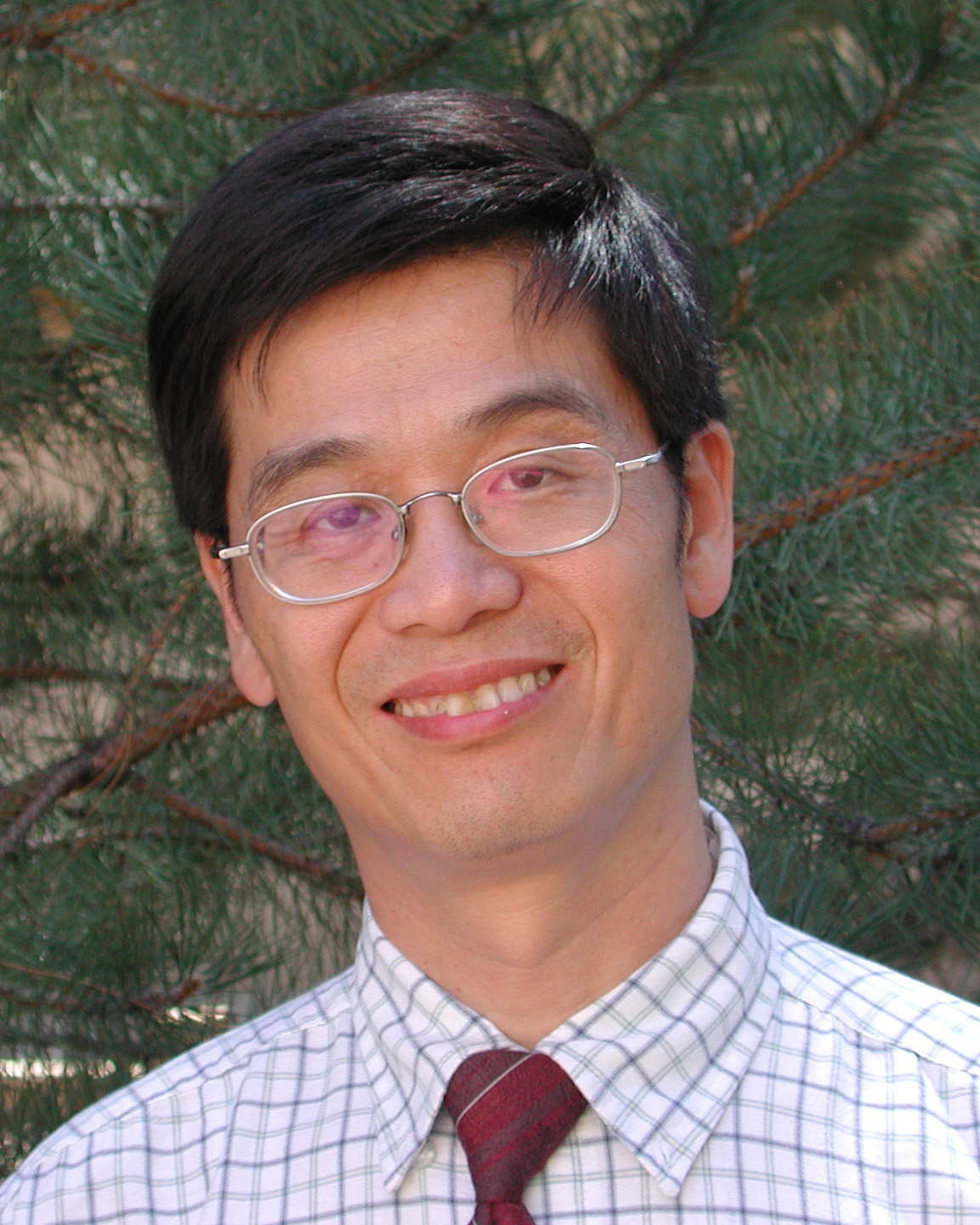}}]{Min-You Wu} (S'84-M'85-SM'96) received the M.S. degree from the Graduate School of Academia Sinica, Beijing, China, in 1981 and the Ph.D. degree from Santa Clara University, CA in 1984. He is a professor in the Department of Computer Science and Engineering at Shanghai Jiao Tong University and a research professor with the University of New Mexico. He serves as the Chief Scientist at Grid Center of Shanghai Jiao Tong University. His research interests include wireless networks, sensor networks, multimedia networking, parallel and distributed systems, and compilers.
\end{IEEEbiography}

\end{document}